\def\PrintMode{0}%normally PrintMode = 0. If PrintMode = 1, extreme methods will be used to reduce the number of pages. Can also set fontsize = 10pt to further shrink pages...
\newtheorem{theorem}{Theorem}[section]
\newtheorem{lemma}[theorem]{Lemma}
\newtheorem{proposition}[theorem]{Proposition}
\newtheorem{corollary}[theorem]{Corollary}
\newtheorem{claim}[theorem]{Claim}
\newtheorem{fact}[theorem]{Fact}
\newtheorem{question}[theorem]{Question}
\newtheorem{conjecture}[theorem]{Conjecture}
\theoremstyle{definition}
\newtheorem{definition}[theorem]{Definition}
\newtheorem{remark}[theorem]{Remark}
\newtheorem*{remark*}{Remark}
\renewcommand*\backref[1]{\ifx#1\relax \else (cit.~on p.~#1) \fi} %http://latex.org/forum/viewtopic.php?t=3670
\def\moverlay{\mathpalette\mov@rlay}
\def\mov@rlay#1#2{\leavevmode\vtop{%
		\baselineskip\z@skip \lineskiplimit-\maxdimen
		\ialign{\hfil$\m@th#1##$\hfil\cr#2\crcr}}}
\newcommand{\charfusion}[3][\mathord]{
	#1{\ifx#1\mathop\vphantom{#2}\fi
		\mathpalette\mov@rlay{#2\cr#3}
	}
	\ifx#1\mathop\expandafter\displaylimits\fi}
\renewcommand{\poly}{\mathrm{poly}}
\renewcommand{\emptyset}{\varnothing}
\newcommand{\cM}{\mathcal{M}}
\newcommand{\rank}{\ensuremath{\operatorname{rank}}}
\newcommand{\cV}{\mathcal{V}}
\newcommand{\cI}{\mathcal{I}}
\newcommand{\MR}{\mathsf{MR}}
\newcommand{\BR}{\mathsf{BR}}
\newlang{\MCSP}{MCSP}
\newlang{\MFSP}{MFSP}
\newlang{\MKtP}{MKtP}
\newlang{\MKTP}{MKTP}
\newlang{\itrMCSP}{itrMCSP}
\newlang{\itrMKTP}{itrMKTP}
\newlang{\itrMINKT}{itrMINKT}
\newlang{\MINKT}{MINKT}
\newlang{\MINK}{MINK}
\newlang{\MINcKT}{MINcKT}
\newlang{\CMD}{CMD}
\newlang{\DCMD}{DCMD}
\newlang{\CGL}{CGL}
\newlang{\PARITY}{PARITY}
\newlang{\Empty}{\textsc{Empty}}
\newlang{\Avoid}{\textsc{Avoid}}
\newlang{\Sparsification}{\textsc{Sparsification}}
\newlang{\HamEst}{\mathsf{HammingEst}}
\newlang{\HamHit}{\mathsf{HammingHit}}
\newlang{\CktEval}{\textsc{Circuit-Eval}}
\newlang{\Hard}{\textsc{Hard}}
\newlang{\cHard}{\textsc{cHard}}
\newlang{\CAPP}{CAPP}
\newlang{\GapUNSAT}{GapUNSAT}
\newlang{\OV}{OV}
\renewlang{\PCP}{PCP}
\newlang{\PCPP}{PCPP}
\newclass{\Avg}{Avg}
\newclass{\ZPEXP}{ZPEXP}
\newclass{\DLOGTIME}{DLOGTIME}
\newclass{\ALOGTIME}{ALOGTIME}
\newclass{\ATIME}{ATIME}%alternating time
\newclass{\SZKA}{SZKA}
\newclass{\Laconic}{Laconic\text{-}}
\newclass{\APEPP}{APEPP}
\newclass{\SAPEPP}{SAPEPP}
\newclass{\TFSigma}{TF\Sigma}
\newclass{\NTIMEGUESS}{NTIMEGUESS}
\newlang{\Formula}{Formula}
\newlang{\THR}{THR}
\newlang{\EMAJ}{EMAJ}
\newlang{\MAJ}{MAJ}
\newlang{\SYM}{SYM}
\newlang{\DOR}{DOR}
\newlang{\ETHR}{ETHR}
\newlang{\Midbit}{Midbit}
\newlang{\LCS}{LCS}
\newlang{\TAUT}{TAUT}
\newlang{\Poly}{\text{-}Poly}
\newcommand{\one}{\textbf{1}}
\newcommand{\Gr}{\operatorname{Gr}}
\newcommand{\Sch}{\operatorname{Sch}}
\newcommand{\N}{\mathbb{N}}
\newcommand{\F}{\mathbb{F}}
\renewcommand{\R}{\mathbb{R}}
\newcommand{\row}{\mathsf{row}}
\newcommand{\col}{\mathsf{col}}
\DeclareMathOperator*{\Span}{\mathrm{Span}}
\newcommand{\Ldist}{\mathcal{L}_{\mathsf{Dist}}}
\newcommand{\cN}{\mathcal{N}}
\newcommand{\RS}{\mathsf{RS}}
\renewcommand{\epsilon}{\varepsilon}
\newcommand{\eps}{\epsilon}
\newcommand{\cC}{\mathcal C}
\newcommand{\Q}{\mathbb Q}
\renewcommand{\R}{\mathbb R}
\renewcommand{\C}{\mathbb C}
\definecolor{color1}{RGB}{46,134,193}
\definecolor{color7}{RGB}{128,0,128}
\definecolor{color3}{RGB}{255,128,0}
\definecolor{color5}{RGB}{128,128,128}
\newif\ifmynotes
\title{From Random to Explicit via Subspace Designs With Applications to Local Properties and Matroids}
\date{}
\author{Joshua Brakensiek\thanks{University of California, Berkeley. \href{mailto:josh.brakensiek@berkeley.edu}{\texttt{josh.brakensiek@berkeley.edu}}} 
\and
Yeyuan Chen\thanks{Department of EECS, University of Michigan, Ann Arbor. \href{mailto:yeyuanch@umich.edu}{\texttt{yeyuanch@umich.edu}}}  
\and
Manik Dhar\thanks{Department of Mathematics, Massachusetts Institute of Technology. \href{mailto:dmanik@mit.edu}{\texttt{dmanik@mit.edu}}} 
\and
Zihan Zhang\thanks{Department of Computer Science and Engineering, The Ohio State University. \href{mailto:zhang.13691@buckeyemail.osu.edu}{\texttt{zhang.13691@osu.edu}}}  
}
\begin{document}
\maketitle

\begin{abstract}
In coding theory, a common question is to understand the threshold rates of various local properties of codes, such as their list decodability and list recoverability. A recent work Levi, Mosheiff, and Shagrithaya (FOCS 2025) gave a novel unified framework for calculating the threshold rates of local properties for random linear and random Reed--Solomon codes.

In this paper, we extend their framework to studying the local properties of subspace designable codes, including explicit folded Reed-Solomon and univariate multiplicity codes. Our first main result is a local equivalence between random linear codes and (nearly) optimal subspace design codes up to an arbitrarily small rate decrease. We show any local property of random linear codes applies to all subspace design codes. As such, we give the first explicit construction of folded linear codes that simultaneously attain all local properties of random linear codes. Conversely, we show that any local property which applies to all subspace design codes also applies to random linear codes.

Our second main result is an application to matroid theory. We show that the correctable erasure patterns in a maximally recoverable tensor code can be identified in deterministic polynomial time, assuming a positive answer to a matroid-theoretic question due to Mason (1981). This improves on a result of Jackson and Tanigawa (JCTB 2024) who gave a complexity characterization of $\mathsf{RP} \cap \mathsf{coNP}$ assuming a stronger conjecture. Our result also applies to the generic bipartite rigidity and matrix completion matroids.

As a result of additional interest, we study the existence and limitations of subspace designs. In particular, we tighten the analysis of family of subspace designs constructioned by Guruswami and Kopparty (Combinatorica 2016) and show that better subspace designs do not exist over algebraically closed fields.
\end{abstract}

\pagebreak

\section{Introduction}
One of the biggest success in modern mathematics and theoretical computer science is use randomness to understand discrete and continuous structures. In combinatorics, the probabilistic method~\cite{prob}, pioneered by Erd\H{o}s, has been used to prove existence of various mathematical objects, such as Ramsey graphs, lossless expanders, and many more. In algorithm design and computational complexity theory, the use of randomness has been the key to providing  efficient solutions to many computational problems, such as polynomial identity testing (PIT).

However, for both theoretical and practical considerations, it is often desirable to remove the use of randomness through \emph{derandomization}. From this perspective, randomization can be viewed as the first step toward the resolution of a mathematical problem. For example, in combinatorics, we often seek \emph{explicit constructions} of mathematical objects we only know to exist via the probablistic method. Likewise, in algorithm design, often one seeks a \emph{deterministic} algorithm after a randomized algorithm has been identified.

However, in many cases, derandomization proves to be quite formidable, and the existence of a derandomization is often still remain open. To name two examples from combinatorics and algorithm design, respectively, first it is an easy exercise to check that random graphs are lossless two-sided vertex expanders, but explicit constructions were long-standing open problems until two breakthroughts \cite{zigzag} (one-sided expansion) and \cite{two-sided} (two-sided expansion). Second, randomized PIT algorithm easily follows from Schwarz--Zippel lemma, but a deterministic PIT algorithm remains one of the most significant open problems in theoretical computer science, and perhaps the most key stepping stone toward proving that $\mathsf{P} = \mathsf{BPP}$.

In this paper, we use \emph{subspace designs} introduced by Guruswami and Xing \cite{guruswami2013list} in novel ways to achieve derandomizations in two different problems from combinatorial constructions and algorithm design respectively: \emph{Local properties of linear codes} and \emph{Independence testing of birigidity and maximally recoverability matroids}. We now give a brief overview of each of the subfields of mathematics and computer science our work touches upon. 

\paragraph{Subspace Designs.} The concept of \emph{subspace designs} was first introduced by Guruswami and Xing~\cite{guruswami2013list} to (randomly) construct the first family of positive-rate rank-metric codes that are list-decodable beyond half their minimum distance, as well as efficiently list-decodable Hamming-metric codes of optimal rate over constant-size alphabets with nearly constant list size. Informally, a subspace design is a family of linear subspaces with the property that any subspace of fixed dimension has only small total overlap with the family (see \cref{def:subspace-design} for more details).

While the probabilistic method establishes the existence of subspace designs with both large size and dimension~\cite{guruswami2013list,guruswami2016explicit}, Guruswami and Kopparty~\cite{guruswami2016explicit} subsequently provided the first explicit construction attaining the similar parameters over large fields via the idea of folded Reed--Solomon codes and univariate multiplicity codes. As an application, these explicit subspace designs yield a full derandomization of the code constructions from~\cite{guruswami2013list}. Later, by employing techniques from algebraic function fields, Guruswami, Xing, and Yuan \cite{guruswami2018subspace} constructed explicit subspace designs with comparable parameters over significantly smaller fields.
Besides their use in list-decodable code constructions, subspace designs have also been instrumental in constructing explicit lossless dimension expanders of constant degree~\cite{guruswami2021lossless}, which can be viewed as the linear-algebraic analogue of classical expander graphs. For a more fine-grained historical overview, we refer the reader to the recent survey by Santonastaso and Zullo~\cite{santonastaso2023subspace}.

All the aforementioned list-decodable codes (e.g.,~\cite{guruswami2013list,guruswami2016explicit,guruswami2022optimal}) that make use of the notion of subspace designs are subcodes of classical algebraic codes such as Reed–Solomon, folded Reed–Solomon, algebraic–geometry, univariate multiplicity, and Gabidulin codes. However, a major recent turning point was achieved by the work of Chen and Zhang \cite{cz24}, who proved that explicit folded Reed–Solomon and univariate multiplicity codes themselves attain optimal list-decodability. In particular, they showed that any code admitting sufficiently good subspace designs enjoys near-optimal list-decodability (see \cite[Appendix~B]{cz24} for further details). These codes, known as \emph{subspace designable codes} in \cite{cz24}, will play a central role in our paper.

\paragraph{Local Properties of Linear Codes.}
List-decodability and list-recoverability are two properties people care the most in the theory of error-correcting codes. Given a code $\mathcal{C}\subseteq \Sigma^n$ over alphabet $\Sigma$, a radius $\rho$, an input list size $\ell$, and an output list size $L$, we say $\mathcal{C}$ is $(\rho,\ell,L)$ list-recoverable iff for any received table $S_1,\dots,S_n\in\binom{\Sigma}{\leq \ell}$, there are at most $L$ codewords $c\in\mathcal{C}$ such that at most $\rho$ fraction of coordinates $i\in[n]$ satisfy $c_i\notin S_i$. List-decoding is a special case of list-recovery when $\ell=1$. Motivated by understanding these two basic combinatorial properties of various families of error-correcting codes, Mosheiff, Resch, Ron-Zewi,
Silas, and Wootters \cite{mrr20} introduced a more generalized framework \emph{local property}. In this broader perspective, One can define the notion \emph{local profile} as the forbidden pattern of codes that can be certified by a small list of codewords. A \emph{local property} $\mathcal{P}$ is defined as a family of local profiles, and a code has the property $\mathcal{P}$ iff it does not contain any certificate for any local profile in $\mathcal{P}$. In this language, both list-decodability and list-recoverability are just special cases of local properties.

The study of local properties mostly starts from analyzing random linear codes, which can be seen the ``(nearly) optimal benchmark''\footnote{In some cases non-linear codes can provably outperform linear codes. However, linearity of codes is useful in many applications, and many explicit constructions of specific families of codes are linear. Another work \cite{rclcl} builds the framework of local properties for random (non-linear) codes.} in many scenarios. For any local property, people first characterize its \emph{rate threshold} below which random linear codes satisfy the local property with high probability. Then, people make use of this characterization to show that various specific families of linear codes has  (almost) the same rate threshold. In particular, \cite{mrr20} showed that random LDPC codes behaves the same as random linear codes. Guruswami, Li, Mosheiff, Resch, Silas, and Wootters \cite{guru21} use this framework to show interesting lower bounds for list-decoding and list-recovery. After that, Guruswami and Mosheiff \cite{gm22} proved randomly punctured low-biased codes locally simulate random linear codes. Most recently, a great work by Levi, Mosheiff and Shagrithaya \cite{lms25} generalized the previously mentioned characterization of rate thresholds to \emph{large alphabet}, and used this novel characterization to show random Reed--Solomon codes are local equivalence to random linear codes.  

However, all of the above results are about \emph{randomness-efficient} families of linear codes. Although compared to uniformly random linear codes, these certain families of linear codes requires fewer random bits to sample from, they are not \emph{explicit constructions}. For some special cases of local properties such as list-decoding, there are explicit constructions with the same quality as random linear codes \cite{ST23,bdg24b,cz24,jmst25}. However, explicit constructions that match random linear codes in terms of general local properties, or even the special case of list-recoverability, are unknown before. In this paper we provide the first explicit construction that simulate all local properties of random linear codes and
 address this open problem.

\paragraph{Maximally Recoverability and Rigidity Matroids.}

The quest for explicit constructions matching random ones has deep connections to \emph{matroid theory}. Informally, matroids are combinatorial objects which generalize the behavior of collections of vectors in a vector space (see \Cref{sec:matroid} for a formal definition). We now describe two families of matroids which, while seemingly unrelated, both demand an improved theory of derandomization: maximal recoverability and structural rigidity.

In the distributed storage community, much effort over the last couple of decades \cite{chen2007maximally,huang2013pyramid,BlaumPS13,sathiamoorthy2013xoring,gopalan2014explicit,papailiopoulos2014locally,hu2016new,Gopalan2016,kane2019independence,martinez2019universal,gopi2020maximally,holzbaur2021correctable,cai2021construction,kong2021new,gopi2022improved,bgm2021mds,dhar2023construction,brakensiek2025improved} has been put into studying data storage solutions which information-theoretically maximize their resilience from data loss. Error-correcting codes which satisfy such problems are described as \emph{maximum recoverable (MR)}. In a strong sense, the construction of MR codes is a derandomization problem, as a random code can correct any particular error with high probability, but guaranteeing tolerance to all possible errors simultaneously requires an explicit construction.

From a mathematical perspective, a rather fascinating family of MR codes is the MR \emph{tensor} codes introduced by Gopalan et al.~\cite{Gopalan2016}, where the error-correcting code is the tensor product of two linear codes. Unlike other architectures for which MR has been studied, we even lack a basic understanding of which failures (i.e., erasure errors) we should be able to recover from (e.g., \cite{holzbaur2021correctable}), much less explicit constructions of MR tensor codes~\cite{bdg24b}. Recently, the construction of maximally recoverable tensor codes has also been linked to optimal list-decodable codes~\cite{ST23,brakensiek2023generic,Alrabiah2025Random}. A comprehensive understanding of derandomizing the properties of random linear codes could be of much use in answering such questions.

In combinatorics, a long-standing area of study is \emph{structural rigidity theory}.\footnote{In this paper ``rigidity'' always refers to \emph{geometric} structural rigidity, there is no known link between these questions and the well-studied \emph{matrix rigidity} problems in TCS (e.g., \cite{ramya2020recent}).} A landmark question open question in rigidity theory is that of \emph{graph rigidity}. Consider an undirected graph $G = (V,E)$ and an ambient space $\R^d$, when does there an embedding\footnote{We assume that $\psi$ maps the vertices to ``general position,'' no three points on a line, etc.} $\psi : V \to \R^d$ such that the graph cannot be flexed (i.e., an infinitesimal isometry which is not a translation/rotation)? If such an embedding exists, a random embedding (e.g., random sampled from a multivariate normal distribution) will work probability 1. A simple description for the case $\R^2$ has been known for nearly 100 years~\cite{PollaczekGeiringer,Laman}. However, finding deterministic polynomial time description for rigidity in $\R^3$ has been open since the days of James Clerk Maxwell~\cite{Maxwell, graver1993combinatorial}!

Although MR and structural rigidity are seemingly unrelated, Brakensiek, Dhar, Gao, Gopi, and Larson~\cite{brakensiek2024Rigidity} proved a number of equivalence results between maximal recoverability and structural rigidity. For example, understanding the maximal recoverability of a linear code tensored with itself is (roughly speaking) to this classical graph rigidity question. For our purposes, we build on a different equivalence proven by \cite{brakensiek2024Rigidity} that the maximal recoverability of the tensor product of two different linear codes is equivalent to understanding the \emph{bipartite rigidity} (birigidity) of graphs, an adaptation of classical rigidity proposed by Kalai, Nevo, and Novik~\cite{kalai2015Bipartite}.

We seek to be the first work which takes advantage of this new-found equivalence by using our ``random-to-explicit'' perspective to make simultaneous progress in the theory of both maximal recoverability and rigidity. We point the interested reader to \Cref{subsec:matroid-results} and \Cref{sec:matroid} for a more details. We also recommend the recent survey by Cruickshank, Jackson, Jord{\'a}n, and Tanigawa for many rich connections between rigidity matroids and other problems~\cite{cruickshank2025rigidity}.

\subsection{Our Results}
In order to formulate our main results, we first introduce the notions of subspace designs \cite{guruswami2013list,guruswami2016explicit} and subspace designable codes \cite{guruswami2016explicit,cz24}.

\begin{definition}[{Subspace Design, \cite[Definition 3]{guruswami2016explicit}}]\label{def:subspace-design} A collection $\mathcal{H}$ of $\mathbb{F}$-linear subspaces $H_1,\dots,H_n\subseteq \mathbb{F}^k$ is called an $(\ell, A)$ subspace design over $\mathbb{F}$, if for every $\mathbb{F}$-linear space $W\subseteq \mathbb{F}^k$ of dimension $\ell$, we have
\[\sum_{i=1}^n\dim_{\mathbb{F}}\left(H_i\cap W\right)\leq A.\].
\end{definition}

\begin{definition}[\text{Subspace Designable Code, \cite[Definition B.2]{cz24}}]\label{def:designcodes} For any $s\ge 1$, given an $\mathbb{F}$-linear code $C\subseteq \left(\mathbb{F}^s\right)^n$ with message length $k$ and block length $n$, we use $\mathcal{C}\colon \mathbb{F}^k\to\left(\mathbb{F}^s\right)^n$ to denote the $\mathbb{F}$-linear encoder of $C$. For any $i\in[n]$, let $H_i\subseteq \mathbb{F}^k$ denote the $\mathbb{F}$-linear subspace such that for any message $f\in\mathbb{F}^k$, there is $\mathcal{C}(f)_i=0$ iff $f\in H_i$. We say $C$ is a $(\ell,A)$ subspace designable code if $\mathcal{H}:=\{H_1,\dots,H_n\}$ is an $(\ell, A)$ subspace design.
\end{definition}

We call an $s$-folded $\F$-linear code $\mathcal{C}\subseteq (\F^s_q)^n$ with rate $R$ \textbf{$d$-subspace designable} iff for all $1\leq d'\leq d$, $\mathcal{C}$ is a $(d',Rd'n+1)$ subspace designable code.

Since we will establish a negative bound \cref{thm:gk16-optimal-intro} later in this paper, and it indicates that for algebraically closed field, the quality of 
$d$-dimensional subspace design cannot be better than $\frac{d(k-d)}{s-d+1}\ge \frac{Rdsn}{s-d+1}-\frac{d^2}{s-d+1}=R(1+\frac{d-1}{s-d+1})dn-\frac{d^2}{s-d+1}$. For fixed $d\ge 1$ and large enough $n$, this lower bound is strictly larger than $Rdn$. Therefore, the $d$-subspace designable codes actually yield \textbf{optimal} subspace designs for subspaces with dimension at most $d$.

We also need a slacked version for ``nearly optimal'' subspace designable codes. Namely, for any real $\mu>0$ and integer $d\ge 1$, we say the code $\mathcal{C}\subseteq(\F^s)^n$ $\mu$-slacked $d$-subspace designable if for all $1\leq d'\leq d$, $\mathcal{C}$ is $(d',(R+\mu)d'n)$ subspace designable.
\subsubsection{Local Equivalence Between Random Linear Codes and Subspace Designs}

Our first main result is to build a local equivalence between random linear codes and nearly optimal subspace designable codes. Since explicit folded Reed--Solomon codes and univariate multiplicity codes yield nearly optimal subspace designs \cite{guruswami2016explicit}, our result gives the first explicit construction of folded linear codes that simultaneously preserves all local properties of random linear codes.

We will formally define the concept local profile later. For readers unfamiliar with this notion, a good intuition is to think a local profile as a certain list-decodability or list-recoverability of codes as special cases of our general reduction\footnote{Actually, list-decoding and list-recovery corresponds to a family of local profiles rather than a single one.}.

\begin{theorem}[Informal]\label{thm:main-equiv}
Fix $b,q\ge 1,\eps,R\in(0,1)$ and any 
$b$-local profile $\mathcal{V}=(V_1,\dots,V_n)\in\mathcal{L}(\F^b_q)^n$. The following holds.
\begin{compactitem}
\item \emph{(\cref{thm:FRS-threshold})}  If random $\F_q$-linear codes with rate $R$ does not contain $\mathcal{V}$ with high probability, then all $\F_q$-linear $\eps$-slacked $b$-subspace designable codes  with rate at most $R-\eps$ do not contain $\mathcal{V}$.
\item \emph{(\cref{thm:easy-direction})} For any $d\ge 1$, if for all large enough 
block length $n$, all $\F_q$-linear, $d$-subspace designable code with rate $R$ does not contain $\mathcal{V}$, then random $\F_q$-linear code with rate $R-o_n(1)-\eps$ does not contain $\mathcal{V}$ with probability at least $1-q^{-\eps n+b^2}$.
\end{compactitem}
\end{theorem}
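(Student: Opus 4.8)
The plan is to prove the two parts separately; since the second (\cref{thm:easy-direction}) is the more routine one, I would establish it first.

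\paragraph{The converse direction (\cref{thm:easy-direction}).} I would show that a uniformly random $\F_q$-linear code, folded with a parameter $s \ge d$ (so that being subspace designable is at all possible), of rate $R - o_n(1) - \eps$, is \emph{itself} a $d$-subspace designable code meeting the nominal bounds $\sum_i \dim(H_i \cap W) \le R d' n + 1$ for every $d' \le d$ and every $d'$-dimensional message subspace $W$, except with probability at most $q^{-\eps n + b^2}$; the claim then follows by applying the hypothesis to this (random) code. For the probabilistic estimate, observe that for a uniformly random code the map sending a message to its $i$-th folded symbol is --- up to a negligible degenerate event --- a uniformly random surjection onto $\F_q^s$, so for a fixed $W$ the quantities $\dim(H_i \cap W)$ are i.i.d., bounded by $d'$, and have an expectation that can be made an arbitrarily small constant by taking $s$ large. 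Hence $\mathbb{E}\bigl[\sum_i \dim(H_i \cap W)\bigr]$ sits comfortably below $R d' n$, and a Chernoff bound gives a failure probability of $q^{-\Omega(n)}$ for each fixed $W$, with enough margin (supplied by the rate slack $\eps$) to survive a union bound over $d' \le d$ and over the at most $q^{d'k}$ choices of $W$; the additive $b^2$ in the exponent absorbs the remaining lower-order counting and concentration overheads. The only care required is tracking constants so the union bound closes with exactly the stated probability, plus a little bookkeeping to reconcile the code's actual rate $R - \eps$ with the nominal rate $R$ in the subspace-design bound.

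\paragraph{The derandomization direction (\cref{thm:FRS-threshold}).} Here I would argue by contradiction via the threshold characterization underlying \cite{lms25}: a uniformly random rate-$R$ $\F_q$-linear code avoids the $b$-local profile $\mathcal{V}$ with high probability precisely when $R$ lies strictly below a threshold rate $R_{\mathrm{th}}(\mathcal{V})$, which is a minimum of explicit first-moment densities taken over the ``reduced'' sub-profiles of $\mathcal{V}$; in particular $R_{\mathrm{th}}$ is monotone under passing to a sub-profile, since containing a sub-profile of $\mathcal{V}$ forces containing $\mathcal{V}$. Now suppose, for contradiction, that some $\eps$-slacked $b$-subspace designable code $\mathcal{C}_0$ of rate $R_0 \le R - \eps$ contains $\mathcal{V}$. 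A witness is a $b$-tuple of codewords; after replacing it by the subcode it spans and the correspondingly projected profile, I may assume the witnessing codewords are linearly independent, with common message preimage a $b$-dimensional subspace $U \subseteq \F_q^k$. The crux is to relate the density of the sub-profile $\mathcal{V}'$ that the witness actually realizes to the subspace-design quantity $\sum_i \dim(H_i \cap U)$: applying rank--nullity at each coordinate $i$ to the (folded) evaluation map $U \to \F_q^{s}$, whose kernel is exactly $U \cap H_i$, turns the realized local-view ranks into the dimensions $\dim(U \cap H_i)$, and --- after accounting correctly for the folding --- yields $\mathrm{density}(\mathcal{V}') \le \tfrac{1}{bn}\sum_i \dim(H_i \cap U)$. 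The $\eps$-slacked subspace-design bound $\sum_i \dim(H_i \cap U) \le (R_0 + \eps) b n$ then gives $\mathrm{density}(\mathcal{V}') \le R_0 + \eps \le R$, whereas the threshold characterization together with the hypothesis gives $R < R_{\mathrm{th}}(\mathcal{V}) \le R_{\mathrm{th}}(\mathcal{V}') \le \mathrm{density}(\mathcal{V}')$ --- a contradiction.

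\paragraph{Where the difficulty lies.} The delicate step, and the one I expect to require the bulk of the work, is making the passage to $\mathrm{density}(\mathcal{V}') \le \tfrac{1}{bn}\sum_i \dim(H_i \cap U)$ precise: the subspace-design bound concerns the block-level vanishing subspaces $H_i$ of the \emph{folded} code, whereas the local profile and its reduced sub-profiles live at a finer granularity, so one must pick the realized sub-profile $\mathcal{V}'$ so that the folding parameter cancels cleanly rather than costing a spurious factor of $s$, and one must check that the reduced-profile and second-moment machinery of \cite{lms25} genuinely applies to the (structurally simple, but not arbitrary) profile $\mathcal{V}'$ that emerges from the witness --- in particular that the bound $R_{\mathrm{th}}(\mathcal{V}') \le \mathrm{density}(\mathcal{V}')$ is available for it. A secondary point is the reduction from a linearly dependent witnessing tuple to fewer codewords, which is also where the $b^2$-type overheads in the probability bounds arise.
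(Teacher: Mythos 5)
Your sketch of the hard direction (\cref{thm:FRS-threshold}) misses the step that the paper identifies as the entire difficulty, and the chain of inequalities you propose does not close. In the characterization of \cite{lms25} (\cref{thm:threshold}), the threshold $R_{\mathcal{V}}$ is \emph{not} a first-moment density: it is defined by requiring, for every configuration subspace $U\in\Ldist(\F^b_q)$, the existence of \emph{some} proper subspace $W\subsetneq U$ with $\Phi(\mathcal{V},U,R)\le\Phi(\mathcal{V},W,R)$, i.e.\ it is a min over $U$ of a \emph{max over proper $W$} of incremental densities. Your contradiction uses ``$R< R_{\mathrm{th}}(\mathcal{V}')\le \mathrm{density}(\mathcal{V}')$'' together with the subspace-design bound applied to the full message span $U$ of the witness; but applying the design bound to the full span only reproduces the $W=0$ comparison, and $R\le R_{\mathcal{V}}$ gives no control on the full-span density when the critical $W$ is a nonzero proper subspace, so the inequality $R_{\mathrm{th}}(\mathcal{V}')\le\mathrm{density}(\mathcal{V}')$ is unjustified (and false in general). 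The paper's proof handles exactly this: from the witness it builds the configuration space $U\subseteq\F^b_q$, invokes \cref{fac:monotone} to extract the critical $W\subsetneq U$, and then constructs a message-space subspace $N'\subseteq\F^k_q$ of dimension $\dim U-\dim W\le b$ (realizing $U/W$ inside the span of the witness) such that, via \cref{clm:blackbox}, the incremental potential $\Phi(\mathcal{V},U,R')-\Phi(\mathcal{V},W,R')$ transfers exactly into $\sum_i\dim(N'\cap\ker(\pi_i))$, to which the $\eps$-slacked design inequality is then applied. This construction of $N'$ is the missing idea; the folding bookkeeping you flag as the delicate point is comparatively routine.

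The easy direction (\cref{thm:easy-direction}) also has gaps beyond bookkeeping. First, the statement concerns a plain (unfolded) random $\F_q$-linear code of length $n$, whereas your argument at best shows that a random \emph{folded} code (unfolded length $sn$) avoids $\mathcal{V}$; you still need the transfer step the paper uses, namely that $\Phi(\mathcal{V}^{(s)},U,R)=s\,\Phi(\mathcal{V},U,R)$ hence $R_{\mathcal{V}^{(s)}}=R_{\mathcal{V}}$, combined with both directions of \cref{thm:threshold}, to conclude $R_{\mathcal{V}}\ge R-(b^2+1)/n$ and only then deduce the stated probability bound for length-$n$ random linear codes. Second, the hypothesis quantifies over $d$-subspace designable codes of rate exactly $R$ (design bound $Rd'n+1$), so you cannot simply ``apply the hypothesis'' to a random code of rate $R-o_n(1)-\eps$; the paper instead takes the random folded code at rate $R$ and pushes the rate loss into the threshold machinery. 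Third, your per-$W$ concentration claim of $q^{-\Omega(n)}$ cannot survive the union bound over the roughly $q^{Rsnd'}$ subspaces $W$; one needs a failure probability of the form $q^{-csnd'}$ with $c>R$, which is what the rank-sequence counting in \cref{thm:random-subspace-design} provides, and hitting the exact bound $Rd'n+1$ (rather than the $\eps$-slacked one) forces the folding parameter to grow like $\Omega(d^3 n)$ rather than being a large constant.
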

As a direct corollary of \cref{thm:main-equiv}, we give a reduction from list-recoverability of optimal subspace designable codes to list-recoverability of random linear codes as stated in \cref{cor:sub-lr}. In a follow-up work \cite{BCDZ25lr}, the authors use this reduction to prove the nearly-optimal list-recoverability of random linear codes. See \cref{tab:previous} for a comparison with previous work.

\begin{table}
\begin{adjustbox}{width=\columnwidth,center}
\begin{tabular}{|c|c|c|c|}
  \hline
  Codes attain thresholds of RLC & List-decoding & List-recovery & All local properties \\
  \hline
  Random LDPC codes  & \cite{mrr20} & \cite{mrr20} & \cite{mrr20} \\
  \hline
  Randomly punctured low-bias codes  & \cite{gm22} & \cite{gm22} & \cite{gm22} \\
  \hline
  Random RS codes  & \cite{ST23,brakensiek2023generic,guo2023randomly,alrabiah2024randomly} & \cite{lms25} & \cite{lms25} \\
  \hline

  Explicit Constructions & \cite{ST23,bdg24b,cz24,jmst25} & This work & This work \\
  \hline
  \end{tabular}
  \end{adjustbox}
  \caption{This table compares various randomness-efficient families of linear codes as well as explicit codes that behave like random linear codes in terms of list-decoding, list-recovery, and the most general local properties. See also our discussion on the concurrent work of \cite{js25}.}
  \label{tab:previous}
  \end{table}

It was known from the pioneer work of Guruswami and Kopparty \cite{guruswami2016explicit} that when $s\ge b (R/\eps+1)$, the explicit $s$-folded Reed--Solomon codes and $s$-univariate multiplicity codes are $\eps$-slacked $b$-subspace designable codes as discussed later in \cref{rem:slacked-sub}.

Therefore, as a corollary, we give the first explicit construction of codes that simultaneously simulate all local properties of random linear codes. Concretely, it follows that
\begin{theorem}\label{thm:explicit-local}
Fix $b,q\ge 1,\eps,R\in(0,1),s\ge b(R/\eps+1)$, there are explicit $s$-folded Reed--Solomon codes and $s$-order univariate multiplicity codes $\mathcal{C}\subseteq(\F^s_q)^n$, where $q=\Theta(sn)$, with rate $R-\eps$, such that for any $b$-local profile $\mathcal{V}=(V_1,\dots,V_n)\in\mathcal{L}(\F^b_q)^n$, if random $\F_q$-linear code with rate $R$ does not contain $\mathcal{V}$ with high probability, then $\mathcal{C}$ does not contain $\mathcal{V}$. 

In particular, for any $\ell,L\ge 1$, if random $\F_q$-linear codes with rate $R$ are $(\rho,\ell,L)$-list recoverable, then $\mathcal{C}$ is $(\rho,\ell,L)$ list-recoverable.
\end{theorem}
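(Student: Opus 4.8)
The plan is to derive this theorem as a direct composition of two ingredients already stated in the excerpt: the first bullet of \cref{thm:main-equiv} (i.e.\ \cref{thm:FRS-threshold}), and the parameter fact of Guruswami--Kopparty \cite{guruswami2016explicit} recalled in \cref{rem:slacked-sub}. First I would invoke the Guruswami--Kopparty construction: when $s \ge b(R/\eps + 1)$, the explicit $s$-folded Reed--Solomon codes and the $s$-order univariate multiplicity codes, instantiated over a field of size $q = \Theta(sn)$, are $\eps$-slacked $b$-subspace designable codes of rate (at least) $R - \eps$. This is exactly the regime the theorem hypothesizes, so such explicit $\mathcal{C} \subseteq (\F_q^s)^n$ exist and can be written down in deterministic polynomial time. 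Next, fix any $b$-local profile $\mathcal{V} = (V_1, \dots, V_n) \in \mathcal{L}(\F_q^b)^n$ and suppose random $\F_q$-linear codes of rate $R$ do not contain $\mathcal{V}$ with high probability. Since $\mathcal{C}$ is an $\F_q$-linear $\eps$-slacked $b$-subspace designable code of rate at most $R - \eps$, \cref{thm:FRS-threshold} applies verbatim and yields that $\mathcal{C}$ does not contain $\mathcal{V}$. This is the entire argument for the first assertion.

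For the ``in particular'' clause, I would only need to recall that list-recoverability is captured by local profiles: being $(\rho, \ell, L)$-list-recoverable is equivalent to not containing any certificate for a specific family of $b$-local profiles with $b = L+1$ (a short list of $L+1$ codewords all of whose coordinates lie in the input lists on a $(1-\rho)$-fraction of positions is precisely the forbidden pattern). So if random $\F_q$-linear codes of rate $R$ are $(\rho, \ell, L)$-list-recoverable, then for every profile $\mathcal{V}$ in this family they do not contain $\mathcal{V}$ with high probability, hence by the first part $\mathcal{C}$ contains none of them either, i.e.\ $\mathcal{C}$ is $(\rho, \ell, L)$-list-recoverable. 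One should take $b = L+1$ and check that the hypothesis $s \ge b(R/\eps + 1)$ is the one stated; this is a bookkeeping point, not a real difficulty.

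There is essentially no obstacle in this theorem itself: it is a corollary packaging \cref{thm:FRS-threshold} together with a known explicit-construction fact, plus the standard translation between list-recovery and local profiles. The only things requiring care are (i) confirming that the rate claimed for the Guruswami--Kopparty codes in the slacked regime is exactly $R - \eps$ (equivalently, that the slack parameter $\mu$ one plugs into \cref{thm:FRS-threshold} matches $\eps$ and that the ambient rate bound $R - \eps$ is respected), and (ii) verifying the field size $q = \Theta(sn)$ is simultaneously large enough for the folded RS / multiplicity code construction and consistent with the profile alphabet $\F_q$. All the genuine mathematical content -- that subspace-design structure forces every local property of random linear codes onto the code -- lives in \cref{thm:FRS-threshold}, which we are permitted to assume; the present statement is just its explicit instantiation.
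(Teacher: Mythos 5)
Your proposal matches the paper's intended argument: \cref{thm:explicit-local} is stated there as a direct corollary of \cref{thm:FRS-threshold} combined with the Guruswami--Kopparty parameters recalled in \cref{rem:slacked-sub}, with the list-recovery clause handled by the standard profile translation (as in \cref{cor:sub-lr} via \cite{lms25}). The bookkeeping points you flag (matching the slack $\mu$ to $\eps$ and the rate $R-\eps$, and the field size $q=\Theta(sn)$) are exactly the only checks needed, so your proof is correct and essentially identical to the paper's.
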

Since it was previously shown by \cite{brakensiek2023generic,alrabiah2024randomly,lms25} that random linear codes achieve generalized singleton bound, \cref{thm:explicit-local} directly gives an alternative proof of the main theorem of \cite{cz24} that folded RS codes achieves generalized singleton bound \cite{ST23} as a special case.

Interestingly, since the main theorem of \cite{cz24} only uses the subspace design property (This generalization is presented in \cite[Appendix B]{cz24}), \cite{cz24} actually proves that all optimal subspace designable codes achieve generalized singleton bound. Therefore, combined with \cite{cz24}, the second item of \cref{thm:main-equiv} actually conversely provides an alternative proof that random linear codes achieve generalized singleton bounds originally by \cite{brakensiek2023generic,alrabiah2024randomly,lms25}.
\begin{remark}[List-recoverability]
Fix radius $1-R-\eps$ and input list size $\ell\ge 2$, the best known $(1-R-\eps,\ell,L)$ list-recoverability of random linear codes is $L=(\ell/\eps)^{O(\ell/\eps)}$ by \cite{ls25}. As an algebraic miracle, the bound for explicit folded RS codes is $L=(\ell/\eps)^{O(1+\log{(\ell)}/\eps)}$ by \cite{kopparty2023improved,tamo24}, which is better! As we have seen before that \cite{cz24} plus our reduction implies the same bound for random linear codes \cite{brakensiek2023generic,alrabiah2024randomly,lms25}, it is tempting to use \cref{thm:main-equiv} on the bound in \cite{kopparty2023improved,tamo24} to improve the bound of \cite{ls25}. Regrettably, this cannot work. The reason is that, unlike \cite{cz24}, the proof of \cite{kopparty2023improved,tamo24} is not ``subspace-design-only''. Concretely, their proof intrinsically relies on the fact that bad list resides in some low-dimensional affine subspace, which was proved in \cite{guruswami2013linear} using an interpolation technique that is tailored for algebraic codes, This does not apply to merely optimal subspace designable codes and we cannot apply \cref{thm:main-equiv} on their bound.

However, in a follow-up work \cite{BCDZ25lr}, the authors prove a better bound $(\frac{\ell}{R+\eps})^{O(R/\eps)}$ in a ``purely subspace-design'' way. This directly improves the bound of \cite{kopparty2023improved,tamo24} for folded RS codes. Moreover, this ``subspace design nature'' of their proof unlocks the full power of \cref{thm:main-equiv} and enables them to use this reduction to transfer the same bound directly to random linear codes, which improves the bound of \cite{ls25}.
\end{remark}

\paragraph{Concurrent Work.}Concurrently and independently, in an exciting work by Jeronimo and Shagrithaya \cite{js25}, the authors use the AEL-type constructions 
\cite{ael95} to get explicit constructions of folded linear codes 
over \textbf{constant-sized} alphabet that simulates any fixed ``\textbf{reasonable} local property''. We comment and compare their results with ours as follows.
\begin{compactitem}
\item For any fixed family of $b$-local profiles $\mathcal{V}_1,\dots,\mathcal{V}_m$, if each of them has at most $T$ distinct linear subspaces within it, then \cite{js25} constructs explicit codes with alphabet $q^{\Theta(T^3)}$  that simulates all these local profiles  when $m=q^{o(n)}$. However, for the most general family of local profiles, these parameters could be $m=q^{\Theta(b^2n)}$ and $T=\min(q^{\Theta(b^2)},n)$. Although this makes \cite{js25} inaccessible in the most general setting, it is not an issue when $q$ is a large constant independent of $n$, which is the primary case of interest in \cite{js25}. By contrary, our construction uses a single algebraic code to simulate all possible $b$-local profiles simultaneously.

\item As the cost of simultaneous simulation of all local profiles, our explicit construction requires alphabet size $q=\Theta(bn)$, while the construction of \cite{js25} requires only constant alphabet when the previous conditions are met, which is better in these cases. We further comment that in \cref{thm:random-subspace-design} we prove that there exists nearly optimal subspace designs over constant alphabet. Therefore, future improvement on explicit constructions of subspace designs over smaller fields would reduce the alphabet size required.
\item Before our work, local properties are primarily interested for the reasons of list-decoding and list-recovery of linear codes. For these two major applications, the conditions in the first item are met so both \cite{js25} and our results can handle list-decoding and list-recovery when the input and output list sizes are constants. However, in our second part of results relating to tensor codes, $T$ and 
$q$ could be enormous so we have to simulate local profiles in the most general setting.
\end{compactitem}

\subsubsection{Applications to Tensor Codes and Matroid Theory}\label{subsec:matroid-results}

As mentioned earlier, two commonly study local properties correspond to the list-decodability and list-recoverability of linear codes. A novel conceptual contribution we make in this paper is that the theory of local properties (with slight modification) can be used to study the \emph{tensoring} of linear codes. Given a column code $C_{\col} \subseteq \F_q^m$ and a row code $C_{\row} \subseteq \F_q^n$, we define their tensor product $C_{\col} \otimes C_{\row} \subseteq \F_q^{m\times n}$ to be the set of all $m \times n$ matrices such that each column is a codeword in $C_{\col}$ and each row is a codeword in $C_{\row}$. In particular, $\dim(C_{\col} \otimes C_{\row}) = \dim(C_{\col}) \dim(C_{\row})$.

As a fundamental linear algebraic operation, tensor products are a key operation in many constructions within coding theory as well as more broadly in TCS~\cite{Gopalan2016,panteleev2022asymptotically,dinur2022locally,ccs25,kalachev2025Maximallya,berczi2025matroid}. In this paper, we use the theory of local properties to understand better the \emph{correctability} of tensor products with respect to erasures. That is, given an erasure pattern $E \subseteq [m] \times [n]$, when can we recover a codeword $c \in C_{\col} \otimes C_{\row}$ when only given the symbols $c|_{\bar{E}}$? This is equivalent to asking whether there exists a codeword $c \in C_{\col} \otimes C_{\row}$ supported on $E$. Of course, the answer depends on the precise structure on the precise structure of $C_{\col}$ and $C_{\row}$, but there is much interest in understanding best-case scenario. More formally, Gopalan et al.~\cite{Gopalan2016} asked the following.

\begin{question}\label{ques:MR}
Given parameters $m,n,a,b$ as well as a pattern $E \subseteq [m] \times [n]$. When does there exist (for sufficiently large $q$) a $m-a$-dimensional column code $C_{\col} \subseteq \F_q^m$ and a $n-b$-dimensional row code $C_{\row} \subseteq \F_q^n$ such that $E$ is a correctable pattern in $C_{\col} \otimes C_{\row}$?
\end{question}

In the case $a=1$, Gopalan et al.~\cite{Gopalan2016} gave a combinatorial description, which was later turned into a deterministic\footnote{We note that \Cref{ques:MR} lies in $\mathsf{RP}$ by picking $C_{\col}$ and $C_{\row}$ to be random linear codes and invoking Schwarz-Zippel~\cite{Sch80,Zip79}. The core challenge is to \emph{derandomize} this special case of Polynomial Identity Testing (PIT).} polynomial-time checkable condition by Brakensiek, Gopi, and Makam~\cite{bgm2021mds,brakensiek2023generic}. However, the approach used in those works cannot extend to $a \ge 2$~\cite{holzbaur2021correctable}. Although \Cref{ques:MR} may seem like a niche coding theory question on the surface, a recent characterization by Brakensiek et al.~\cite{brakensiek2024Rigidity} shows that answering \cref{ques:MR} is equivalent to resolving some long-standing questions in \emph{matroid} theory, including understanding the structural rigidity of bipartite graphs~\cite{kalai2015Bipartite} as well as when partially revealed matrices can be completed to low-rank matrices~\cite{singer2010uniqueness}. See \Cref{sec:matroid} as well as the recent survey by Cruickshank et al.~\cite{cruickshank2025rigidity} for a more thorough discussion of these connections.

We give a novel plan of attack toward resolving \Cref{ques:MR} by studying correctability in a tensor code as a a slight variant of a local property in the sense of \cite{lms25}. By adapting a suitable potential function designed in \cite{lms25}, (see \Cref{subsec:potential-matroid}) we give a one-sided test for checking whether $E$ is a correctable pattern in \Cref{ques:MR}. By ``one-sided,'' we mean than the potential test might report than $E$ is correctable when $E$ is actually uncorrectable, but the opposite can never occur. By suitably adapting \Cref{thm:main-equiv} (see \Cref{subsec:pass-to-FRS}), we can replace $C_{\col}$ and $C_{\row}$ with two explicit folded Reed-Solomon codes (with the caveat the folding parameters are $\poly(m,n)$-sized) and use those to deterministically check the correctability conditions.

To understand the accuracy of this algorithm, one needs to know whether there exists patterns $E$ which are uncorrectable in \Cref{ques:MR}, but are correctable according to the folded Reed-Solomon codes. We connect this question to a long-standing question\footnote{This question on the tensor products of matroids should not be confused with Mason's log-concavity conjectures which were recently resolved~\cite{adiprasito2018hodge,huh2021correlation,branden2020lorentzian,huh2022combinatorics,anari2024log}.} by Mason~\cite{mason1981glueing} on the structure of the tensor products of matroids. Informally, Mason asks whether the linear algebraic tensor product of two random linear codes (over a large field) is the most general matroid which ``looks like'' a tensor product. See \Cref{conj:birigid} for the precise assumption we make, which is a modern reformulation oif Mason's question by Cruickshank et al.~\cite{cruickshank2025rigidity} which is more convenient for our purposes. With this assumption, we can prove that the potential function we construction is a \emph{zero-error} test for correctability (see \Cref{cor:potential-MR} and \Cref{lem:tensor-two}). As such, we prove the following result.

\begin{theorem}[Informal, see \Cref{thm:MR-algo}]\label{thm:main-matroid}
Assuming a positive answer to a question of \emph{Mason~\cite{mason1981glueing}}, \Cref{ques:MR} can be answered in deterministic polynomial time.
\end{theorem}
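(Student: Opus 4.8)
The plan is to recast \Cref{ques:MR} as a (variant of a) local-property computation in the sense of \Cref{thm:main-equiv}, evaluate that computation with explicit folded Reed--Solomon codes so that it becomes deterministic and polynomial time, and then invoke Mason's hypothesis to argue the computation is exact. Recall that $E\subseteq[m]\times[n]$ is correctable in $C_{\col}\otimes C_{\row}$ precisely when $C_{\col}\otimes C_{\row}$ has no nonzero codeword supported inside $E$; since a uniformly random choice of $C_{\col},C_{\row}$ over a large field minimizes (with probability $1$) the family of uncorrectable patterns, \Cref{ques:MR} is equivalent to deciding whether $E$ is correctable in the \emph{generic} tensor code, i.e.\ to a rank query in the generic tensor matroid (the birigidity matroid of \Cref{conj:birigid}). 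Fixing $C_{\col}$ and the erasure columns $E_j:=\{i:(i,j)\in E\}$, the existence of a bad codeword becomes the statement that $C_{\row}$ \emph{contains} a profile $\mathcal{V}_E=(V_1,\dots,V_n)$ whose coordinate-$j$ subspace is determined by $C_{\col}$ and $E_j$, of width governed by $\dim C_{\col}$. This is the ``slight variant'' of a local property---the $V_j$ are not arbitrary but inherit the structure of $C_{\col}$, and one must symmetrically reduce away $C_{\col}$ too, which is \Cref{lem:tensor-two}.

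Next I would adapt the potential function of \cite{lms25} to this setting (\Cref{subsec:potential-matroid}), obtaining a quantity $\Phi(E)$, and take as the test: report ``uncorrectable'' if $\Phi(E)$ exceeds the relevant threshold, and ``correctable'' otherwise. The first-moment / dual-counting half of the \cite{lms25} analysis gives soundness in one direction---when $\Phi(E)$ is above threshold, \emph{every} pair $(C_{\col},C_{\row})$, in particular the generic pair, must contain a bad codeword supported on $E$---so the test never reports ``uncorrectable'' on a pattern that is genuinely correctable. To make the test explicit and efficient I would adapt \Cref{thm:main-equiv} and \Cref{thm:FRS-threshold} (\Cref{subsec:pass-to-FRS}) to substitute for $C_{\col}$ and $C_{\row}$ two explicit folded Reed--Solomon codes with folding parameter $s=\poly(m,n)$ over a field of size $\poly(m,n)$; their near-optimal subspace designability lets them witness the same profiles as random linear codes up to an arbitrarily small rate loss, and it is exactly this vanishing slack that forces $s$ to be polynomial. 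With the codes pinned down, evaluating $\Phi(E)$ is polynomial-size linear algebra over a polynomial-size field, so the whole test runs in deterministic $\poly(m,n)$ time.

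It remains to eliminate the one-sided error: patterns $E$ uncorrectable for \emph{all} codes on which $\Phi$ nonetheless reports ``correctable''. By the reformulation above such an $E$ is a point where the generic tensor matroid behaves strictly ``less generically'' than $\Phi$ predicts. Mason's question~\cite{mason1981glueing}, in the reformulation of \cite{cruickshank2025rigidity} recorded as \Cref{conj:birigid}, asserts precisely that the generic linear tensor is the most general matroid consistent with the tensor structure, which is equivalent to an equality between the rank function of the generic tensor matroid and the bound read off from $\Phi$. Granting it, \Cref{cor:potential-MR} together with \Cref{lem:tensor-two} upgrades $\Phi$ to a \emph{zero-error} test for correctability; combined with the previous paragraph this yields the deterministic polynomial-time algorithm of \Cref{thm:MR-algo}.

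I expect the main obstacle to be the middle two steps. First, one must check that the sound direction of the potential test survives for \emph{all} codes, not merely random ones; the tensoring variant is more rigid than a plain local property, which should help but is not automatic. Second, the two-sided derandomization---replacing both $C_{\col}$ and $C_{\row}$ by explicit codes simultaneously, with the erasure pattern $E$ coupling them---must be threaded through \Cref{thm:main-equiv} without eroding either the soundness or (under \Cref{conj:birigid}) the completeness of the test. The appeal to Mason is the conceptually new ingredient, but once the potential function is correctly aligned with the tensor matroid it should reduce to matching two rank bounds.
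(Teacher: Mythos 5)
You follow essentially the same route as the paper: the unconditional one-sided test via the \cite{lms25}-style potential matroid (\Cref{lemma:potential-monotone}), its identification with $\MR(m,n,a,b)$ under Mason's question via \Cref{cor:potential-MR}, the two-phase replacement of both codes by scaled Reed--Solomon codes with $\poly(m,n)$ folding using their subspace-design property (\Cref{lem:replace-FRS}, \Cref{lem:tensor-two}), and a final deterministic rank check. The only slip is at the end: the algorithm does not evaluate $\Phi(E)$ itself (a maximum over subcodes of $C_{\col}$, not obviously computable in polynomial time) but instead tests whether the scaled pattern $E^{s,t}$ is correctable in $C^{\RS}_{\col}\otimes C^{\RS}_{\row}$, and it does so over $\Q$ with strongly polynomial Gaussian elimination rather than over a $\poly(m,n)$-size finite field, where the identification with the generic matroid would need separate justification.
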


This result improves over a recent result of Jackson and Tanigawa~\cite{jackson2024maximal} who conditionally showed than \Cref{ques:MR} lies in $\mathsf{RP} \cap \mathsf{coNP}$. We note however their conditional assumption was different (and stronger), see \Cref{subsec:abstract} for further discussion. See also \Cref{rem:conj-pro-con} for a thorough discussion on evidence for and against \Cref{conj:birigid}.

The recent work of B{\'e}rczi et al.~\cite{berczi2025matroid} also studies products between matroids, although they are interested in constructing products for an arbitrary pair of matroids, whereas we are interested in studying more deeply the tensor products of linear matroids.

\subsubsection{Tight Bounds for the Quality of Subspace Designs}

Given the aforementioned applications of subspace designable codes, a natural topic to investigate is the optimal parameters of explicit subspace designs. In Section~\ref{sec:subspace-design}, we make new progress toward answering such questions. As a baseline, we first present parameters of the explicit construction of subspace designable codes in \cite{guruswami2016explicit}.
\begin{theorem}[\cite{guruswami2016explicit}, as stated in \cite{cz24}]\label{thm:code-construction-sub} For any $s,n\ge 1,q\ge sn\ge k$, there are explicit constructions of  $\F_q$-linear codes $\mathcal{C}\subseteq (\F^s_q)^n$ such that for all $d\in\{0,1,\dots,s\}$, $\mathcal{C}$ is a $(d,\frac{d(k-1)}{s-d+1})$ subspace designable code. In particular, explicit folded RS codes and univariate multiplicity codes are such constructions.
\end{theorem}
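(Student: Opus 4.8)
The plan is to reduce \cref{thm:code-construction-sub} to the subspace design property of folded Reed--Solomon and univariate multiplicity codes proved by Guruswami and Kopparty~\cite{guruswami2016explicit}, and to translate that property through \cref{def:designcodes}. First I would recall the two constructions over $\F_q$: fix a primitive $\gamma\in\F_q^*$ and representatives $\alpha_1,\dots,\alpha_n$ that are ``folding-disjoint'' (for folded RS, the $sn$ points $\gamma^j\alpha_i$ with $0\le j<s$, $i\in[n]$ are pairwise distinct; for multiplicity codes the $\alpha_i$ are merely distinct), which is possible since $q\ge sn$. Viewing a message as a polynomial $f\in\F_q[X]$ of degree $<k$, the $s$-folded RS codeword has $i$-th block $\bigl(f(\alpha_i),f(\gamma\alpha_i),\dots,f(\gamma^{s-1}\alpha_i)\bigr)$, and the $s$-th order multiplicity codeword has $i$-th block $\bigl(f(\alpha_i),f^{(1)}(\alpha_i),\dots,f^{(s-1)}(\alpha_i)\bigr)$ (Hasse derivatives). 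In both cases the $i$-th block equals $0$ exactly when $f\in H_i$, where $H_i=\{f:\deg f<k,\ f(\gamma^j\alpha_i)=0\text{ for }0\le j<s\}$ for folded RS and $H_i=\{f:\deg f<k,\ (X-\alpha_i)^s\mid f\}$ for multiplicity codes. By \cref{def:designcodes} it therefore suffices to prove that $\{H_1,\dots,H_n\}$ is a $\bigl(d,\tfrac{d(k-1)}{s-d+1}\bigr)$ subspace design for every $1\le d\le s$; the case $d=0$ is trivial and $d>s$ lies outside the claimed range.

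For the subspace-design bound I would run the folded-Wronskian argument of \cite{guruswami2016explicit}. Let $W$ be a $d$-dimensional space of polynomials of degree $<k$, fix a basis $f_1,\dots,f_d$, and form the $d\times d$ matrix $M(X)$ with $(a,j)$ entry $f_j(\gamma^{a-1}X)$ (folded RS) or $f_j^{(a-1)}(X)$ (multiplicity case). Three facts are needed. (i) $\det M\not\equiv 0$. (ii) $\deg\det M\le d(k-1)$, immediate since $\det M$ is a signed sum of products of $d$ entries, each of degree at most $k-1$. (iii) For each block $i$, put $m_i=\dim(W\cap H_i)$ and choose the basis $i$-adapted so that $f_1,\dots,f_{m_i}\in H_i$; then $\det M$ is divisible by a product of linear factors supported at block $i$ of total degree at least $(s-d+1)m_i$. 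In the multiplicity case, $(X-\alpha_i)^s\mid f_j$ for $j\le m_i$ forces $(X-\alpha_i)^{s-(a-1)}\mid f_j^{(a-1)}$, and since $a-1\le d-1$ each of the first $m_i$ columns of $M$ is divisible by $(X-\alpha_i)^{s-d+1}$, giving $(X-\alpha_i)^{(s-d+1)m_i}\mid\det M$. In the folded-RS case, for every $0\le r\le s-d$ and every $a\in[d]$ one has $\gamma^{a-1}\cdot\gamma^r\alpha_i=\gamma^{a-1+r}\alpha_i$ with $0\le a-1+r\le s-1$, so the first $m_i$ columns of $M(X)$ vanish identically at $X=\gamma^r\alpha_i$; factoring one power of $(X-\gamma^r\alpha_i)$ from each such column and ranging over the $s-d+1$ distinct values of $r$ yields $\prod_{r=0}^{s-d}(X-\gamma^r\alpha_i)^{m_i}\mid\det M$. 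Since the relevant roots are disjoint across blocks (by folding-disjointness, resp.\ distinctness of the $\alpha_i$), combining (ii) and (iii) gives
\[(s-d+1)\sum_{i=1}^n m_i\ \le\ \deg\det M\ \le\ d(k-1),\]
which is exactly the claimed bound.

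The main obstacle is fact (i), the nonvanishing of the folded Wronskian over $\F_q$, which is where the hypothesis $q\ge sn\ge k$ is really used. Over characteristic zero this is the classical statement that the Wronskian of linearly independent polynomials is nonzero, but over $\F_q$ one must be careful: for the multiplicity-code matrix one uses Hasse rather than ordinary derivatives together with $\deg f_j<k\le q$; for the folded-RS matrix, $\det M\equiv 0$ would make the columns $\F_q(X)$-linearly dependent, and chasing that dependence through the shift $X\mapsto\gamma X$ (using $\gamma$ primitive and $\deg f_j\le k-1$) forces the coefficient rational functions to be $\gamma$-invariant and hence produces an $\F_q$-linear dependence among $f_1,\dots,f_d$, a contradiction. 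Since this is precisely the content of \cite[Definition~3 and the associated construction]{guruswami2016explicit} (reformulated in \cite{cz24}), I would cite their proof for the non-vanishing and present the degree/divisibility bookkeeping above as the short self-contained remainder. Finally, explicitness is inherited from the constructions: $\gamma$ and the $\alpha_i$ are found in time $\poly(q)$, so the encoders are computable in time $\poly(q,n)$.
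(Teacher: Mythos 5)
Your proposal is correct and takes essentially the same approach as the paper: the paper quotes this theorem from Guruswami--Kopparty, and its own \Cref{sec:gk16-improved} (\cref{lem:Wronskian}, used to prove \cref{thm:gk16-improved}) runs exactly this folded-Wronskian argument --- nonvanishing of $\det W_\gamma(f_1,\dots,f_d)(X)$, a degree upper bound, and per-block divisibility by $(s-d+1)\dim(W\cap H_i)$ linear factors coming from an adapted basis. The only difference is that the paper sharpens your step (ii) to $\deg p \le dk-\binom{d+1}{2}$ and adds the extra factor $X^{\binom{d}{2}}$, which is what yields the improved bound $\frac{d(k-d)}{s-d+1}$; your $d(k-1)$ bookkeeping suffices for the statement as quoted.
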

\begin{remark}\label{rem:slacked-sub}
This parameter implies that, for any $b\ge1,\eps\in(0,1)$ and rate $R=k/(sn)$, when $s\ge b(R/\eps+1)$, explicit $s$-folded Reed--Solomon codes and $s$-order univariate multiplicity codes are $\eps$-slacked $b$-subspace designable since we can calculate $d(k-1)/(s-d+1)\leq (R+\eps)dn$ for all $d\in[b]$ under this choice of $s$. 
\end{remark}

Our first contribution is a more fine-grained analysis of \cref{thm:code-construction-sub} that slightly improves its parameters.

\begin{theorem}[Informal, see \Cref{thm:gk16-improved}]\label{thm:gk16-improved-intro}
Assume that $q > ns$, then there are explicit constructions of  $\F_q$-linear codes $\mathcal{C}\subseteq (\F^s_q)^n$ such that for all $d\in\{0,1,\dots,s\}$, $\mathcal{C}$ is a $(d,\frac{d(k-d)}{s-d+1})$ subspace designable code. In particular, explicit folded RS codes are such constructions.
\end{theorem}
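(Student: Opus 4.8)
The plan is to revisit the Guruswami--Kopparty construction underlying \Cref{thm:code-construction-sub} and sharpen the counting argument by one "dimension's worth." Recall that in that construction the subspaces $H_i \subseteq \F_q^k$ are of the specific algebraic form arising from folded Reed--Solomon codes: after identifying messages $f \in \F_q^k$ with polynomials of degree $<k$, the subspace $H_i$ consists of those $f$ whose $s$ consecutive evaluations at the $i$-th folded block all vanish. The key structural fact, due to \cite{guruswami2013linear,guruswami2016explicit}, is that for any $\ell$-dimensional $W \subseteq \F_q^k$ one can pass to a basis in "triangular" (reduced row echelon) form with respect to the degree filtration, and then the dimension of $H_i \cap W$ is controlled by how many of the $s$ local linear constraints at block $i$ are independent when restricted to $W$. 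The original bound $d(k-1)/(s-d+1)$ comes from a Wronskian/interpolation argument: each basis polynomial of $W$ has degree $\le k-1$, and a block where $W$ overlaps $H_i$ in dimension $j$ "uses up" roughly $j(s-j+1)$ of the available degrees of freedom; summing over blocks and optimizing gives the stated bound with $k-1$ in the numerator.

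First I would set up the same reduced-basis machinery for a $d'$-dimensional (with $d' = d$) subspace $W$, writing $W = \Span(g_1, \dots, g_d)$ where $g_1, \dots, g_d$ are polynomials with strictly increasing leading-term degrees $e_1 < e_2 < \dots < e_d \le k-1$. The improvement I am targeting replaces $k-1$ by $k-d$ in the numerator, which should follow from the observation that the $d$ distinct leading degrees $e_1 < \dots < e_d$ cannot all equal $k-1$; in fact $e_j \le k-1-(d-j)$, so the "total degree budget" consumed by the basis is at most $(k-1) + (k-2) + \dots + (k-d)$ rather than $d(k-1)$, i.e., it is smaller by $\binom{d}{2}$. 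Then I would re-run the Wronskian estimate: the relevant object is the Wronskian-type determinant of the $g_j$'s restricted to a block, whose degree as a polynomial in the evaluation point is bounded by $\sum_j (e_j - (j-1)) \le \sum_{j=1}^d (k-d)$ wait---more carefully, by $\sum_{j=1}^{d}(k-1-(d-j)) - \binom{d}{2} = d(k-d)$ after collecting terms. Since each block $i$ where $\dim(H_i \cap W) = j_i$ forces a zero of order $\ge j_i(s - j_i + 1)/(\text{something})$---here I'd track the exact local multiplicity as in \cite{guruswami2016explicit}---the number of such zeros, counted with multiplicity, is at most the degree of this determinant, giving $\sum_i (\text{local contribution}) \le d(k-d)$, and after the same convexity/optimization step that bounds $\sum_i \dim(H_i \cap W)$ in terms of $\sum_i j_i(s-j_i+1)$, we conclude $\sum_i \dim(H_i \cap W) \le d(k-d)/(s-d+1)$.

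For the univariate multiplicity code case I would carry out the analogous argument using a Hasse-derivative Wronskian; the leading-degree bound $e_j \le k-1-(d-j)$ is unchanged, and the local multiplicity bookkeeping is the standard one from \cite{guruswami2016explicit}, so the same numerator improvement from $k-1$ to $k-d$ goes through, though this is exactly why the theorem statement only claims FRS codes "in particular" and why I'd state the multiplicity-code version separately if it needs the strict inequality $q > ns$ in a different place. The hypothesis $q > ns$ (rather than $q \ge sn$) is needed precisely so that there are enough distinct evaluation points for the folding and so that the Wronskian is not identically zero in characteristic issues; I would double-check the characteristic-$p$ non-vanishing of the (Hasse) Wronskian, since that is the one genuinely delicate point---the ordinary Wronskian can vanish identically over small-characteristic fields even for linearly independent polynomials, so the argument either needs $q$ large relative to $k$ (which we have, $q > ns \ge k$) or needs to be phrased via the echelon-form leading terms directly rather than via a literal Wronskian determinant.

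The main obstacle I anticipate is not the degree count---that is essentially a one-line gain once the reduced basis is in place---but rather making sure the local multiplicity contribution at each block is accounted for tightly enough that no slack creeps back in and eats the $\binom{d}{2}$ improvement; in particular one must verify that the per-block "cost" function $j \mapsto j(s-j+1)$ and the global degree bound $d(k-d)$ combine via the same concavity argument as in \cite{guruswami2016explicit} without losing a lower-order term, and that the edge cases $d=0$ and $d=s$ (where $s-d+1$ is small) are handled consistently with the claimed formula.
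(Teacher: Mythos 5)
There is a genuine gap in where you locate the $\binom{d}{2}$ savings. Your overall route is the paper's route: take a basis $g_1,\hdots,g_d$ of $W$ with strictly increasing degrees $e_1<\cdots<e_d\le k-1$, so $\sum_j e_j \le dk-\binom{d+1}{2}$, and compare the degree of the folded Wronskian determinant $p(X)=\det W_\gamma(g_1,\hdots,g_d)(X)$ against the roots it must have at the points $\gamma^{j}\alpha_i$ (each block with $\dim(H_i\cap W)=d_i$ contributing $(s-d+1)d_i$ to the root count). But you then claim the degree of $p$ is bounded by $\sum_j\bigl(e_j-(j-1)\bigr)=d(k-d)$, i.e.\ that the determinant itself loses an extra $\binom{d}{2}$ in degree. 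For the \emph{folded} Wronskian this is false: substituting $X\mapsto\gamma^{i-1}X$ does not lower degree, and in fact the coefficient of $X^{e_1+\cdots+e_d}$ in $p$ is (up to the product of leading coefficients) a Vandermonde determinant in $\gamma^{e_1},\hdots,\gamma^{e_d}$, which is nonzero precisely because $\gamma$ has order at least $k$ and the $e_j$ are distinct; so $\deg p=\sum_j e_j$ exactly. The degree-drop intuition you are using is correct only for the derivative/Hasse Wronskian of multiplicity codes, whereas the statement's ``in particular'' clause (and the paper's formal \Cref{thm:gk16-improved}) concerns the folded construction. With only the bound $\deg p\le dk-\binom{d+1}{2}$ you obtain numerator roughly $d\bigl(k-\tfrac{d+1}{2}\bigr)$, not the claimed $d(k-d)$.

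The missing ingredient in the paper (\cref{lem:Wronskian}(c)) is a second, independent observation: since change of basis multiplies $p$ by a nonzero scalar, one may separately choose a basis of $W$ in low-order echelon form so that the $i$-th basis polynomial is divisible by $X^{i-1}$; then $X^{\binom{d}{2}}$ divides $p$, and because $0$ is distinct from all the prescribed roots $\gamma^{j}\alpha_i$, this zero at the origin adds $\binom{d}{2}$ to the root count. Combining $\deg p\le dk-\binom{d+1}{2}$ with the root count $\binom{d}{2}+(s-d+1)\sum_i\dim(H_i\cap W)$ gives exactly $\sum_i\dim(H_i\cap W)\le\frac{d(k-d)}{s-d+1}$, with no convexity or per-block optimization needed (the per-block cost is simply $(s-d+1)d_i$; your sharper $d_i(s-d_i+1)$ guess is plausible but not needed and is deferred in the paper). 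To repair your argument, replace the unjustified degree claim by this $X^{\binom{d}{2}}$-divisibility step.
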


The proof of Theorem~\ref{thm:gk16-improved-intro} largely follows the proof of \cref{thm:code-construction-sub}, except we correct one inefficiency in the root-counting bound of \cite{guruswami2016explicit} which leads to the slightly tighter parameters. More surprisingly, we show that the bound is \Cref{thm:gk16-improved-intro} is \emph{exactly} tight, as long as the base field $\F$ is algebraically closed (e.g., $\mathbb C$ or the algebraic closure of $\F_2$).

\begin{theorem}[Informal, see \Cref{thm:gk16-optimal}]\label{thm:gk16-optimal-intro}
Let $\F$ be an algebraically closed field (of any characteristic). Consider any $\F$-linear codes $\mathcal{C}\subseteq (\F^s)^n$  and parameter $d \in \{1,\dots, s\}$. If $n \le \frac{d(k-d)}{s-d+1}$ then $\mathcal C$ is \emph{not} a $(d, \frac{d(k-d)}{s-d+1}-1)$ subspace designable code.
\end{theorem}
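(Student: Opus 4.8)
I want to show that over an algebraically closed field $\F$, no $\F$-linear code $\mathcal C \subseteq (\F^s)^n$ with $n \le \frac{d(k-d)}{s-d+1}$ can be a $(d, \frac{d(k-d)}{s-d+1}-1)$ subspace designable code. Unpacking the definitions: the code gives subspaces $H_1,\dots,H_n \subseteq \F^k$ (the "vanishing subspace" at each coordinate, where $\mathcal C(f)_i = 0$ iff $f \in H_i$), and I must exhibit a single $d$-dimensional subspace $W \subseteq \F^k$ with $\sum_{i=1}^n \dim(H_i \cap W) \ge \frac{d(k-d)}{s-d+1}$. Since each coordinate of $\mathcal C$ takes values in $\F^s$, the condition $\mathcal C(f)_i = 0$ is $s$ linear equations on $f$, so $\dim H_i \ge k - s$, i.e. $\dim(\F^k / H_i) \le s$. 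The plan is to reduce to a dimension-counting statement about the quotient maps $\pi_i : W \to \F^k/H_i$ and use algebraic closedness to find a $W$ on which all these maps have small rank simultaneously.

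**The core argument.** The quantity $\dim(H_i \cap W) = \dim W - \dim \pi_i(W) = d - \mathrm{rank}(\pi_i|_W)$, so I need $\sum_i \mathrm{rank}(\pi_i|_W) \le dn - \big(\frac{d(k-d)}{s-d+1}-1\big)$, equivalently I want $W$ such that the total rank of the $n$ restricted maps is small. Each $\pi_i : \F^k \to \F^k/H_i$ has target dimension $\le s$. The key geometric fact I'd use: for a linear map $\pi : \F^k \to \F^s$ and a generic $d$-dimensional subspace $W$, $\mathrm{rank}(\pi|_W) = \min(d, \mathrm{rank}\,\pi) \le \min(d,s)$ — that's too big. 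Instead I want to *choose* $W$ adversarially to keep ranks low, which is where algebraic closedness enters. Specifically, consider the variety of $d$-dimensional subspaces $W$ (the Grassmannian $\Gr(d,k)$) and, for each $i$ and each target rank value $r$, the determinantal locus $\{W : \mathrm{rank}(\pi_i|_W) \le r\}$. Over an algebraically closed field these loci are nonempty closed subvarieties of known codimension (determinantal varieties have the expected codimension $(d-r)(s_i - r)$ where $s_i = \dim(\F^k/H_i) \le s$, and crucially they are *nonempty* over $\bar\F$ — a rank-$r$ locus in an $s_i$-dimensional target intersected with a $d$-plane is always achievable). I would intersect these loci across all $i$: I'd pick target ranks $r_i$ for each coordinate so that $\sum_i (d - r_i) \ge \frac{d(k-d)}{s-d+1}-1$ while keeping the total codimension $\sum_i (d-r_i)(s-d+1) \le \dim \Gr(d,k) = d(k-d)$. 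The inequality $n(d - r) (s-d+1) \le d(k-d)$ paired with $n \le \frac{d(k-d)}{s-d+1}$ suggests taking $r_i = d-1$ uniformly: then total codimension is $n(s-d+1) \le d(k-d) = \dim\Gr(d,k)$, so the intersection is nonempty over $\bar\F$, and each contributes $d - (d-1) = 1$ to $\sum_i \dim(H_i \cap W)$, giving $\sum_i \dim(H_i\cap W) \ge n$. But I need $\ge \frac{d(k-d)}{s-d+1} \ge n$ — so this only gives $n$, not the full bound. I'd need to be smarter: on the subvariety where all restricted ranks are $\le d-1$, one can push further and demand some coordinates have rank $\le d-2$, etc., using that each further rank drop at coordinate $i$ costs codimension $s-d+1$ and buys one more unit of $\sum \dim(H_i\cap W)$; the budget $d(k-d) = \dim\Gr(d,k)$ lets me afford exactly $\frac{d(k-d)}{s-d+1}$ total units of "rank deficiency," hence $\sum_i \dim(H_i\cap W) \ge \frac{d(k-d)}{s-d+1} > \frac{d(k-d)}{s-d+1}-1$.

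**Making the intersection argument rigorous.** The delicate point is that "total codimension $\le \dim$ of ambient variety implies nonempty intersection" is *false* for arbitrary closed subvarieties, but *true* for subvarieties defined in a projective (or proper) ambient variety when each is cut out by the right number of equations AND one has some positivity/connectedness — the cleanest route is to realize the Grassmannian as a projective variety and invoke that the intersection of closed subschemes, each of which is a degeneracy locus of a map of vector bundles, is nonempty when the sum of their codimensions does not exceed the dimension. This is a standard consequence of intersection theory on the Grassmannian: the relevant degeneracy-locus classes are products of Schubert classes (Giambelli–Thom–Porteous), which are *effective* and *nonzero*, and a product of effective nonzero classes in the Chow ring of a projective variety is nonzero provided the total degree is at most $\dim$, hence the intersection is nonempty. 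I would state this as a lemma (possibly cite Fulton's intersection theory or the Kleiman–Laksov / Giambelli–Porteous formula) and apply it with the $n$ (or more) degeneracy loci built from the $\pi_i|_W$. An alternative, more elementary route avoiding intersection theory: build $W$ greedily one basis vector at a time — but keeping *simultaneous* rank control across $n$ maps while adding vectors seems to need exactly the same codimension bookkeeping, so I don't expect it to be genuinely easier.

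**The main obstacle.** The hard part is the nonemptiness of the simultaneous degeneracy locus — ensuring that when I impose rank-deficiency conditions on $n$ different restricted maps at once, the resulting subvariety of the Grassmannian is actually nonempty (not just expected-dimensional). This is precisely where algebraic closedness is essential: over $\R$ or finite fields such loci can be empty, and indeed the theorem is false there (matching the construction direction, Theorem~\ref{thm:gk16-improved-intro}, which achieves the bound $\frac{d(k-d)}{s-d+1}$ over non-closed fields like $\F_q$). I'd need to confirm that the Schubert-class product is nonzero, which amounts to a combinatorial check on Young diagrams / Littlewood–Richardson coefficients, or — more robustly — to argue by upper-semicontinuity and a direct construction of *one* point in the intersection over $\bar\F$ (e.g., by exhibiting explicit subspaces $H_i$ and an explicit $W$ in general position realizing the rank pattern, then transferring via the fact that nonemptiness of a determinantal locus is intrinsic to $\Gr(d,k)$ and independent of the particular $H_i$'s as long as the $\pi_i$ are surjective). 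I suspect the cleanest writeup fixes $W$ to be a *coordinate* subspace and chooses which $k-d$ coordinates of $\F^k$ to quotient by, turning the whole thing into a purely combinatorial counting of how $H_i$'s can intersect coordinate subspaces — but verifying that such a combinatorial optimum reaches $\frac{d(k-d)}{s-d+1}$ for *every* code (every configuration of $H_i$) is exactly the content, and that averaging/pigeonhole step over the $\binom{k}{d}$ coordinate subspaces is the calculation I'd expect to be the technical heart.
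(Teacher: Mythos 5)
Your core mechanism is the same as the paper's proof of \Cref{thm:gk16-optimal}: view the candidate spaces $W$ as points of the Grassmannian $\Gr(d,\F^k)$, of dimension $d(k-d)$; note that each condition $\dim(H_i\cap W)\ge 1$ (equivalently $\operatorname{rank}(\pi_i|_W)\le d-1$) cuts out a closed subvariety of codimension $s-d+1$ (a Schubert variety, \Cref{cor:design-dim}); and conclude over an algebraically closed field that all $n$ loci meet once $n(s-d+1)\le d(k-d)$. Your worry about the nonemptiness step is the right one, and your proposed remedies (effectivity/nonvanishing of products of special Schubert classes, Thom--Porteous, or a Kleiman-translate argument) are exactly the content the paper packages as \Cref{prop:dim-intersect} together with \Cref{thm:Schubert-dim} and an induction on the number of loci; so up to this point your plan reproduces the paper's argument.

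Where you go astray is in deciding that $\sum_i\dim(H_i\cap W)\ge n$ is ``not the full bound'' and trying to push each coordinate to higher intersection dimension. First, that strengthening is not needed: the theorem is meant (and is formalized in \Cref{thm:gk16-optimal}) for the regime $n\ge\lfloor\frac{d(k-d)}{s-d+1}\rfloor$, where $\sum_i\dim(H_i\cap W)\ge n>\frac{d(k-d)}{s-d+1}-1$ already defeats the design; for $n$ much smaller the literal informal statement cannot hold by any method, since $\sum_i\dim(H_i\cap W)\le dn$, which is below $\frac{d(k-d)}{s-d+1}$ when, say, $n=1$ and $k>s+1$. Second, the bookkeeping in your proposed push is wrong: the locus $\{W:\dim(W\cap H_i)\ge j\}$ has codimension $j(s-d+j)$ (this is \Cref{thm:Schubert-dim}, and also your own determinantal formula $(d-r)(s_i-r)$ with $r=d-j$), so the marginal cost of the $j$-th unit of intersection at a fixed coordinate is $s-d+2j-1$, not $s-d+1$; the budget $d(k-d)$ does not purchase $\frac{d(k-d)}{s-d+1}$ units of rank deficiency if they are stacked at few coordinates. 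So the correct write-up is the first half of your proposal, stopping at ``each of the $n$ loci contributes $1$,'' with the nonemptiness of the intersection justified on the Grassmannian. Finally, the closing speculation about restricting $W$ to coordinate subspaces and averaging does not work: a coordinate $d$-space need not meet any of the $H_i$, and no pigeonhole over the $\binom{k}{d}$ coordinate subspaces can substitute for the algebraic-closedness argument (indeed \Cref{sec:counterexample} shows the conclusion genuinely fails over $\F_3$).
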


In particular, any improvements over \Cref{thm:gk16-improved-intro} must crucially use the fact that $\F_q$ fails to be algebraically closed. In \Cref{sec:counterexample}, we give a simple example over $\F_3$ of a (constant-sized) subspace design utilizing this property to beat \Cref{thm:gk16-improved-intro}.

\begin{remark}
A recent work by Santonastaso and Zullo~\cite{santonastaso2023subspace} also studies the optimal parameters for subspace designs over finite fields, with the motivation of achieving a better understanding of the parameters sum-rank codes. However, the regime they work in is rather different from ours. In particular, they are primarily interested in the regime for which $d = k-1$, although they allow the test space $U$ to be a vector space over an extension field. They also allow $\dim(U) + \dim(H_i) > k$, which we do not allow. In the most comparable regime of $s = d = k-1$, they get that $\mathcal C$ is not $(k-1, k-2)$ subspace designable over any field, which matches our bound for algebraically closed fields. Of note, \Cref{thm:code-construction-sub} by \cite{guruswami2016explicit} constructs a $(k-1, (k-1)^2)$ design in this regime, which is improved to the tight bound of $(k-1, k-1)$ by our \Cref{thm:gk16-improved-intro}.
\end{remark}

The proof of \Cref{thm:gk16-optimal-intro} makes nontrivial use of algebraic geometry. More precisely, we observe that the space of $d$-dimensional subspaces $W \subseteq \F^k$ can be viewed as the $d(k-d)$-dimensional \emph{Grassmannian} projective variety (see \cite{lakshmibai2015Grassmannian}). A subspace design constraint of the form $\dim(H_i \cap W) \ge 1$, carves out a what is known as a \emph{Schubert subvariety} of the Grassmannian. The key property of this Schubert subvariety is that it has codimension $s-d+1$. A key property of subvarieties over algebraically closed fields is that the codimension of the intersection of subvarieties is a a subadditive function of the codimensions of the respective subvarieties~(e.g., \cite{hartshorne1977Algebraic}). Thus, if we seek to rule out all $W$ in the Grassmanian as counterexamples to \Cref{def:subspace-design}, we need to intersect at least $\lfloor\frac{d(k-d)}{s-d+1}\rfloor + 1$ Schubert subvarieties, implying the subspace design parameter shouild be at least $\lfloor\frac{d(k-d)}{s-d+1}\rfloor > \frac{d(k-d)}{s-d+1}-1$.

\subsection*{Open Questions}

Given the many connections between our work and various parts of mathematics and computer science, there are a variety of directions to pursue.

\begin{itemize}
\item \Cref{thm:random-subspace-design} shows that near-optimal subspace designable codes exist over constant-sized fields. Can such codes be explicitly constructed?
\item Conversely, can \Cref{thm:gk16-optimal-intro} be extended to the setting of finite fields? Recall we show in \Cref{sec:counterexample} the the lower bound must to degrade to some extent, but is a lower bound of $(1-o(1))\frac{d(k-d)}{s-d+1}$ possible?
\item From our connection between local properties and rigidity matroids, a key open question is to prove or disprove \Cref{conj:birigid}. Either outcome has interesting ramifications for both coding theory and rigidity theory, see \Cref{rem:conj-pro-con} for further discussion.
\item We also emphasize the long-standing open question (since the 1800s~\cite{Maxwell,graver1993combinatorial}) of giving a deterministic polynomial time algorithm for detecting (non-bipartite) rigid graphs in $\R^3$. The methods used in this paper likely are insufficient to resolve this question, but further understanding on derandomizing (not necessarily local) properties of random linear codes could be critical to its resolution.
\end{itemize}

\subsection*{Organization}

In \Cref{sec:prelim}, we present some basic notation as well as discuss the theory of local properties by Levi, Mosheiff, and Shagrithaya~\cite{lms25}. In \Cref{sec:random-to-explicit}, we show that subspace design codes of slacked rate capture all local properties of random linear codes, establishing the first half of \Cref{thm:main-equiv}. In \Cref{sec:subspace-to-random}, we show that local properties shared by all subspace design codes also apply to random linear codes, proving the second half of \Cref{thm:main-equiv}. In \Cref{sec:matroid}, we adapt the theory of local properties to the study of quesitons in matroid theory, proving \Cref{thm:main-matroid}. In \Cref{sec:subspace-design}, we study the optimal parameters of subspace designs, proving \Cref{thm:gk16-optimal-intro} and \Cref{thm:gk16-improved-intro}.

\section{Preliminaries}\label{sec:prelim}

\paragraph{Notation.} We describe some basic notation. For a matrix $M\in \Sigma^{n\times m}$, we let $M[i,j]\in\Sigma,i\in[n],j\in[m]$ denote the entry at the $i$-th row and $j$-th column. We use $M[i:]\in\Sigma^m$ to denote the $i$-th row of $M$ and $M[:j]\in\Sigma^n$ to denote the $j$-th column of it. Assume $U,V$ are two linear spaces over field $\F$. We use $\dim(U)$ to denote $\F$-dimension of $U$. For a set of vectors $f_1,\dots f_m\in U$, we use $\Span(f_1,\dots,f_m)$ to denote the linear subspace of $U$ $\F$-spanned by $f_1,\dots,f_m$. Given a linear map $\psi\colon U\to V$, we let $\ker(\psi)\colon=\{f\in U\colon\psi(f)=0\}$ denote the kernel of $\psi$. By the first isomorphism theorem, $\ker(\psi)$ is an $\F$-linear subspace of $U$. 

\begin{definition}[Folded Wronskian, see \cite{guruswami2016explicit}]\label{def:wronskian} Let $f_1(X), \ldots, f_s(X) \in \mathbb{F}_q[X]$ and $\gamma \in \mathbb{F}_q^\times$. We define their $\gamma$-folded Wronskian $W_\gamma\left(f_1, \ldots, f_s\right)(X) \in \left(\mathbb{F}_q[X]\right)^{s\times s}$ by
$$
W_\gamma\left(f_1, \ldots, f_s\right)(X) \stackrel{\text { def }}{=}\left(\begin{array}{ccc}
f_1(X) & \ldots & f_s(X) \\
f_1(\gamma X) & \cdots & f_s(\gamma X) \\
\vdots & \ddots & \vdots \\
f_1\left(\gamma^{s-1} X\right) & \cdots & f_s\left(\gamma^{s-1} X\right)
\end{array}\right).
$$  
\end{definition}

It is well known that the nonsingularity of the folded Wronskian of a set of vectors characterizes the linear independence of these vectors, as stated below.
\begin{lemma}[Folded Wronskian criterion for linear independence, see \cite{guruswami2016explicit,guruswami2013linear}]\label{lem:wrons_ind} Let $k<q$ and $\vec{f_1}, \ldots, \vec{f_s} \in \mathbb{F}^k_q$. 
Let $\gamma$ be a generator of $\mathbb{F}_q^\times$.
Then $\vec{f_1}, \ldots, \vec{f_s}$ are linearly independent over $\mathbb{F}_q$ if and only if the folded Wronskian determinant $\operatorname{det} W_\gamma\left(f_1, \ldots, f_s\right)(X) \neq 0$.
\end{lemma}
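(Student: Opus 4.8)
The plan is to prove the two implications separately; the ``only if'' direction is routine, and the ``if'' direction---equivalently, that $\mathbb{F}_q$-linear independence of $f_1,\dots,f_s$ forces $\det W_\gamma(f_1,\dots,f_s)(X)\not\equiv 0$---carries all the content. (Throughout I identify a vector in $\mathbb{F}_q^k$ with the corresponding polynomial of degree $<k$, as implicit in \Cref{def:wronskian}.) For the easy direction, if $\sum_{i}c_i f_i=0$ is a nontrivial $\mathbb{F}_q$-relation, then substituting $X\mapsto\gamma^j X$ gives $\sum_i c_i f_i(\gamma^j X)=0$ for every $j$, so the columns of $W_\gamma(f_1,\dots,f_s)(X)$ satisfy a nontrivial $\mathbb{F}_q$-linear (hence $\mathbb{F}_q(X)$-linear) dependence, and $\det W_\gamma\equiv0$.

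For the hard direction I would induct on $s$, the case $s=1$ being immediate since $W_\gamma(f_1)(X)=f_1(X)$. In the inductive step, assume $f_1,\dots,f_s$ are $\mathbb{F}_q$-linearly independent, so by induction $W_{s-1}(X):=\det W_\gamma(f_1,\dots,f_{s-1})(X)\not\equiv0$, and suppose toward a contradiction that $\det W_\gamma(f_1,\dots,f_s)(X)\equiv0$. Since $W_{s-1}(X)$ is the leading $(s-1)\times(s-1)$ minor of $W_\gamma(f_1,\dots,f_s)(X)$, its first $s-1$ columns are $\mathbb{F}_q(X)$-independent, so the last column is an $\mathbb{F}_q(X)$-combination of them: there exist $a_1,\dots,a_{s-1}\in\mathbb{F}_q(X)$ with
\[
f_s(\gamma^j X)=\sum_{\ell=1}^{s-1}a_\ell(X)\,f_\ell(\gamma^j X)\qquad\text{for }j=0,1,\dots,s-1 .
\]
The crucial move is to ``differentiate'' this system: for $j=0,\dots,s-2$, apply $X\mapsto\gamma X$ to the $j$-th identity and subtract the $(j{+}1)$-st identity; the $f_s$ terms cancel, leaving $\sum_{\ell}\bigl(a_\ell(\gamma X)-a_\ell(X)\bigr)f_\ell(\gamma^{j+1}X)=0$. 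This is a homogeneous linear system in the unknowns $a_\ell(\gamma X)-a_\ell(X)$ whose coefficient matrix is $W_\gamma(f_1,\dots,f_{s-1})(\gamma X)$, with determinant $W_{s-1}(\gamma X)\not\equiv0$; hence $a_\ell(\gamma X)=a_\ell(X)$ for all $\ell$. Because $\gamma$ has multiplicative order $q-1$, the map $X\mapsto\gamma X$ generates a cyclic group of automorphisms of $\mathbb{F}_q(X)$ of order $q-1$; since $\mathbb{F}_q(X^{q-1})$ lies in the fixed field and $[\mathbb{F}_q(X):\mathbb{F}_q(X^{q-1})]=q-1$ (the polynomial $T^{q-1}-X^{q-1}$ being irreducible over $\mathbb{F}_q(X^{q-1})$ by Eisenstein at $X^{q-1}$), the fixed field is exactly $\mathbb{F}_q(X^{q-1})$, so $a_\ell\in\mathbb{F}_q(X^{q-1})$ for every $\ell$.

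It then remains to contradict independence using the degree bound $k<q$. View $\mathbb{F}_q(X)$ as an $\mathbb{F}_q(X^{q-1})$-vector space with basis $1,X,\dots,X^{q-2}$; since $\deg f_i\le k-1\le q-2$, each $f_i$ has coordinate vector in $\mathbb{F}_q^{q-1}$ with all entries in the constant subfield $\mathbb{F}_q$. The $j=0$ identity $f_s=\sum_\ell a_\ell f_\ell$ then says that the coordinate vector of $f_s$ lies in the $\mathbb{F}_q(X^{q-1})$-span of those of $f_1,\dots,f_{s-1}$; but a tuple of vectors in $\mathbb{F}_q^{q-1}$ that is $\mathbb{F}_q$-linearly independent stays linearly independent over any extension field (a maximal nonsingular square submatrix over $\mathbb{F}_q$ remains nonsingular), contradicting the assumed $\mathbb{F}_q$-independence of $f_1,\dots,f_s$. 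Hence $\det W_\gamma(f_1,\dots,f_s)(X)\not\equiv0$, closing the induction.

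I expect the main obstacle to be exactly this ``descent'' in the hard direction---turning a merely $\mathbb{F}_q(X)$-linear dependence among the shifted polynomials into an honest $\mathbb{F}_q$-linear dependence among the $f_i$ themselves. This is the step where both hypotheses are indispensable: ``$\gamma$ generates $\mathbb{F}_q^\times$'' is what pins the fixed field down to $\mathbb{F}_q(X^{q-1})$, and ``$k<q$'' is what guarantees $X^{q-1}$ never appears in the $f_i$, so their coordinates over $\mathbb{F}_q[X^{q-1}]$ are genuinely constant; relaxing either breaks the criterion (e.g.\ $f_1=1$ and $f_2=X^{q-1}$ are independent but have identically-zero folded Wronskian).
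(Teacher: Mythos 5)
Your proof is correct, but it follows a genuinely different route from the one this paper relies on. The paper itself only quotes \cref{lem:wrons_ind} from prior work; the substantive content (linear independence $\Rightarrow$ nonvanishing determinant) is reproved in this paper, in a slightly stronger form, as \cref{lem:Wronskian}(a): there one picks a basis of $\mathrm{Span}(f_1,\dots,f_s)$ with strictly increasing degrees $d_1<\dots<d_s<k$, notes that changing basis only rescales the determinant by a nonzero constant, and shows the coefficient of $X^{d_1+\cdots+d_s}$ equals a product of leading coefficients times a Vandermonde-type determinant in $\gamma^{d_1},\dots,\gamma^{d_s}$, which is nonzero as soon as the order of $\gamma$ is at least $k$. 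Your argument instead inducts on $s$, solves for the last column over $\F_q(X)$, eliminates $f_s$ by the $\gamma$-shift (the folded analogue of differentiation in the classical Wronskian proof), identifies the fixed field of $X\mapsto\gamma X$ as $\F_q(X^{q-1})$ via Artin's theorem plus the degree count $[\F_q(X):\F_q(X^{q-1})]=q-1$, and then uses invariance of rank under field extension together with $\deg f_i\le k-1\le q-2$ to descend the $\F_q(X)$-dependence to an $\F_q$-dependence; each step checks out (the coefficient matrix of your eliminated system is indeed $W_\gamma(f_1,\dots,f_{s-1})(\gamma X)$, nonsingular by induction). What each approach buys: the paper's leading-term computation is shorter, non-inductive, works verbatim under the weaker hypothesis $\mathrm{ord}(\gamma)\ge k$, and simultaneously yields the degree bound on $\det W_\gamma$ that drives the subspace-design counting in \cref{sec:gk16-improved}; your descent argument is more in the spirit of the classical Wronskian criterion, isolates exactly where the two hypotheses enter (as your closing remark observes, and it too only uses $\gamma$ through its order, so it also generalizes to $\mathrm{ord}(\gamma)\ge k$ with fixed field $\F_q(X^{\mathrm{ord}(\gamma)})$), but gives no degree information, which is what the later sections of the paper actually need.
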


\subsection{Local Profiles and Thresholds of Random Linear Codes}
In this section, we describe the work Levi, Mosheiff, and Shagrithaya \cite{lms25} on the theory of local profiles for random linear codes. We closely follow their notation.

Fix a finite field $\F_q$, block length $n$ and configuration parameter $b\ge1$. Let $\mathcal{L}(\mathbb{F}^b_q)$ denote the set of all linear subspaces of $\mathbb{F}^b_q$. Let $\Ldist(\F^b_q)$ denote the set of all linear subspaces $U$ of $\F^b_q$  whose basis matrix $B_U\in\F^{b\times\dim(U)}_q$ (This means columns of $B_U$ form a basis of $U$) does not contain repeated rows. This is equivalent to say that for any $i\neq j\in[b]$, there exists some $f\in U$ such that $f_i\neq f_j$.

We call $\mathcal{V}=(V_1,\dots,V_n)\in\mathcal{L}(\mathbb{F}^b_q)^n$ a 
$b$-local profile. For any 
$b$-local profile $\mathcal{V}$ and linear subspace $U\in \mathcal{L}(\mathbb{F}^b_q)$, \cite{lms25} use $\mathcal{M}_{\mathcal{V},U}$ to denote the set of matrices $M\in \mathbb{F}^{n\times b}_q$ such that 
\begin{itemize}
\item[(1)] $M$ has pairwise distinct columns. \item[(2)] for any $i\in[n]$, the $i$-th row $M_i$ of $M$ satisfies $M_i\in V_i\cap U$. 
\item[(3)] The row span of $M$ is $U$.
\end{itemize}

They use $\cM^*_{\cV,U}$ to denote the set of matrices that satisfy condition (2). For any $\mathbb{F}_q$-linear code $\mathcal{C}\subseteq \mathbb{F}^n_q$, we say $\mathcal{C}$ contains $(\mathcal{V},U)$ if there is some $M\in \mathcal{M}_{\mathcal{V},U}$ such that each column of $M$ is in $\mathcal{C}$. We say $\mathcal{C}$ contains $\mathcal{V}$ if there is some $U\in \mathcal{L} (\mathbb{F}^b_q)$ such that $\mathcal{C}$ contains $(\mathcal{V},U)$. \cite{lms25} defines the potential function of $(\mathcal{V},U)$ as
\begin{equation}\label{eq:potential}
\Phi(\mathcal{V},U,R)=-n\dim(U)+\sum^n_{i=1}\dim(V_i\cap U)+Rn\dim(U).
\end{equation}
Moreover, \cite{lms25} defines  $R_{\mathcal{V}}$ of a $b$-local profile $\mathcal{V}$ as follows.
\[
R_{\cV}=\max\{R\in[0,1]\colon \forall U\in \Ldist(\F^b_q),\exists W\subsetneq U,\text{ s.t. }\Phi(\cV,U,R)-\Phi(\cV,W,R)\leq 0\}.
\]
We observe the following simple fact about $R_{\cV}$.
\begin{fact}\label{fac:monotone}
For any $b$-local profile $\mathcal{V}$, rate $R\in[0,1]$, and subspace $U\in\Ldist(\F^b_q)$, if $R\leq R_{\mathcal{V}}$, there exists a proper subspace $W\subsetneq U$ such that $\Phi(\mathcal{V},U,R)-\Phi(\mathcal{V},W,R)\leq 0$.
\end{fact}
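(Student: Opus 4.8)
The plan is to exploit that the potential function $\Phi(\mathcal{V},U,R)$ is \emph{affine} in the rate $R$, with coefficient $n\dim(U)$ on $R$. Consequently, for any fixed pair of subspaces $W\subsetneq U$, the difference $\Phi(\mathcal{V},U,R)-\Phi(\mathcal{V},W,R)$ is an affine function of $R$ whose slope equals $n\bigl(\dim(U)-\dim(W)\bigr)>0$, since $W$ is a \emph{proper} subspace of $U$. In particular, for fixed $U$ and $W\subsetneq U$, the quantity $\Phi(\mathcal{V},U,R)-\Phi(\mathcal{V},W,R)$ is non-decreasing (indeed strictly increasing) in $R$.

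First I would record that $R_{\mathcal{V}}$ genuinely attains the maximum defining it, i.e., writing
\[
S:=\bigl\{R\in[0,1]\ :\ \forall U\in\Ldist(\F^b_q),\ \exists W\subsetneq U,\ \Phi(\mathcal{V},U,R)-\Phi(\mathcal{V},W,R)\le 0\bigr\},
\]
we have $R_{\mathcal{V}}\in S$. For each of the finitely many choices of $U\in\Ldist(\F^b_q)$ and $W\subsetneq U$, the inequality $\Phi(\mathcal{V},U,R)-\Phi(\mathcal{V},W,R)\le 0$ cuts out a closed sub-interval of $[0,1]$ in the variable $R$ (by affineness, with slope $>0$); taking the finite union over such $W$ and then the finite intersection over $U$ shows that $S$ is closed, so its supremum $R_{\mathcal{V}}$ lies in $S$. (Nonemptiness of $S$, which is needed for $R_{\mathcal{V}}$ to be well-defined in the first place, is routine: at $R=0$ and any nondegenerate $U$ one may take $W$ to be a hyperplane of $U$, so that $\dim(V_i\cap U)-\dim(V_i\cap W)\le 1$ for every $i$ forces $\Phi(\mathcal{V},U,0)-\Phi(\mathcal{V},W,0)\le -n+n=0$.)

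Then I would conclude by downward closure of $S$. Fix $R\le R_{\mathcal{V}}$ and an arbitrary $U\in\Ldist(\F^b_q)$. Applying the membership $R_{\mathcal{V}}\in S$ to this $U$ yields a proper subspace $W\subsetneq U$ with $\Phi(\mathcal{V},U,R_{\mathcal{V}})-\Phi(\mathcal{V},W,R_{\mathcal{V}})\le 0$. Since for this fixed pair $W\subsetneq U$ the difference is non-decreasing in $R$ and $R\le R_{\mathcal{V}}$, the \emph{same} $W$ satisfies $\Phi(\mathcal{V},U,R)-\Phi(\mathcal{V},W,R)\le\Phi(\mathcal{V},U,R_{\mathcal{V}})-\Phi(\mathcal{V},W,R_{\mathcal{V}})\le 0$, which is exactly the claimed statement.

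There is no substantive obstacle here; the Fact is essentially immediate once one observes that $\Phi(\mathcal{V},\cdot,R)$ depends affinely on $R$ with a nonnegative coefficient. The only step needing a sentence of care is verifying that the maximum in the definition of $R_{\mathcal{V}}$ is attained (so that $R_{\mathcal{V}}$ itself certifies the property), which reduces to $S$ being a finite Boolean combination of closed half-lines in $R$.
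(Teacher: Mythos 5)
Your proposal is correct and follows essentially the same route as the paper, which proves the fact in one line by noting that $\frac{\mathsf{d}}{\mathsf{d}R}\bigl(\Phi(\mathcal{V},U,R)-\Phi(\mathcal{V},W,R)\bigr)=n(\dim(U)-\dim(W))\ge 0$ and invoking the definition of $R_{\mathcal{V}}$. Your additional care in checking that the maximum defining $R_{\mathcal{V}}$ is attained (via closedness of the feasible set of rates) is a reasonable elaboration of what the paper leaves implicit, and does not constitute a different argument.
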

\begin{proof}
It follows from the definition of $R_{\mathcal{V}}$ and the fact that for any $W\subsetneq U$, $\frac{\mathsf{d}(\Phi(\mathcal{V},U,R)-\Phi(\mathcal{V},W,R))}{\mathsf{d}R}=n(\dim(U)-\dim(W))\ge 0$.
\end{proof}
One of the main contributions of \cite{lms25} is that they show $R_{\mathcal{V}}$ is exactly the rate threshold for random $\F_q$-linear code to contain $\mathcal{V}$, which we quote as follows.
\begin{theorem}[{\cite[Theorem 4.4]{lms25}}]\label{thm:threshold}
Let $\mathcal{V}=(V_1,\dots,V_n)\in\mathcal{L}(\F^b_q)^n$ be a $b$-local profile. Let $\mathcal{C}\subseteq \F^n_q$ be a random $\F_q$-linear code of rate $R$. The following holds.
\begin{itemize}
\item[(1)] If $R\ge R_{\mathcal{V}}+\eps$, then $\Pr[\mathcal{C} \text{ contains }\mathcal{V}]\ge 1-q^{-\eps n+b^2}$.
\item[(2)] If $R\leq R_{\mathcal{V}}-\eps$, then $\Pr[\mathcal{C} \text{ contains }\mathcal{V}]\leq q^{-\eps n+b^2}$.
\end{itemize}
\end{theorem}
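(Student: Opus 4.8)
The plan is to prove the two bounds by the first‑ and second‑moment methods respectively, using the potential $\Phi$ and \cref{fac:monotone} to control the relevant expectations. Two elementary facts are used throughout. First, if $M\in\F_q^{n\times b}$ has row span $U\in\mathcal{L}(\F_q^b)$, then $\operatorname{rank}(M)=\dim(U)$, so the event ``every column of $M$ lies in $\mathcal{C}$'' is the event ``the $\dim(U)$‑dimensional column span $D_M\subseteq\F_q^n$ of $M$ lies in $\mathcal{C}$''. Second, for a uniformly random rate‑$R$ code $\mathcal{C}\subseteq\F_q^n$ and a fixed $d$‑dimensional $D\subseteq\F_q^n$, one has $\Pr[D\subseteq\mathcal{C}]=q^{-d(1-R)n}$ up to a factor depending only on $q,b$, and more generally $\Pr[D'\subseteq\mathcal{C}\mid D\subseteq\mathcal{C}]=q^{-\dim((D'+D)/D)\,(1-R)n}$ up to such a factor.

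\emph{Part (2), first moment.} Suppose $R\le R_{\mathcal{V}}-\eps$. Since $\mathcal{M}_{\mathcal{V},U}=\emptyset$ unless $U\in\Ldist(\F_q^b)$ (two equal rows of $B_U$ would force two equal columns in any admissible matrix), and $\F_q^b$ has at most $q^{b^2}$ subspaces, it suffices to bound, for each $U\in\Ldist(\F_q^b)$, the probability that $\mathcal{C}$ contains $(\mathcal{V},U)$. By Markov's inequality and the facts above,
\[
\Pr[\mathcal{C}\text{ contains }(\mathcal{V},U)]\le\Ex\bigl[\#\{M\in\mathcal{M}_{\mathcal{V},U}:D_M\subseteq\mathcal{C}\}\bigr]\le|\mathcal{M}_{\mathcal{V},U}|\cdot q^{-\dim(U)(1-R)n+O(1)}\le q^{\Phi(\mathcal{V},U,R)+O(1)},
\]
using $|\mathcal{M}_{\mathcal{V},U}|\le\prod_i q^{\dim(V_i\cap U)}$. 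Iterating \cref{fac:monotone} from $U$ along a strictly descending chain of subspaces until reaching $\{0\}$, where $\Phi(\mathcal{V},\{0\},R)=0$ (a mild extension of the fact covers intermediate subspaces that may fall outside $\Ldist(\F_q^b)$, by the same monotonicity computation combined with a reduction merging always‑equal coordinates), yields $\Phi(\mathcal{V},U,R_{\mathcal{V}})\le0$; since $\Phi$ is affine in $R$ with slope $n\dim(U)\ge n$, this gives $\Phi(\mathcal{V},U,R)\le-\eps n$, and a union bound over the $\le q^{b^2}$ choices of $U$ finishes part (2).

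\emph{Part (1), second moment.} Suppose $R\ge R_{\mathcal{V}}+\eps$. Because $R_{\mathcal{V}}+\tfrac{\eps}{2}>R_{\mathcal{V}}$, the condition defining $R_{\mathcal{V}}$ fails at rate $R_{\mathcal{V}}+\tfrac{\eps}{2}$, so there is $U^*\in\Ldist(\F_q^b)$ with $\Phi(\mathcal{V},U^*,R_{\mathcal{V}}+\tfrac{\eps}{2})>\Phi(\mathcal{V},W,R_{\mathcal{V}}+\tfrac{\eps}{2})$ for all $W\subsetneq U^*$; since $\Phi(\mathcal{V},U^*,R)-\Phi(\mathcal{V},W,R)$ is affine in $R$ with nonnegative slope $n(\dim(U^*)-\dim(W))$, at our rate $R$ we obtain the quantitative ``isoperimetric'' gap $\Phi(\mathcal{V},U^*,R)-\Phi(\mathcal{V},W,R)\ge\tfrac{\eps}{2}n$ for every $W\subsetneq U^*$, and (taking $W=\{0\}$) $\Phi(\mathcal{V},U^*,R)\ge\tfrac{\eps}{2}n$. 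This gap also makes $\mathcal{M}_{\mathcal{V},U^*}$ robust: for any codimension‑one $H\subsetneq U^*$ it says more than $(1-R)n$ indices $i$ satisfy $V_i\cap U^*\not\subseteq H$, so the fraction of matrices with rows in the $V_i\cap U^*$ that fail to have full row span $U^*$ or distinct columns is at most $q^{O(1)}q^{-(1-R)n}=o(1)$; hence $|\mathcal{M}_{\mathcal{V},U^*}|=(1-o(1))\prod_i q^{\dim(V_i\cap U^*)}$ and, for $X:=\#\{M\in\mathcal{M}_{\mathcal{V},U^*}:D_M\subseteq\mathcal{C}\}$, $\Ex[X]=(1-o(1))q^{\Phi(\mathcal{V},U^*,R)}\ge q^{\eps n/3}$. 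Conditioning on $M$, $\Ex[X^2]=\sum_M\Pr[D_M\subseteq\mathcal{C}]\sum_{M'}\Pr[D_{M'}\subseteq\mathcal{C}\mid D_M\subseteq\mathcal{C}]$ with $\Pr[D_{M'}\subseteq\mathcal{C}\mid D_M\subseteq\mathcal{C}]=q^{-(\dim(U^*)-j)(1-R)n+O(1)}$ for $j=\dim(D_M\cap D_{M'})$. Grouping the inner sum by $j$: the $j=0$ block contributes $(1+o(1))\Ex[X]^2$ in total; for $1\le j\le\dim(U^*)$, a rank argument (the crux, below) identifies the overlap with a proper subspace $W\subsetneq U^*$ of dimension $\dim(U^*)-j$ — of which there are only $q^{O(1)}$ candidates — and bounds the number of partners $M'$ matching a given such $W$ by $q^{O(1)}\prod_i q^{\dim(V_i\cap W)}$, so that block contributes at most $q^{O(1)}\max_{W\subsetneq U^*}q^{\Phi(\mathcal{V},W,R)}\cdot\Ex[X]\le\Ex[X]^2\,q^{-\eps n/2+O(1)}$ by the isoperimetric gap. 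Summing the $O(1)$ blocks gives $\Ex[X^2]\le(1+q^{-\Omega(\eps n)})\Ex[X]^2$, so by Chebyshev $\Pr[X=0]\le\Var[X]/\Ex[X]^2\le q^{-\Omega(\eps n)}$; since $X>0$ forces $\mathcal{C}$ to contain $(\mathcal{V},U^*)$ and hence $\mathcal{V}$, part (1) follows, with constants tuned to give $1-q^{-\eps n+b^2}$.

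\emph{Main obstacle.} The delicate step is the classification and counting inside the second moment: one must show precisely that the column‑span overlap of a pair $(M,M')$ of admissible matrices for $U^*$ is governed by a sub‑configuration of $\mathcal{V}$ supported on a proper subspace $W\subsetneq U^*$, bound the number of partners $M'$ of a given overlap type by essentially $\prod_i q^{\dim(V_i\cap W)}$ (this is where the row‑span/column‑span duality and a rank computation over $\F_q$ enter), and verify that the isoperimetric gap $\Phi(\mathcal{V},U^*,R)-\Phi(\mathcal{V},W,R)\ge\tfrac{\eps}{2}n$ exactly dominates this correlation. A secondary bookkeeping nuisance is keeping the $\Ldist(\F_q^b)$‑versus‑$\mathcal{L}(\F_q^b)$ distinction straight when iterating \cref{fac:monotone} and enumerating sub‑configurations, together with the affine‑in‑$R$ accounting that converts the strict inequalities defining $R_{\mathcal{V}}$ into $\Omega(\eps n)$‑sized gaps.
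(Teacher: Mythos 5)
The paper never proves \cref{thm:threshold}; it is imported verbatim from \cite[Theorem 4.4]{lms25}, so your attempt can only be measured against the argument there and on its own terms. Your Part (1) is, in outline, the standard (and, for this containment direction, essentially the correct) second-moment route, and the step you single out as the crux does go through the way you hope: fixing $M$ and the subspace $T=\{v:\,X'v\in\operatorname{colspan}(M)\}\subseteq\F_q^t$ (at most $q^{b^2}$ choices), after a basis change the columns of $X'$ indexed by $T$ lie in the $t$-dimensional space $\operatorname{colspan}(M)$ (at most $q^{b^2}$ choices), while for each row the remaining entries range over a coset of a space isomorphic to $V_i\cap W_T$ for a proper subspace $W_T\subsetneq U^*$ of dimension $\dim(U^*)-j$; this yields your bound $q^{O(1)}\prod_i q^{\dim(V_i\cap W_T)}$ and the domination by the gap $\Phi(\mathcal V,U^*,R)-\Phi(\mathcal V,W,R)\ge \eps n$. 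Only minor bookkeeping (extracting a single $U^*$ with the full gap $\eps n\,(\dim U^*-\dim W)$ by letting the auxiliary rate tend to $R_{\mathcal V}$, so as to reach the stated exponent $-\eps n+b^2$ rather than $-\eps n/2+O(b^2)$) remains.

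Part (2), however, has a genuine gap: the intermediate claim that $R\le R_{\mathcal V}-\eps$ forces $\Phi(\mathcal V,U,R)\le-\eps n$ for every $U\in\Ldist(\F^b_q)$ is false, and no ``merging of always-equal coordinates'' can repair it, because the inequality itself fails. Take $b=2$ and $V_i=\Span((1,1))$ for all $i\in[n]$. Containing $\mathcal V$ would require two \emph{distinct} codewords agreeing in every coordinate, so no code contains $\mathcal V$, and indeed $R_{\mathcal V}=1$: for $U=\F_q^2$ the subspace $W=\Span((1,1))$ gives $\Phi(\mathcal V,U,R)-\Phi(\mathcal V,W,R)=(R-1)n\le 0$, and every line in $\Ldist$ is dominated by $\{0\}$. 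Yet $\Phi(\mathcal V,\F_q^2,R)=(2R-1)n>0$ for all $R>1/2$, so at $R=R_{\mathcal V}-\eps$ with $\eps<1/2$ your union-bound exponent is positive and the Markov step gives nothing. The loss is exactly in bounding $|\mathcal M_{\mathcal V,U}|$ by $\prod_i q^{\dim(V_i\cap U)}$: almost all matrices with rows in $V_i\cap U$ have row span inside the non-distinct subspace $\Span((1,1))$ (equal columns) and are not in $\mathcal M_{\mathcal V,U}$ at all, and your chain iteration breaks precisely because such intermediate subspaces lie outside $\Ldist(\F_q^b)$, where the definition of $R_{\mathcal V}$ says nothing. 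The correct first-moment argument uses the dominating subspace in one step instead of iterating to $\{0\}$: for each $U\in\Ldist(\F_q^b)$ take the $W\subsetneq U$ that witnesses the definition of $R_{\mathcal V}$ at rate $R_{\mathcal V}$; if $\mathcal C$ contains $(\mathcal V,U)$ via $M$, project the rows of $M$ modulo $W$ — the projected matrix has full row rank $\dim(U)-\dim(W)$, rows in $(V_i\cap U+W)/W$, and column span contained in that of $M$, hence in $\mathcal C$ — so $\Pr[\mathcal C\text{ contains }(\mathcal V,U)]\le q^{\Phi(\mathcal V,U,R)-\Phi(\mathcal V,W,R)+O(b^2)}\le q^{-\eps n+O(b^2)}$ by affineness in $R$, and a union bound over the at most $q^{b^2}$ choices of $U$ finishes. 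This one-step quotient is what your descending chain was trying to emulate, and it sidesteps the $\Ldist$ issue entirely.
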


\section{All Subspace Designable Codes Locally Simulate RLCs}\label{sec:random-to-explicit}

Fix $s,b,n\ge 1$ and finite field $\F_q$. For any $s$-folded $\F_q$-linear code $\mathcal{C}\subseteq (\F^s_q)^n$ and $b$-local profile $\mathcal{V}=(V_1,\dots,V_n)\in\mathcal{L}(\mathbb{F}^b_q)^n$, let $\mathcal{C}'\subseteq \F^{sn}_q$ denote the ``unfolded'' $\F_q$-linear code $\mathcal{C}$ with block length $sn$ and define the $s$-duplicated $b$-local profile $\mathcal{V}^{(s)}:=(V_1,\dots,V_1,V_2,\dots,V_2,\dots,V_n,\dots,V_n)\in\mathcal{L}(\F^b_q)^{sn}$ to be the $b$-local profile with length $sn$ derived from $\mathcal{V}$ such that each $V_i$ is repeated $s$ times.  We say $\mathcal{C}$ contains $\mathcal{V}$ iff $\mathcal{C}'$ contains $\mathcal{V}^{(s)}$.

The main result of this section is that nearly-optimal subspace designable codes simultaneously avoid all local profiles that a random linear code could avoid with high probability, with a cost of an arbitrarily small rate decrease. This establishes the harder direction of the local equivalence between subspace desgins and random linear codes. 

For any $b$-local profile $\mathcal{V}=(V_1,\dots,V_n)\in\mathcal{L}(\F^b_q)^n$, from \cref{thm:threshold} we know that random $\F_q$-linear code with rate arbitrarily close to $R_{\mathcal{V}}$ could avoid $\mathcal{V}$ with high probability. Therefore, our target is the following.
\begin{theorem}\label{thm:FRS-threshold}
Let $s,b\ge 1,\mu>0$ and $\mathcal{C}\subseteq (\F^s_q)^n$ be a $\mu$-slacked $b$-subspace designable code with rate $R=k/(sn)$. For any $b$-local profile $\mathcal{V}=(V_1,\dots,V_n)\in\mathcal{L}(\F^b_q)^n$, if $R\leq R_{\mathcal{V}}-\mu-1/n$, then $\mathcal{C}$ does not contain $\mathcal{V}$.  
\end{theorem}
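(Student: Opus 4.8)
The plan is to show the contrapositive: if $\calC$ contains $\calV$, then $R > R_{\calV} - \mu - 1/n$. So suppose $\calC' \subseteq \F_q^{sn}$ contains the duplicated profile $\calV^{(s)}$, witnessed by a matrix $M \in \F_q^{sn \times b}$ with pairwise distinct columns, row span some $U \in \mathcal{L}(\F_q^b)$, and $M_j \in V_{\lceil j/s\rceil} \cap U$ for each row index $j \in [sn]$, where each column of $M$ lies in $\calC'$. We may assume $U \in \Ldist(\F_q^b)$: if two coordinates $i \ne i'$ of $\F_q^b$ agreed on all of $U$, every column of $M$ would agree in those coordinates, but the columns are codewords of $\calC'$ and this would only help — more carefully, one passes to the profile restricted to a subspace witnessing distinctness, or observes that the distinct-columns condition forces $\dim U \ge 1$ and one works with the image of $U$ under the projection killing redundant coordinates. (This is exactly the kind of reduction handled in \cite{lms25}; I'd cite it rather than reprove it.)

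The heart of the argument is a dimension count linking the subspace-design property of $\calC$ to the potential function $\Phi$. For each $i \in [n]$, let $H_i \subseteq \F_q^k$ be the subspace from \Cref{def:designcodes}, i.e.\ $H_i = \{f : \calC(f)_i = 0 \in \F_q^s\}$. The columns of $M$ correspond to $\dim(U)$ linearly independent messages $f_1,\dots,f_{\dim U} \in \F_q^k$ spanning some subspace $F \subseteq \F_q^k$ with $\dim F = \dim U$ (independence because the columns of $M$ are distinct and their span has dimension $\dim U$; the encoder is injective on the relevant quotient). Now fix a block $i \in [n]$. The rows $M_{(i-1)s+1}, \dots, M_{is}$ of $M$ all lie in $V_i \cap U$, and they record the values $\calC(f_t)_i \in \F_q^s$ across $t$. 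The number of rows among these $s$ that are \emph{zero} is exactly $\dim(F \cap H_i)$ — wait, more precisely: the map $F \to (\F_q^s)$, $f \mapsto \calC(f)_i$, has kernel $F \cap H_i$, so its image has dimension $\dim(U) - \dim(F \cap H_i)$, and this image (read through the $\dim U$ basis columns) is what the block-$i$ rows of $M$ span inside $V_i \cap U$. Hence for each block, the rows coming from block $i$ that are nonzero contribute, and summing the rank contributions: the total over all $sn$ rows satisfies $\sum_{j=1}^{sn} [\text{row } j \text{ nonzero-ish}]$... The clean inequality I want is
\[
\sum_{i=1}^{n} \bigl(\dim U - \dim(F\cap H_i)\bigr) \cdot (\text{something}) \ \text{vs.}\ \sum_i \dim(V_i \cap U).
\]
Let me state it correctly: since the block-$i$ rows of $M$ lie in $V_i \cap U$ and each is either zero (when the corresponding functional kills $f$) or not, and since $s \cdot \dim(V_i\cap U) \ge$ (number of distinct nonzero rows patterns is too weak) — the right bookkeeping is: the $s$ rows from block $i$, as a multiset of vectors in $V_i \cap U$, span a space of dimension $\le \dim(V_i\cap U)$, but we also know at least (roughly) $\dim U - \dim(F \cap H_i)$ of the "coordinates" are nonzero. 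The correct accounting, following the $s$-duplicated-profile setup, is that $\calC'$ contains $\calV^{(s)}$ forces, via \Cref{thm:threshold}(2) applied to $\calV^{(s)}$ and the fact that $\calC'$ is not random, a comparison of potentials; but since we cannot invoke randomness for an explicit code, instead we argue directly: containment of $\calV^{(s)}$ plus $R_{\calV^{(s)}} = R_{\calV}$ (the potential of the duplicated profile at length $sn$ equals $s$ times the original, and $R_{\calV^{(s)}}=R_{\calV}$ by \Cref{fac:monotone}-type reasoning) — and then the subspace-design bound $\sum_i \dim(W \cap H_i) \le (R+\mu) b n$ ...

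Concretely, here is the chain I would push through. Containment of $\calV^{(s)}$ by $\calC'$ means, for the subspace $U$: for every nonzero $v \in U^\perp$-complement direction... rather: look at $\Phi(\calV^{(s)}, U', R)$ for subspaces $U' \le U$. Since $R \le R_{\calV} - \mu - 1/n < R_{\calV} = R_{\calV^{(s)}}$, \Cref{fac:monotone} gives a proper $W \subsetneq U$ with $\Phi(\calV^{(s)}, U, R) \le \Phi(\calV^{(s)}, W, R)$, i.e.
\[
-sn\dim U + \sum_{i=1}^{n} s\dim(V_i\cap U) + Rsn\dim U \ \le\ -sn\dim W + \sum_{i=1}^{n} s\dim(V_i\cap W) + Rsn\dim W.
\]
Dividing by $s$ and rearranging: $(1-R)n(\dim U - \dim W) \le \sum_i (\dim(V_i\cap U) - \dim(V_i\cap W))$. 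On the other hand, I claim the subspace-design hypothesis gives the \emph{reverse} strict inequality, yielding a contradiction. For this, take $F \subseteq \F_q^k$ the message space of the witnessing columns as above, with $\dim F = \dim U$; let $W' \subsetneq F$ be a proper subspace with $\dim W' = \dim W$ chosen to "match" $W$ under the row-span identification $F \cong U$ (push $W$ back through the iso). Because each column of $M$ is a codeword, for each block $i$ the zero/nonzero pattern is governed by $H_i$: the rows from block $i$ sitting in $V_i \cap U$ are precisely the images under $f \mapsto \calC(f)_i$, restricted to the $\dim U$ basis directions. So $\dim(V_i \cap U) \ge \dim U - \dim(F\cap H_i)$ and similarly $\dim(V_i\cap W) \ge \dim W - \dim(W'\cap H_i)$, but I need them in the opposite direction for one of the two. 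The trick (this is the real content, mirroring \cite{lms25}'s RS argument): since the $s$ rows within block $i$ all lie in $V_i\cap U$, and they are the $s$ "folded" evaluations, we get $\dim(V_i\cap U) - \dim(V_i\cap W) \le \dim(F\cap W') $-contributions$ + \dim(H_i \cap \text{stuff})$...

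\textbf{Where the work is.} The main obstacle, and where I'd spend the most care, is precisely this step: converting the containment witness $M$ (a combinatorial object, rows in $V_i$) into the \emph{upper} bound $\sum_i(\dim(V_i\cap U) - \dim(V_i\cap W)) \le (1-R-\mu-1/n)\,n\,(\dim U - \dim W)$ using only the $(d', (R+\mu)d'n)$ subspace-design property for $d' = \dim U - \dim W \le b$. The right object is the quotient $U/W \cong F/W'$, a space of dimension $d' \le b$; one shows $\sum_i \dim\bigl((V_i\cap U)/(V_i\cap W)\bigr) \le \sum_i \dim\bigl(\pi(V_i\cap U)\bigr)$ where $\pi: \F_q^b \to \F_q^b/W$-ish, and then bounds $\sum_i \dim(\pi(V_i\cap U))$ by relating it, coordinate-block by coordinate-block through the encoder $\calC$, to $\sum_i \dim\bigl((F/W')\cap H_i'\bigr)$ for an appropriate $d'$-dimensional test space, at which point the $\mu$-slacked $d'$-subspace-design bound $\le (R+\mu)d'n$ kicks in and gives $\sum_i(\dim(V_i\cap U)-\dim(V_i\cap W)) \le d'n - (R+\mu)d'n = (1-R-\mu)d'n$. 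Combining with the potential inequality $(1-R)n\,d' \le \sum_i(\dim(V_i\cap U)-\dim(V_i\cap W)) \le (1-R-\mu)n d'$ forces $0 \le -\mu n d'$, contradicting $\mu > 0$ and $d' \ge 1$; the extra $1/n$ slack in the hypothesis absorbs the "$+1$" in the definition of $(d', Rd'n+1)$ versus $(d',(R+\mu)d'n)$ and any rounding in passing from $R_{\calV}$ to a usable strict gap. I would double-check the direction of every inequality and the exact constant accounting at the end; everything else is bookkeeping with first-isomorphism-theorem dimension counts.
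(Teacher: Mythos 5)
Your high-level plan is the paper's plan (witness $\Rightarrow$ a subspace $U\in\Ldist(\F^b_q)$, \cref{fac:monotone} $\Rightarrow$ a proper $W\subsetneq U$, then play a $d'$-dimensional test space in the message space against the $\mu$-slacked design bound, with the $1/n$ slack absorbing integrality), but as written the two key inequalities are both asserted in the wrong direction, and the one step you yourself flag as ``where the work is'' is exactly the step you do not carry out. Concretely: $\Phi(\mathcal{V},U,R')\le\Phi(\mathcal{V},W,R')$ rearranges to
\[
\sum_{i=1}^n\bigl(\dim(V_i\cap U)-\dim(V_i\cap W)\bigr)\ \le\ (1-R')\,n\,(\dim U-\dim W),
\]
an \emph{upper} bound, not the lower bound $(1-R)nd'\le\sum_i(\cdots)$ you state. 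Symmetrically, the subspace-design hypothesis cannot give your claimed upper bound $\sum_i(\dim(V_i\cap U)-\dim(V_i\cap W))\le(1-R-\mu)nd'$: the design controls $\sum_i\dim(N'\cap\ker\pi_i)$ from \emph{above} by $(R+\mu)d'n$, and since the per-coordinate relation one can actually prove is $\dim(N'\cap\ker\pi_i)\ \ge\ (\dim U-\dim(U\cap V_i))-(\dim W-\dim(W\cap V_i))$, the design yields the \emph{lower} bound $\sum_i(\dim(V_i\cap U)-\dim(V_i\cap W))\ge(1-R-\mu)nd'$. The correct contradiction is $(1-R-\mu)nd'\le(1-R-\mu-1/n)nd'$; your chain, with both directions flipped, is not a consequence of the stated hypotheses even though it superficially ends in a contradiction of the same shape.

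The missing idea is the construction that makes the corrected chain work, and it is the technical core of the theorem rather than ``bookkeeping with first-isomorphism-theorem dimension counts.'' One must produce, from $W\subsetneq U$, an explicit subspace $N'\subseteq\F^k_q$ of dimension $d'=\dim U-\dim W\le b$ in the message space and prove a per-coordinate statement tying it to the profile: in the paper this is done by writing the witnesses $f_1,\dots,f_b$ in a basis of $F=\Span(f_1,\dots,f_b)$ via a coefficient matrix $M\in\F^{b\times t}_q$ (whose column span is $U$), building a surjection $\psi$ with $\ker(\psi\circ\varphi)=W^{\bot}$ for $\varphi(v)=M^{\top}v$, setting $N'=\varphi_F(\ker\psi)$, and proving the containment $\varphi_F(\varphi(V_i^{\bot})\cap\ker\psi)\subseteq N'\cap\ker(\pi_i)$ together with the rank identity $\dim(\varphi(V_i^{\bot})\cap\ker\psi)=(\dim U-\dim(U\cap V_i))-(\dim W-\dim(W\cap V_i))$ (the paper's \cref{clm:blackbox} and the subsequent $\rank(M^{\top}D_i)-\rank(HM^{\top}D_i)$ computation). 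Your quotient $U/W\cong F/W'$ heuristic gestures at this, but without that construction and containment the link between the design sum and the potential difference is unproven, and with your inequality directions it would in any case be used backwards. Minor additional slips: the witness columns are $b$ distinct, not necessarily independent, codewords (so $\dim F=\dim U$ needs the injectivity of the encoder on $F$, not ``distinct columns''), and the reduction to $U\in\Ldist(\F^b_q)$ is immediate from distinctness of the columns rather than something to outsource to \cite{lms25}.
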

\paragraph{Technical Overview.} Our proof is initially inspired by the similarity between the definition of potential function (\ref{eq:potential}) and the defining inequaliltes of subspace designs. When the rate $R<R_{\mathcal{V}}$ is smaller than the threshold, we know that for any ``coordinate configuration'' linear subspace $U\subseteq \F^b_q$, it must have a proper linear sbuspace $W$ such that $\Phi(U,\mathcal{V},R)< \Phi(W,\mathcal{V},R)$. In previous work on local properties \cite{lms25}, the main proof strategy was to exclude the possibility that the ``bad list of codewords'' could span $U$ using their ``vectors on coordinates'' simultaneously for all $U$. They use the critical subspace $W\subsetneq U$ to project the original local profile $\mathcal{V}$ onto $W^{\bot}$ to get a new local profile $\mathcal{V}'$, and then reduce the original problem to the smaller linear subspace $W^{\bot}$ with respect to $\mathcal{V}'$, and calculate the potentials for this case and complete the proof.

However, if we wish to establish the result from subspace designs, we have to find a linear space $N\subseteq \F^k_q$ in the \textbf{message space}, which resides in $\F^k_q$, to trigger the defining inequality of subspace designs. As discussed above, \cite{lms25} primarily works over the \textbf{coordinate configuration} linear subspace $U\subseteq \F^b_q$, which resides in $\F^b_q$. Therefore, it was not clear how to find the ``working linear subspace'' in the message space, so we need a different proof strategy. Unlike \cite{lms25} that simultaneously\footnote{In the random RS code setting, \cite{lms25} starts from the single ``whole coordinate space'' $\F^b$, but it is still not in the message space.} avoids all ``coordinate subspace'' $U\subseteq \F^b_q$, we would only work on a specific $U\subseteq \F^b_q$ depending on any potential ``bad list of messages''  that we would like to exclude. Concretely, suppose for the sake of contradiction there exists some ``bad list of messages'' that certifies $\mathcal{V}$, we will
define an ``associated coordinate subspace'' $U\subseteq \F^b_q$ generated from the bad list. Guaranteed by the definition of the threshold rate $R_{\mathcal{V}}$, there exists a proper linear subspace $W\subsetneq U\subseteq\F^b_q$ with larger potential. Then, from $U$ and $W$, we can construct some linear subspace $N\subseteq \F^k_q$ in the message space. Finally, if we calculate the defining inequality of subspace designs on $N$, it will lead to a contradiction and concludes that the bad list of messages does not exist. Our main technical novelty is the way to find the specific $U\subseteq\F^b_q$ and $N\subseteq \F^k_q$ such that the  potential functions $\Phi(U,\mathcal{V},R)-\Phi(W,\mathcal{V},R)$ can be perfectly transformed to ``subspace intersections'' between $N\subseteq \F^k_q$ and the subspace design.

\begin{proof}[Proof of \cref{thm:FRS-threshold}]
Let $\pi_1,\dots,\pi_n\colon \F^k_q\to\F^s_q$ be linear maps corresponding to $\mathcal{C}$ such that for any $i\in[n],f\in\F^k_q$, there is $\pi_i(f)=\mathcal{C}(f)[i]\in\F^s_q$. Since $\mathcal{C}$ is a $\mu$-slacked $b$-subspace designable code, we know that for any $1\leq d'\leq d$, $\ker(\pi_1),\dots,\ker(\pi_n)\subseteq\F^k_q$ form a $(d',(R+\mu )d'n)$ subspace design. 

Suppose by contrapositive $\cC$ contains $\cV$,  we have distinct  $f_1,\dots,f_b\in\F^k_q$ such that for any $i\in[n]$ and $j\in[s]$, there is $(\pi_i(f_1)[j],\dots,\pi_i(f_b)[j])\in V_i$. At least one of $f_1,\dots,f_b$ is non-zero.  Let $F=\Span(f_1,\dots,f_b)$ and $\dim F=t\ge 1$. Let $g_1,\dots,g_t\in\F^k_q$ be an arbitrary basis of $F$, we define a matrix $K\in\F^{t\times k}_q$ such that the $i$-th row of $K$ is $g_i$ for each $i\in[t]$. Let $\varphi_F\colon \F^t_q\to F$ denote the linear map defined as $\varphi_F(v)=K^Tv,\forall v\in\F^t_q$. Since $\rank(K)=t$, $\varphi_F$ is actually an isomorphism from $\F^t_q$ to $F$. For each $f_i,i\in[b]$, there is a unique way to write $f_i=\sum^t_{j=1}m_{ij}g_j$ as a $\F_q$-linear combination of the basis $g_1,\dots,g_t$.

Let $M=(m_{ij})\in\F^{b\times t}_q$ be the matrix corresponding these coefficients. Since $g_1,\dots,g_t$ form a basis, so the linear subspace spanned by rows of $M$ is exactly $F$, which implies $\rank(M)=t$. Also, since $\vec{g}_1,\dots,\vec{g}_t$ are linear independent and $\vec{f}_1,\dots,\vec{f}_{b}$ are distinct, it is clear that $M$ has distinct rows. Therefore, let $U\subseteq \F^b_q$ denote the linear subspace spanned by columns of $M$. Namely, let $d_1,\dots,d_t\in\F^b_q$ the columns of $M$ such that $d_i=M[:i],i\in[t]$, we know that $U=\Span(d_1,\dots,d_t)$. We claim that $U\in\Ldist(\F^b_q)$.  To show this, for any $u\neq v\in[b]$, since $M$ has distinct rows, $M[u:]\neq M[v:]$. This means for some $w\in[t]$ there is $M[u,w]\neq M[v,w]$, so the $w$-th column $M[:w]=d_w\in U$ satisfies that $d_{w,u}\neq d_{w,v}$. Therefore,   we know $U\in \Ldist(\F^b_q)$. Since $\rank(M)=\dim(U)=t$, we know $d_1,\dots,d_t$ form a basis of $U$. We define the linear map $\varphi\colon \F^b_q\to\F^t_q$ as $\varphi(v)=M^{\top}v$ for all $v\in\F^b_q$. Since $\rank(M)=t$, $\varphi$ is surjective (epimorphism).

Then, let $R'=R+\mu+1/n$, from \cref{fac:monotone} we know there is some proper linear subspace $W\subsetneq U$ such that
\[
\Phi(\cV,U,R')-\Phi(\cV,W,R')\leq 0.
\]

Let $m=\dim(W)\ge 0$. Note that $m<t$ since $W$ is a proper linear subspace of $U$.  There must be some surjective linear map (epimorphism) $\psi\colon\F^{\top}_q\to\F^m_q$ such that  $\ker(\psi\circ\varphi)\subseteq \F^b_q$ is exactly $W^{\bot}$. To show the existence of $\psi$, there are two cases. If $m=0$, then $\psi$ must be trivial and we are done. Otherwise, let $w_1,\dots,w_m\in W\subsetneq U\subseteq \F^b_q$ denote a basis of $W$. There is a unique way to write $w_i=\sum^t_{j=1}h_{ij}d_j$ as an $\F_q$-linear combination for each $i\in[m]$. Let $H=(h_{ij})\in\F^{m\times t}_q$, the linear map $\psi\colon \F^t_q\to\F^m_q$ defined as $\psi(v)=Hv,\forall v\in\F^t_q$ satisfies the requirement since $(HM^{\top})[i]=w_i,\forall i\in[m]$, which means $\ker(\psi\circ\varphi)=\Span(w_1,\dots,w_m)^{\bot}=W^{\bot}$. $\psi$ is surjective since the rows of $H$ are  $w_1,\dots,w_m$, which are linearly independent.

Let $N=\ker(\psi)\subseteq \F^t_q$, we know $\dim(N)=t-m\ge 1$ since $\psi$ is surjective. we consider the linear space $N'=\varphi_F(N)\subseteq F\subseteq \F^k_q$. Since $\varphi_F$ is injective, it follows that $\dim(N')=t-m\leq b$.

Recall that $\mathcal{V}=(V_1,\dots,V_n)\in\mathcal{L}(\F^b_q)^n$ is the interested $b$-local profile, we have the following claim
\begin{claim}\label{clm:blackbox}
For any $i\in[n]$, there is $\varphi_F(\varphi(V^{\bot}_i)\cap N)\subseteq N'\cap \ker(\pi_i)$.
\end{claim}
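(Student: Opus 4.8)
The plan is to unwind both sides of the claimed inclusion down to the standard bilinear form $\langle\cdot,\cdot\rangle$ on $\F_q^b$ and observe that the hypothesis ``$\mathcal{C}$ contains $\mathcal{V}$'' is exactly an orthogonality condition. First I would fix, for each $i\in[n]$ and $j\in[s]$, the linear functional $f\mapsto \pi_i(f)[j]$ on $\F_q^k$ and let $p_{i,j}\in\F_q^k$ be its representing vector, so that $\pi_i(f)[j]=\langle p_{i,j},f\rangle$ for all $f\in\F_q^k$. Recalling that $\varphi_F(v)=K^\top v$ with $K\in\F_q^{t\times k}$ the matrix whose rows are $g_1,\dots,g_t$, and that $f_\ell=\sum_{r=1}^t m_{\ell r}g_r$ (equivalently $f_\ell=K^\top M^\top e_\ell$, i.e.\ $\varphi_F\circ\varphi$ sends the standard basis vector $e_\ell$ of $\F_q^b$ to $f_\ell$), a short computation shows that the ``slice vector'' $u_{i,j}:=(\pi_i(f_1)[j],\dots,\pi_i(f_b)[j])\in\F_q^b$ satisfies $u_{i,j}=MKp_{i,j}$. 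The defining property of ``$\mathcal{C}$ contains $\mathcal{V}$'' says precisely that $u_{i,j}\in V_i$ for all $i\in[n]$, $j\in[s]$.

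Next I would take an arbitrary $x\in\varphi(V_i^\bot)\cap N$ and verify the two membership conditions for $\varphi_F(x)$ separately. Membership in $N'$ is immediate: $x\in N$ gives $\varphi_F(x)\in\varphi_F(N)=N'$. For membership in $\ker(\pi_i)$, write $x=\varphi(v)=M^\top v$ with $v\in V_i^\bot$; then for every $j\in[s]$,
\[
\pi_i(\varphi_F(x))[j]=\langle p_{i,j},\,K^\top M^\top v\rangle=(MKp_{i,j})^\top v=\langle u_{i,j},v\rangle=0,
\]
since $u_{i,j}\in V_i$ and $v\in V_i^\bot$. Hence $\pi_i(\varphi_F(x))=0$, so $\varphi_F(x)\in N'\cap\ker(\pi_i)$; as $x$ was arbitrary this proves $\varphi_F(\varphi(V_i^\bot)\cap N)\subseteq N'\cap\ker(\pi_i)$.

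Conceptually, the displayed computation is just the statement that the composite $v\mapsto \pi_i(\varphi_F(\varphi(v)))$ from $\F_q^b$ to $\F_q^s$ is the transpose of the map $\F_q^s\to\F_q^b$ whose $j$-th coordinate vector is the slice $u_{i,j}\in V_i$, so its restriction to $V_i^\bot$ vanishes identically. I do not expect a genuine obstacle here: the only thing requiring care is the bookkeeping of transposes and keeping straight the three ambient spaces — $\F_q^b$ (where $U$, $W$, and the $V_i$ live), $\F_q^t$ (where $N$ lives), and the message space $\F_q^k$ (where $N'$ and the $\ker(\pi_i)$ live) — together with the identity $f_\ell=K^\top M^\top e_\ell$, which is exactly what makes $u_{i,j}=MKp_{i,j}$ fall out and lets the orthogonality $\langle u_{i,j},v\rangle=0$ do all the work.
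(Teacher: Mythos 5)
Your proposal is correct and follows essentially the same route as the paper's proof: membership in $N'$ is immediate from $N'=\varphi_F(N)$, and for membership in $\ker(\pi_i)$ the paper likewise writes $\pi_i(f)=H_if$ and shows the rows of $H_iK^\top M^\top$ are exactly the slice vectors $(\pi_i(f_1)[j],\dots,\pi_i(f_b)[j])\in V_i$, so that pairing against $v\in V_i^\bot$ vanishes — your $u_{i,j}=MKp_{i,j}$ is just the transpose of that same computation.
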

\begin{proof}
Fix any $i\in[n]$, our proof is splitted into two pieces.
\begin{compactitem}
\item[(1)] $\varphi_F(\varphi(V^{\bot}_i)\cap N)\subseteq N'$: Since $N'=\varphi_F(N)$, this part is trivial.
\item[(2)] $\varphi_F(\varphi(V^{\bot}_i)\cap N)\subseteq \ker(\pi_i)$: For any $v\in V^{\bot}_i$, it suffices to prove $(\pi_i\circ \varphi_F\circ \varphi)(v)=0$. By the definition of $\pi_i$, let $H_i\in\F^{s\times k}_q$ denote the matrix such that $\pi_i(f)=H_if,\forall f\in\F^k_q$. Therefore, the target is equivalent to  $H_iK^{\top}M^{\top}v=0$. It suffices to show that for any $j\in[s]$, the $j$-th row $z_j\in\F^b_q$ of $H_iK^{\top}M^{\top}\in V_i$. Actually, since $j$-th row of $H_iK^{\top}$ is $(\pi_i(g_1)[j],\dots,\pi_i(g_t)[j])\in\F^t_q$ by definition, and $f_u=\sum^t_{v=1}M[u,v]g_v,\forall u\in[b]$, by $\F_q$-linearity of $\mathcal{C}$, we know that
\begin{align*}
z_j&=(\sum^t_{v=1}\pi_i(g_v)[j]M^{\top}[v,1],\dots,\sum^t_{v=1}]\pi_i(g_v)[j]M[v,b])\\
&=\left(\pi_i\left(\sum^t_{v=1}g_vM[1,v]\right)[j],\dots,(\pi_i\left(\sum^t_{v=1}g_vM[b,v]\right)[j]\right)\\
&=(\pi_i(f_1)[j],\dots,\pi_i(f_b)[j])\in V_i.
\end{align*}
This completes the proof of the claim.
\end{compactitem}
\end{proof}

Since $\varphi_F$ is injective. $\dim(\varphi_F(\varphi(V^{\bot}_i)\cap N))=\dim(\varphi(V^{\bot}_i)\cap N)$. By \cref{clm:blackbox}, it follows that $\dim(N'\cap \ker(\pi_i))\ge\dim(\varphi(V^{\bot}_i)\cap N)$. Since $\dim(N')\leq b$ and $\pi_1,\dots,\pi_n$ are $(\dim(N'),(R+\mu)\dim(N')n)$-subspace designable, it follows that
\begin{equation}\label{eq:sub-des}
(R+\mu)\dim(N')n\ge\sum^n_{i=1}\dim(N'\cap\ker(\pi_i))\ge \sum^n_{i=1} \dim(\varphi(V^{\bot}_i)\cap N)
\end{equation}

Let us calculate $\dim(\varphi(V^{\bot}_i)\cap N)$. Since $N=\ker(\psi)$, we have,
\[
\dim(\psi(\varphi(V^{\bot}_i)))=\dim(\varphi(V^{\bot}_i))-\dim(N\cap \varphi(V^{\bot}_i))
\]
This implies 
\[
c_i=\dim(\varphi(V^{\bot}_i)\cap N)=\dim(\varphi(V^{\bot}_i))-\dim(\psi(\varphi(V^{\bot}_i)))
\]
For each $i\in[n]$ let $D_i\in\mathbb{F}^{b\times (b-\dim(V_i))}_q$ denote the matrix such that the linear subspace spanned by its columns is exactly $V^{\bot}_i$ and define the linear map $\phi_i:\F^b_q\to \F^{b-\dim(V_i)}_q$ such that $\phi_i(v)=D^{\top}_iv,\forall v\in\F^b_q$, then $\ker(\phi_i)=V_i$. We compute $c_i$ by definition as follows:
\begin{align*}
\dim(\varphi(V^{\bot}_i))-\dim(\psi(\varphi(V^{\bot}_i)))&=\rank(M^{\top}D_i)-\rank(HM^{\top}D_i)\\
&=\rank(D^{\top}_iM)-\rank(D^{\top}_iMH^{\top})\\
&=\dim(\phi_i(U))-\dim(\phi_i(W))\\
&=(\dim(U)-\dim(U\cap \ker (\phi_i)))-(\dim(W)-\dim(W\cap \ker(\phi_i)))\\
&=(\dim(U)-\dim(U\cap V_i))-(\dim(W)-\dim(W\cap V_i))
\end{align*}
Note that the third line above relies on the facts that the linear subspace spanned by columns of $M$ is $U$ and the linear subspace spanned by columns of $MH^{\top}$, which is the subspace spanned by rows of $HM^{\top}$, is exactly $W$, as discussed in previous paragraphs. It follows that
\begin{align*}
\sum^n_{i=1}c_i&\ge\sum^n_{i=1}(\dim(U)-\dim(U\cap V_i))-(\dim(W)-\dim(W\cap V_i))\\
&=\Phi(\mathcal{V},W,R')-\Phi(\mathcal{V},U,R')-R'n(\dim(W)-\dim(U))\\
&\ge R'n(\dim(U)-\dim(W))\\
&=(R+\mu+1/n)n\dim(N')>(R+\mu)\dim(N')n.
\end{align*}

This contradicts to (\ref{eq:sub-des}), so the assumed pairwise distinct $f_1,\dots,f_b$ do not exist. We conclude that $\mathcal{C}$ does not contain $\mathcal{V}$.
\end{proof}
\section{RLCs Simulate Local Properties Shared by all Optimal Subspace Designable Codes}\label{sec:subspace-to-random}
In this section, we show the easier direction of the local equivalence between subspace designs and random linear codes. Namely, our target is to prove that for any local profile $\mathcal{V}$, if it can be guaranteed to be avoided by all optimal subspace designs, then it can be avoided by random linear code with high probability. As a direct corollary, we show in \cref{cor:sub-lr} that list-recoverability guaranteed by optimal subspace designs also holds for random linear codes. In a follow-up work \cite{BCDZ25lr}, the authors use this reduction to prove the nearly-optimal list-recoverability of random linear codes, as well as random RS codes by a further reduction established in \cite{lms25}.
\paragraph{Technical Overview.} The proof framework is as follows. In \cref{thm:random-subspace-design}, we prove random folded linear codes yield nearly-optimal subspace designs with high probability, so all local properties guaranteed by optimal subspace designs also hold for random folded linear codes with high probability. Then, for any folding parameter $s$ and local profile $\mathcal{V}$, we observe that the values of potential functions (\ref{eq:potential}) satisfy $s\Phi(U,\mathcal{V},R)=\Phi(U,\mathcal{V}^{(s)},R)$, where $\mathcal{V}^{(s)}$ is the $s$-duplicated version of $\mathcal{V}$ which matches the 
$s$-folded codes. Therefore, the rate thresholds satisfy $R_{\mathcal{V}}=R_{\mathcal{V}^{(s)}}$ so local properties of random folded linear codes also hold for the plain random linear codes, which completes the reduction from subspace designs to random linear codes.
\begin{theorem}\label{thm:random-subspace-design}
Fix positive integers $k,n,s,q,d$ where $k\leq sn,d<s$, $\eps\in(0,1)$ and let rate $R=k/(sn)$. A random $\F_q$-linear $C\subseteq (\F^s_q)^n$ is a $(d,(R+\eps)dn)$ subspace designable code with probability at least $1-q^{-\eps sdn+3d^2n+3n}$. 
\end{theorem}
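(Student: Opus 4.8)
The plan is to bound the failure probability by a union bound over all potential "bad" test subspaces $W \subseteq \F_q^k$ of each dimension $d' \le d$, together with the witnessing assignment of intersection dimensions $\dim(\ker(\pi_i) \cap W)$ across the $n$ blocks. Fix $d' \le d$ and a target "excess" $A = (R+\eps)d'n$; the design fails for dimension $d'$ iff there exists a $d'$-dimensional $W$ with $\sum_{i=1}^n \dim(\ker(\pi_i) \cap W) > A$. First I would fix $W$ and analyze, for a single block $i$, the distribution of $\dim(\ker(\pi_i) \cap W)$ when $\pi_i \colon \F_q^k \to \F_q^s$ is a uniformly random linear map (equivalently, when the generator matrix of the random folded code is uniform). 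For a fixed $W$ of dimension $d'$, the restriction $\pi_i|_W$ is a uniformly random linear map $\F_q^{d'} \to \F_q^s$, so $\Pr[\dim(\ker(\pi_i|_W)) \ge j] \le \Pr[\mathrm{rank}(\pi_i|_W) \le d'-j]$, and the probability that a uniform $s \times d'$ matrix over $\F_q$ has corank at least $j$ is at most $q^{-j(s-d'+j)} \le q^{-j(s-d')}$ up to a $\poly$ combinatorial factor; more simply, $\Pr[\dim(\ker(\pi_i|_W)) \ge j] \le q^{\binom{j}{2}} \cdot q^{-j(s-d'+1)+j} \cdot \binom{d'}{j}_q$-type bound, which I will clean up to something like $\Pr[\dim(\ker(\pi_i|_W)) = j] \le q^{-j(s-d')} \cdot q^{O(d'^2)}$.

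Next I would combine over blocks. The events across $i \in [n]$ are independent (independent blocks of the generator matrix), so for a fixed $W$ and a fixed profile $(j_1, \dots, j_n)$ of per-block intersection dimensions with $\sum_i j_i = t$, the probability is at most $\prod_i q^{-j_i(s-d')} q^{O(d'^2)} = q^{-t(s-d') + O(d'^2 n)}$. There are at most $\binom{n+d}{d}^{?}$, i.e., at most $(d+1)^n \le q^{n}$ (since $j_i \in \{0,\dots,d'\}$, really $(d'+1)^n$) ways to choose the profile with a given sum, so summing over profiles with $\sum_i j_i = t > A$ costs an extra $q^{n \log_q(d+1)} \le q^{n d}$ factor — I will be a bit more careful and note $(d+1)^n \le q^{dn}$ suffices, or absorb it into the slack. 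Finally, union-bounding over all $d'$-dimensional subspaces $W$ of $\F_q^k$: there are at most $q^{d'(k-d')} \le q^{d'k} \le q^{dsn}$ of them. Wait — this is the crux: the number of subspaces is roughly $q^{d k} \le q^{d s n}$, and to beat this we need $t \cdot (s-d') \ge t(s-d)$ with $t > (R+\eps)d'n$ to contribute $q^{-t(s-d)} \le q^{-(R+\eps)d'n(s-d)}$, which must dominate $q^{d k} = q^{d R s n}$. Since $t(s-d') > (R+\eps)d'n(s-d) = (R+\eps)d'n s - (R+\eps)d'nd$, and $d'k = d' R s n$, we get a net exponent $\le d' R s n - (R+\eps)d' n s + (R+\eps)d'nd + O(d'^2 n) + dn \le -\eps d' s n + O(d^2 n) + O(dn)$, matching the claimed bound $q^{-\eps s d n + 3d^2 n + 3n}$ after taking $d'=d$ as the dominant (largest exponent) term and being slightly generous with constants.

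The main obstacle I anticipate is getting the per-block tail bound on $\dim(\ker(\pi_i|_W))$ clean enough that the accumulated $q^{O(d'^2 n)}$ error terms stay within the stated $3d^2 n + 3n$ budget, and handling the union over the intersection-dimension profiles $(j_1,\dots,j_n)$ without an extra factor that would overwhelm the $q^{-\eps sdn}$ savings — in particular verifying that $(d+1)^n$ or the number of weak compositions of $t$ into $n$ parts is genuinely absorbable (it is, since $(d+1)^n \le q^{n}$ only when $d+1 \le q$, so more carefully one bounds the number of profiles with sum exactly $t$ by $\binom{n+t-1}{t} \le 2^{n+t} \le q^{n+t}$ and then $q^t$ is killed by a $q^{-t(s-d)}$ with $s - d \ge 1$, leaving $q^n$, which is the "$+3n$"). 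I would also need to double-check the reduction is being applied with $\ker(\pi_i)$ of dimension $k - \dim(\pi_i(\F_q^k)) \ge k-s$, so that these kernels are genuinely large and the design inequality is nontrivial — but since $k \le sn$ and we only test subspaces $W$ of dimension $d' \le d < s$, the relevant quantity $\dim(\ker(\pi_i) \cap W)$ is controlled purely by $\mathrm{rank}(\pi_i|_W)$ as above, so this is not an obstruction. Summing the failure probabilities over $d' \in \{1,\dots,d\}$ only costs a factor of $d \le q$, absorbed into the $+3n$ slack, completing the bound.
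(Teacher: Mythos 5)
Your core argument is essentially the paper's: view the generator matrix as $n$ independent uniform $k\times s$ blocks, union-bound over $d$-dimensional $W$ (at most $q^{kd}=q^{Rdsn}$ choices via basis matrices), use that $M_i^{\top}B_W$ is a uniform $s\times d$ matrix to get the per-block corank tail $q^{-j(s-d+j)}$ up to a $q^{O(d^2)}$ factor, multiply over blocks, union over the corank profiles, and do the exponent bookkeeping $-\eps d s n+O(d^2n)+O(n)$. This is exactly the paper's proof (the paper bounds the per-block slack by $\binom{d}{a_i}\le q^{d\log_q d}$ and the number of valid rank profiles by $q^{3n}$, then uses $(s-d)(R+\eps)dn\ge (R+\eps)sdn-2d^2n$); your composition-count treatment of the profiles is if anything more careful, and the loose constants are the only thing you would still have to pin down to get exactly $3d^2n+3n$.

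The one genuine misstep is the detour through all dimensions $d'\le d$ and the claim that $d'=d$ is the ``dominant (largest exponent) term.'' That is backwards: the failure probability for dimension $d'$ is roughly $q^{-\eps d'sn+O(d'^2n)}$, which is \emph{largest} at $d'=1$, so a union bound over all $d'\le d$ only yields a guarantee of the form $1-q^{-\eps sn+O(d^2n)}$, strictly weaker than the claimed $1-q^{-\eps sdn+3d^2n+3n}$ once $d\ge 2$ (and your remark that summing over $d'$ ``costs a factor of $d\le q$'' both uses an unassumed inequality and misidentifies which term dominates). Fortunately none of this is needed: by Definition~\ref{def:subspace-design}, a $(d,A)$ subspace design only constrains subspaces of dimension exactly $d$, so you should run your argument solely at $d'=d$ --- which is precisely what the paper does --- and then your bookkeeping gives the stated bound. (When several dimensions $d'\le d$ are needed elsewhere in the paper, the theorem is simply invoked separately for each $d'$.)
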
 
\begin{proof}
The generator matrix of the random $\F_q$-linear code $C$ can be seen as a random matrix of shape $k\times sn$ whose entries are uniformly and independently sampled from $\F_q$. Let $M_{i}\in\F^{k\times s}_q,i\in[n]$ denote the sub-matrix of this random matrix consisting of its $((i-1)s+1)$-th to $(is)$-th column. We need to show that, with probability at least $1-q^{-\eps sdn+3d^2n+3n}$, for any $d$-dimensional linear subspace of $\F^k_q$ encoded as a matrix $W\in\F^{k\times d}_q$ with rank $d$, there is
\[
f(W):=\sum^n_{i=1}\rank(M^{\top}_iW)\ge (1-R-\eps)dn.
\]
We fix an arbitrary matrix $W\in\F^{k\times d}_q$ with rank $d$. By union bound, since the number of different $W$ is at most $q^{kd}=q^{Rdsn}$, it suffices to prove that with probability at most $q^{-(\eps +R)dsn+3d^2n+3n}$, there is $f(W)<(1-R-\eps)dn$.

For any sequence $A=(a_1,\dots,a_n)\in\{0,1,\dots,d\}^n$, we call $A$ a valid sequence if the sum $S(A):=\sum^n_{i=1}a_i< (1-R-\eps)dn$. Let $E_A$ denote the event that $\rank(M^{\top}_i W)=a_i$ for all $i\in[n]$. We only need to bound the probability that $E_A$ happens for some valid sequence $A$. The number of valid sequences is at most $\binom{(2-R-\eps)n}{n}\leq (e(2-R-\eps))^n\leq (2e)^n\leq q^{3n}$. If we fix some arbitrary valid sequence $A$, by union bound, we only need to show that $E_A$ happens with probability at most $q^{-(\eps +R)dsn+3d^2n}$.

We can directly compute $\Pr[E_A]$ as follows
\[
\Pr[E_A]=\prod^n_{i=1}\Pr[\rank(M^{\top}_iW)=a_i].
\]
Fix any $i\in[n]$, our target is to compute 
$\Pr[\rank(M^{\top}_iW)=a_i]$. since $M_i$ is a uniformly random matrix and $W$ has full column rank.   $\Pr[\rank(M^{\top}_iW)=a_i]$ is the same as the probability that a uniformly random matrix $P\in\F^{s\times d}_q$ has rank $a_i$. It suffices to bound the number of matrices $P\in\F^{s\times d}_q$ with rank $a_i$. If $P$ has rank $a_i$, then there exists a subset $S\subseteq [d]$ of columns of size $|S|=a_i$ such that columns in $S$ are linearly independent, and all the other columns are linear combinations of columns in $S$. Fix such an $S$, the number of matrices $P$ satisfying above conditions is at most $q^{sa_i+a_i(d-a_i)}$. Since there are $\binom{d}{a_i}$ choices of $S$, it follows that
\[
\Pr[\rank(M^{\top}_iW)=a_i]=\Pr[\rank(P)=a_i]\leq\binom{d}{a_i}q^{-sd}q^{sa_i+a_i(d-a_i)}\leq q^{d\log_qd-(s-a_i)(d-a_i)}
\]
Since $\sum^n_{i=1}(d-a_i)\ge (R+\eps)dn$, there is
\[
\Pr[E_A]=\prod^n_{i=1}\Pr[\rank(M^{\top}_iW)=a_i]\leq q^{nd\log_qd-(s-d)(\eps+R)dn}\leq q^{d^2n-(\eps+R)sdn+2d^2n}\leq q^{-(\eps+R)sdn+3d^2n}
\]
\end{proof}
Then we show that if all optimal subspace designable codes does not contain a local profile $\mathcal{V}$, then random linear codes also do not contain $\mathcal{V}$.

\begin{theorem}\label{thm:easy-direction}
Fix $n,b,q,d\ge 1,R\in(0,1)$ and any $b$-local profile $\mathcal{V}=(V_1,\dots,V_n)\in\mathcal{L}(\F^b_q)^n$. If for all large enough $s$, we have that all \textbf{$d$-subspace designable} $s$-folded $\F_q$-linear codes $\mathcal{C}\subseteq (\F^s_q)^n$ with rate $R$ does not contain $\mathcal{V}$, then it follows that $R_{\mathcal{V}}\ge R-(b^2+1)/n$.
\end{theorem}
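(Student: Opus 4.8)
The plan is to argue by contradiction: supposing $R_{\mathcal V} < R-(b^2+1)/n$, I will show that for every sufficiently large folding parameter $s$ a uniformly random $s$-folded $\F_q$-linear code $\mathcal C\subseteq(\F^s_q)^n$ of rate $R$ is, with positive probability, \emph{both} $d$-subspace designable \emph{and} a code that contains $\mathcal V$ — contradicting the hypothesis. The two inputs are \cref{thm:random-subspace-design} (random folded codes are near-optimal subspace designs) and \cref{thm:threshold} (the rate threshold for random linear codes), the latter transported to folded codes through the elementary scaling identity $\Phi(\mathcal V^{(s)},U,R)=s\cdot\Phi(\mathcal V,U,R)$ valid for every $U\in\Ldist(\F^b_q)$, which immediately gives $R_{\mathcal V^{(s)}}=R_{\mathcal V}$ for the $s$-duplicated profile $\mathcal V^{(s)}\in\mathcal{L}(\F^b_q)^{sn}$.

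First I would set $\eps:=(b^2+1)/n$, so the assumption reads $R>R_{\mathcal V}+\eps=R_{\mathcal V^{(s)}}+\eps$. Since the unfolding $\mathcal C'\subseteq\F^{sn}_q$ of a random $s$-folded linear code (random $k\times sn$ generator matrix) is just a random $\F_q$-linear code of rate $R$ and block length $sn$, \cref{thm:threshold}(1) applied to $\mathcal V^{(s)}$ yields
\[
\Pr[\mathcal C\text{ contains }\mathcal V]=\Pr[\mathcal C'\text{ contains }\mathcal V^{(s)}]\ge 1-q^{-\eps sn+b^2}=1-q^{-(b^2+1)s+b^2}\ge 1-q^{-1},
\]
using that $-(b^2+1)s+b^2\le -1$ for all $s\ge1$; this probability tends to $1$ as $s\to\infty$.

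Next I would bound the probability that $\mathcal C$ fails to be $d$-subspace designable. For each $d'\in\{1,\dots,d\}$, apply \cref{thm:random-subspace-design} with design dimension $d'$ and slack $\eps':=1/(dn)$ (valid since $d'\le d<s$ once $s$ is large): except with probability at most $q^{-\eps' sd'n+3(d')^2n+3n}\le q^{-s/d+3d^2n+3n}$, the code $\mathcal C$ is $(d',(R+\eps')d'n)$ subspace designable, and since $(R+\eps')d'n=Rd'n+d'/d\le Rd'n+1$ this is exactly the $(d',Rd'n+1)$ bound in the definition of $d$-subspace designability. A union bound over the $d$ choices of $d'$ makes the failure probability at most $d\cdot q^{-s/d+3d^2n+3n}$, which is below $1-q^{-1}$ once $s$ is large enough; we also choose $s$ beyond the ``large enough'' threshold of the hypothesis. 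A final union bound over the two failure events then exhibits an actual $d$-subspace designable $s$-folded $\F_q$-linear code of rate $R$ that contains $\mathcal V$, contradicting the hypothesis. Hence $R_{\mathcal V}\ge R-(b^2+1)/n$.

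The step requiring the most care is the quantitative matching of slacks: one must take $\eps'$ small enough that the guarantee $(R+\eps')d'n$ of \cref{thm:random-subspace-design} slips under the integral threshold $Rd'n+1$ defining $d$-subspace designability, yet keep $s$ large enough that the same theorem's error exponent $-\eps' sd'n+\cdots$ stays negative and in fact grows; and, if one insists on the rate being exactly $R$ (rather than $\lfloor Rsn\rfloor/(sn)$), one should either restrict to the still arbitrarily large set of $s$ with $Rsn\in\mathbb Z$ or absorb the $O(1/(sn))$ rounding error into the margin already present. Conceptually, however, this direction is routine once \cref{thm:random-subspace-design}, \cref{thm:threshold}, and the potential-scaling identity are in hand.
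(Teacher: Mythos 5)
Your proposal is correct and follows essentially the same route as the paper's proof: both combine \cref{thm:random-subspace-design} (with slack roughly $1/(dn)$ so that $(R+\eps')d'n\le Rd'n+1$, forcing $s=\Omega(d^3n)$), \cref{thm:threshold}(1) applied to the $s$-duplicated profile, and the scaling identity $\Phi(\mathcal V^{(s)},U,R)=s\Phi(\mathcal V,U,R)$ giving $R_{\mathcal V^{(s)}}=R_{\mathcal V}$, then conclude by a union bound that a random folded code would simultaneously be $d$-subspace designable and contain $\mathcal V$, contradicting the hypothesis. The only difference is cosmetic (explicit constants and a contradiction framing versus the paper's fixed probabilities $2/3$ and $1/2$ in a contrapositive framing).
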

\begin{proof}
We can choose a large enough $s\ge \Omega(d^3n)$ such that a random $\F_q$-linear code $\mathcal{C}\subseteq (\F^s_q)^n$ is a $(d',Rd'n+1)$ subspace designable code for all $1\leq d'\leq d$ with probability at least $2/3$ by \cref{thm:random-subspace-design}.

let $\mathcal{C}'\subseteq \F^{sn}_q$ denote the ``unfolded'' random $\F_q$-linear code $\mathcal{C}$ with block length $sn$ and define the $s$-duplicated $b$-local profile $\mathcal{V}^{(s)}:=(V_1,\dots,V_1,V_2,\dots,V_2,\dots,V_n,\dots,V_n)\in\mathcal{L}(\F^b_q)^{sn}$ to be the $b$-local profile with length $sn$ derived from $\mathcal{V}$ such that each $V_i$ is repeated $s$ times. From the statement, we know that $\mathcal{C}'$ does not contain $\mathcal{V}^{(s)}$ with probability at least $2/3$.

Then we observe that the threshold rate $R_{\mathcal{V}^{(s)}}=R_{\mathcal{V}}$. The reason is that for any $U\in\mathcal{L}(\F^b_q)$, by the definition of the potential function there is $\Phi(\mathcal{V}^{(s)},U,R)=s\Phi(\mathcal{V},U,R)$, and the rate threshold $R_{\mathcal{V}^{(s)}}$ only depends on signs of $\Phi(\mathcal{V}^{(s)},U,R)-\Phi(\mathcal{V}^{(s)},W,R)$ where $W\subsetneq U, U\in\mathcal{L}(\F^b_q)$. These signs do not change by replacing $\mathcal{V}^{(s)}$ with $\mathcal{V}$.

However, suppose by contrapositive that $R_{\mathcal{V}^{(s)}}=R_{\mathcal{V}}<R-(b^2+1)/n$, we know that the random $\F_q$-linear code $\mathcal{C}'$ with length $sn$ contains $\mathcal{V}^{(s)}$ with probability at least $1-q^{-1}\ge 1/2$ by the first item of \cref{thm:threshold}, which contradicts the fact that  $\mathcal{C}'$ does not contain $\mathcal{V}^{(s)}$ with probability at least $2/3$.
\end{proof}
The second item of \cref{thm:main-equiv} is proved by combining \cref{thm:easy-direction} and the second item of \cref{thm:threshold}.

As a corollary, we conclude that list-recoverability guaranteed by optimal subspace designable codes can also be attained by random linear codes if the alphabet size $q$ is large enough.
\begin{corollary}\label{cor:sub-lr}
Fix constants $\ell,L,d\ge 1,\eps,\rho,R\in(0,1),q\ge(3\ell)^{2(L+1)/\eps}$. Suppose for all large enough $n$, all $d$-subspace designable codes with rate $R$ are $(\rho,\ell,L)$ list-recoverable, then random $\F_q$-linear codes with rate $R-(L^2+2L+2)/n-\eps$ are $(\rho,\ell,L)$ list-recoverable with probability at least $1-o_n(1)$.  
\end{corollary}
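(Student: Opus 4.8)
The plan is to express $(\rho,\ell,L)$-list-recoverability as the avoidance of a finite family $\mathcal{F}$ of $(L+1)$-local profiles, apply \cref{thm:easy-direction} to lower bound $R_{\mathcal{V}}$ for each $\mathcal{V}\in\mathcal{F}$, and then union-bound the failure probabilities supplied by \cref{thm:threshold}(2). The family $\mathcal{F}$ is built as follows. A block length $n$ code $\mathcal{C}$ fails to be $(\rho,\ell,L)$-list-recoverable exactly when there are $L+1$ distinct codewords $c_1,\dots,c_{L+1}\in\mathcal{C}$, ``error sets'' $T_1,\dots,T_{L+1}\subseteq[n]$ with each $|T_j|\le\rho n$, and for each coordinate $i$ a partition $\mathcal{P}_i$ into at most $\ell$ blocks of the ``agreeing set'' $A_i:=\{j\in[L+1]:i\notin T_j\}$, such that $(c_j)_i=(c_{j'})_i$ whenever $j,j'$ lie in a common block of $\mathcal{P}_i$ (one then recovers $S_i$ as the $\le\ell$ distinct symbols $\{(c_j)_i:j\in A_i\}$). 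Each choice of $(T_j)_j$ and $(\mathcal{P}_i)_i$ yields a local profile $\mathcal{V}=(V_1,\dots,V_n)$, where $V_i\subseteq\F_q^{L+1}$ is the \emph{linear} subspace cut out by the homogeneous equations $x_j=x_{j'}$, one per pair in a common block of $\mathcal{P}_i$; let $\mathcal{F}$ be the set of all such $\mathcal{V}$. By construction $\mathcal{C}$ is $(\rho,\ell,L)$-list-recoverable iff it contains no $\mathcal{V}\in\mathcal{F}$; moreover, since the error pattern (and hence the constraint structure) duplicates coordinate-by-coordinate under $s$-folding, an $s$-folded code whose unfolding is $(\rho,\ell,L)$-list-recoverable contains no $\mathcal{V}\in\mathcal{F}$ in the duplicated sense used by \cref{thm:easy-direction}.

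Now fix $\mathcal{V}\in\mathcal{F}$. By hypothesis every rate-$R$ $d$-subspace designable code is $(\rho,\ell,L)$-list-recoverable; in particular, for every sufficiently large $s$, every rate-$R$ $d$-subspace designable $s$-folded $\F_q$-linear code does not contain $\mathcal{V}$. Hence \cref{thm:easy-direction} (with configuration parameter $b=L+1$) gives $R_{\mathcal{V}}\ge R-(b^2+1)/n=R-(L^2+2L+2)/n$. Setting $R':=R-(L^2+2L+2)/n-\eps\le R_{\mathcal{V}}-\eps$, \cref{thm:threshold}(2) shows that a random $\F_q$-linear code of rate $R'$ contains this particular $\mathcal{V}$ with probability at most $q^{-\eps n+(L+1)^2}$.

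To finish, union-bound over $\mathcal{F}$. Each $\mathcal{V}\in\mathcal{F}$ is determined coordinate-by-coordinate by an erasure set $E_i\subseteq[L+1]$ together with a partition of $[L+1]\setminus E_i$ into at most $\ell$ blocks, so $|\mathcal{F}|\le\bigl(2^{L+1}\cdot\ell^{L+1}\bigr)^{n}=(2\ell)^{n(L+1)}$. Therefore a random rate-$R'$ $\F_q$-linear code fails to be $(\rho,\ell,L)$-list-recoverable with probability at most $|\mathcal{F}|\cdot q^{-\eps n+(L+1)^2}\le q^{(L+1)^2}\cdot(2\ell)^{n(L+1)}\cdot q^{-\eps n}$. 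Since $q\ge(3\ell)^{2(L+1)/\eps}$ we have $q^{\eps}\ge(9\ell^2)^{L+1}$, whence this probability is at most $q^{(L+1)^2}\cdot(2/(9\ell))^{n(L+1)}=o_n(1)$, as required.

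The only genuinely substantive part is the first paragraph: packaging list-recoverability as avoidance of $\mathcal{F}$ while keeping every per-coordinate constraint \emph{linear} (which is precisely why one enumerates partitions of the agreeing set rather than arbitrary small symbol sets), extracting the clean bound $|\mathcal{F}|\le(2\ell)^{n(L+1)}$, and the minor but necessary bookkeeping that ``list-recoverable'' transfers across the fold/unfold correspondence of \cref{thm:easy-direction}. Given these, the remainder is a direct invocation of \cref{thm:easy-direction} and \cref{thm:threshold}(2), together with the observation that the stated lower bound on $q$ makes the union bound exponentially small in $n$.
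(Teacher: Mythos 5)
Your proof is correct and follows essentially the same route as the paper: encode $(\rho,\ell,L)$-list-recoverability as avoidance of a family of $(L+1)$-local profiles, invoke \cref{thm:easy-direction} to get $R_{\mathcal{V}}\ge R-(L^2+2L+2)/n$ for each profile, and finish with \cref{thm:threshold}(2) plus a union bound using the lower bound on $q$. The only difference is cosmetic: the paper imports the profile-family packaging from \cite[Proposition 2.2]{lms25} with the count $(3\ell)^{(L+1)n}$, whereas you construct the family explicitly (erasure sets plus per-coordinate equality partitions) and obtain the slightly sharper count $(2\ell)^{(L+1)n}$.
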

\begin{proof}
By \cite[Proposition 2.2]{lms25}, we can construct a series of $(L+1)$-local profiles $\mathcal{V}_1,\dots,\mathcal{V}_m$, where $m=\binom{n}{\rho n}^{L+1}\ell^{(L+1)n}\leq\binom{n}{n/2}^{L+1}\ell^{(L+1)n}\leq (2e)^{(L+1)n/2}\ell^{(L+1)n}\leq (3\ell)^{(L+1)n}$, such that $\F_q$-linear code $\mathcal{C}\subseteq (\F_q)^n$ is $(\rho,\ell,L)$ list-recoverable iff $\mathcal{C}$ does not contain all these $(L+1)$-local profiles. Note that for any $s$-folded $\F_q$-linear code $\mathcal{C}'\subseteq (\F^s_q)^n$, we also know that $\mathcal{C}'$ is $(\rho,\ell,L)$ list-recoverable iff $\mathcal{C}'$ does not contain $\mathcal{V}_1,\dots,\mathcal{V}_m$. (Recall that $\mathcal{C}'$ does not contain some $\mathcal{V}_i$ means the unfolded version of $\mathcal{C}'$ does not contain $\mathcal{V}^{(s)}_i$). Then, by \cref{thm:easy-direction}, we know that for any $\mathcal{V}_i,i\in[m]$, it follows that $R_{\mathcal{V}_i}\ge R-(L^2+2L+2)/n$. Therefore, from \cref{thm:threshold}, for a random $\F_q$-linear code $\mathcal{C}\subseteq (\F_q)^n$, we know that
\[
\forall i\in[m],\quad \Pr[\mathcal{C}\text{ contains }\mathcal{V}_i]\leq q^{-\eps n+(L+1)^2}\leq (3\ell)^{-2(L+1)n+2(L+1)^3/\eps}
\]
By a union bound over all these local profiles, it implies that
\begin{align*}
\Pr[\mathcal{C}\text{ is not }(\rho,\ell,L) \text{ list-recoverable}]&\leq\sum^m_{i=1}\Pr[\mathcal{C}\text{ contains }\mathcal{V}_i]\\
&\leq (3\ell)^{(L+1)n}(3\ell)^{-2(L+1)n+2(L+1)^3/\eps}\\&=(3\ell)^{-(L+1)n+2(L+1)^3/\eps}\leq o_n(1).
\end{align*}
\end{proof}

\section{Applications to Matroid Theory}\label{sec:matroid}

A primary application of our random-to-explicit reduction is contribute new algorithms in matroid theory. We now present a few standard definitions in matroid theory (e.g., \cite{oxley2006matroid}).

A matroid is a pair $\mathcal M = (X, \mathcal I)$, where $X$ is a \emph{ground set} and $\mathcal I$ is a set of subsets of $X$ known as \emph{independent sets} subject to the conditions (1) that $\emptyset \in I$, (2) if $A \in \mathcal I$ and $B \subseteq A$ then $B \in \mathcal I$, and (3) if $A, B \in \mathcal I$, with $|B| > |A|$, then there is $x \in B \setminus A$ with $A \cup \{x\} \in \mathcal I$. Maximal independent sets are called \emph{bases}. Any set $A \subseteq X$ with $A \not \in \mathcal I$ is \emph{dependent}. In particular, a dependent set $A$ is a \emph{circuit} if every proper subset of $A$ is independent (that is $A$ is a minimal dependent set). We say that any $A \subseteq X$ containing a basis is \emph{spanning}.

Every matroid $\mathcal M$ has an associated \emph{rank function} $r_{\mathcal M}$, where for any $A \subseteq X$, $r_{\mathcal M}(A)$ is the size of the largest independent set $B \in \mathcal I$ which is a subset of $A$. In particular, $A \subseteq X$ is independent if and only if $r_{\mathcal M}(A) = |A|$. We define $r_{\mathcal M}(X)$ (i.e., the size of every basis) to be the \emph{rank} of $\mathcal M$. Given two matroids $\cM_1 := (X, \cI_1)$ and $\cM_2 = (X, \cI_2)$ on the same ground set, we say that $\cM_1 \preceq \cM_2$ if $\cI_1 \subseteq \cI_2$. In other words, $r_{\mathcal M_1}(A) \le r_{\mathcal M_2}(A)$ for all $A \subseteq X$. Given matroids $\mathfrak M$ on the same ground set, we say that $\cM \in \mathfrak M$ is \emph{maximal} if $\cN \not\succ \cM$ for all $\cN \in \mathfrak M$. 

By interpreting certain matroid properties as local profiles, we apply our random-to-explicit reduction to obtain novel \emph{conditional} deterministic algorithms for computing the rank functions of various matroids. In this paper, we focus on the following three families of matroids.

\paragraph{Maximally Recoverable (MR) Tensor Code Matroid.} Given a field $\F$, an $[n,k]$-code is a $k$-dimensional subspace $C$ of $\F^n$. Motivated by applications in distributed storage, Gopalan et al.~\cite{Gopalan2016} studied the following code architecture. Pick parameters, $m,n,a,b \in \N$ and consider a $[m,m-a]$-column code $C_{\col}$ and a $[n,n-b]$ row code $C_{\row}$. The tensor product of these codes $C_{\col} \otimes C_{\row} := \operatorname{span}\{c_{\col} \otimes c_{\row} \mid c_{\col} \in C_{\col}, c_{\row} \in C_{\row}\} \subseteq \F^{m \times n}$ is the set of all $m \times n$ matrices with entries from $\F$ where every row is a codeword in $C_{\row}$ and every column is a codeword in $C_{\col}$.

The tensor code $C_{\col} \otimes C_{\row}$ has an associated correctability matroid on the ground set $[m] \times [n]$, where a \emph{pattern} $E \subseteq [m] \times [n]$ is independent in the matroid if and only if one can recover from erasing the symbols associated with $E$. In other words, the punctured\footnote{Given a code $C \subseteq \F^n$ and a set $A \subseteq [n]$, we let $C|_{A} := \{c|_{A} : c \in C\}\subseteq \F^A$ denote a puncturing of $C$.} code $(C_{\col} \otimes C_{\row})|_{\overline{E}}$ has dimension $(m-a)(n-b)$. We call this matroid $\cM(C_{\col}, C_{\row})$.

For a fixed choice of $m,n,a,b$, there are many possible matroids $\cM(C_{\col}, C_{\row})$. However, there is a provably unique maximal matroid of this form, which corresponds to the setting in which $\F$ is a characteristic zero field (e.g., $\Q, \R,$ or $\C$) and $C_{\col}$ and $C_{\row}$ are chosen uniformly from a continuous distribution, which we call \emph{generic}\footnote{More precisely, ``generic'' refers to properties which are Zariski dense (in the sense of algebraic geometry). For our purposes, a ``generic'' code is a random linear code for which the relevant matroid properties hold with probability $1$.} codes. In this paper, we denote this unique maximal matroid by $\MR(m,n,a,b)$. Since $C_{\col} \otimes C_{\row}$ has dimension $(m-a)(n-b)$, the rank of this matroid is $mn - (m-a)(n-b) = bm+an-ab$.

\paragraph{Bipartitie Rigidity (Birigidity) Matroid.} Kalai et al.~\cite{kalai2015Bipartite} posed the following geometric question, which is a  variant of the classical graph rigidity problem studied since the 19th century~\cite{Maxwell}. Consider a bipartite graph $G = (V, E)$ where $V$ is the disjoint union of $[m]$ and $[n]$ and $E \subseteq [m] \times [n]$. Consider two embedding maps $\alpha : [m] \to \R^{b} \times \{0^a\}$ and $\beta : [n] \to \{0^b\} \times \R^a$. Kalai et al.~\cite{kalai2015Bipartite} say this embedding is \emph{rigid} if no infinitesimal change to $\alpha$ and $\beta$ can preserve the ($\ell_2$) length of every edge $E$. It turns out such rigid graphs correspond to the spanning sets of a bipartite rigidity matroid with parameters $(m,n,a,b)$. Like in the case of $\MR$, if $\alpha$ and $\beta$ are chosen generically, there is a unique maximal \emph{generic birigidity matroid}, which we denote by $\BR(m,n,a,b)$ in this paper. See \cite{kalai2015Bipartite,jackson2024maximal,brakensiek2024Rigidity,cruickshank2025rigidity} for a more precise description of this matroid.

Recently Brakensiek et. al.~\cite{brakensiek2024Rigidity} proved all the $\MR$ and $\BR$ matroids are equal, so any properties of one such matroid apply to the other.

\begin{theorem}[Theorem~1.1(3) of \cite{brakensiek2024Rigidity}, restated]
For all $m, n, a, b \in \mathbb N$ with $m \ge a$ and $n \ge b$, we have that $\MR(m,n,a,b) = \BR(m,n,a,b)$.
\end{theorem}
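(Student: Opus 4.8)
The plan is to exhibit explicit linear representations of the two matroids over a common (infinite) field and to show that, after a suitable generic invertible change of coordinates, these representations become literally the same matroid on the ground set $[m]\times[n]$. Equality of the \emph{maximal} matroids then follows from a standard Zariski-density argument: for each fixed $E\subseteq[m]\times[n]$, ``$E$ is independent in the correctability (resp.\ birigidity) matroid of a particular instance'' is the non-vanishing of a polynomial in the parameters defining that instance, hence a nonempty Zariski-open set of parameters realizes it whenever it is realizable at all; thus the generic instance has the largest possible family of independent sets, so the generic matroid equals the maximal matroid $\MR(m,n,a,b)$ (resp.\ $\BR(m,n,a,b)$).

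For the $\MR$ side, recall that a pattern $E$ is correctable in $C_{\col}\otimes C_{\row}$ exactly when no nonzero codeword of $C_{\col}\otimes C_{\row}$ is supported inside $E$, equivalently when the columns indexed by $E$ of any parity-check matrix of $C_{\col}\otimes C_{\row}$ are linearly independent. Fixing parity-check matrices $H_{\col}\in\F^{a\times m}$ of $C_{\col}$ and $H_{\row}\in\F^{b\times n}$ of $C_{\row}$, the matrix $\bigl(\begin{smallmatrix}I_n\otimes H_{\col}\\ H_{\row}\otimes I_m\end{smallmatrix}\bigr)$ (encoding ``$H_{\col}$ annihilates every column'' and ``$H_{\row}$ annihilates every row'' of an $m\times n$ codeword) is a parity-check matrix of the tensor code, and, after reindexing its coordinates, its column indexed by $(i,j)\in[m]\times[n]$ is the vector equal to the $i$-th column $h^{\col}_i$ of $H_{\col}$ on the $j$-th block of an $\F^{na}$-part, equal to the $j$-th column $h^{\row}_j$ of $H_{\row}$ on the $i$-th block of an $\F^{mb}$-part, and zero elsewhere. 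Hence $\cM(C_{\col},C_{\row})$ is the column matroid of this matrix, and by the density argument $\MR(m,n,a,b)$ is realized when $H_{\col},H_{\row}$ are generic --- equivalently, when the $h^{\col}_i\in\F^a$ and the $h^{\row}_j\in\F^b$ form a generic family of vectors.

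For the $\BR$ side, the generic birigidity matroid likewise carries an explicit linear representation on $[m]\times[n]$ coming from the bipartite rigidity matrix, realized (again by density) by a generic embedding. The crux of the proof is to show that a generic invertible change of basis in the ambient column space of this matrix brings its row indexed by $(i,j)$ into the same ``tensor'' shape as above --- equal to some $v_i\in\F^a$ on the $j$-th block of an $\F^{na}$-part and to some $u_j\in\F^b$ on the $i$-th block of an $\F^{mb}$-part, with $\{v_i\}_i$ and $\{u_j\}_j$ generic --- so that under the dictionary $v_i\leftrightarrow h^{\col}_i$, $u_j\leftrightarrow h^{\row}_j$ a generic $\BR$ matrix and a generic $\MR$ parity-check matrix are carried into one another by an invertible linear map that fixes the index set $[m]\times[n]$. (A consistency check on this shape: any matrix of this form with generic $v_i,u_j$ has rank $bm+an-ab$, the unavoidable $ab$-dimensional degeneracy being the common outer-product span of its two coordinate blocks, which matches $\operatorname{rank}$ of the tensor-code parity-check matrix, namely $mn-(m-a)(n-b)$.) Granting the normalization, the column/row matroids coincide, and passing to maximal instances yields $\MR(m,n,a,b)=\BR(m,n,a,b)$.

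I expect the main obstacle to be precisely this normalization of the bipartite rigidity matrix: it is naturally written using affine (homogeneous) coordinates of the underlying point configuration, and each of its rows genuinely mixes the coordinate blocks of \emph{both} endpoints of the edge, so one must argue with care that an appropriate generic change of basis in the column space separates it into the clean tensor form above --- and, equally importantly, that this change of basis does not disturb the genericity (hence the maximality) that defines $\BR(m,n,a,b)$. Once this linear-algebraic reduction is in place, the comparison with the tensor-code parity-check matrix and the surrounding Zariski-density bookkeeping are routine.
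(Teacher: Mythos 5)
The statement you are asked about is not proved in this paper at all: it is imported verbatim as Theorem~1.1(3) of \cite{brakensiek2024Rigidity}, so there is no in-paper argument to compare against, and your proposal must stand on its own. Judged that way, it has a genuine gap at exactly the point you flag yourself. The Zariski-density bookkeeping (each family has a unique maximal matroid, realized by generic instances) is standard and fine, and your description of the tensor-code side is correct: the correctability matroid is the column matroid of the parity-check matrix whose column at $(i,j)$ equals $h^{\col}_i$ placed in block $j$ of an $\F^{na}$ part plus $h^{\row}_j$ placed in block $i$ of an $\F^{mb}$ part. But the claim that a single fixed generic invertible change of basis of the ambient space carries the rows of the bipartite rigidity matrix into this shape is not a finishing detail --- it is essentially the entire content of the theorem, and as stated it is doubtful. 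The rigidity row of the edge $(i,j)$ has the form $e_i\otimes(x_i,-y_j)+e_j\otimes(-x_i,y_j)$: besides the bilinear pieces (vertex $i$'s block carrying $y_j$, vertex $j$'s block carrying $x_i$) it contains the terms $e_i\otimes(x_i,0)$ and $e_j\otimes(0,y_j)$, in which the block index and the vector are tied to the \emph{same} vertex. A fixed linear map $T$ on the column space acts identically on every row, so it sends $e_i\otimes(x_i,0)$ to a fixed vector depending on $i$ alone; your sketch gives no mechanism by which these ``diagonal'' contributions are absorbed so that $T(r_{ij})$ becomes the purely bilinear column $e_j\otimes h^{\col}_i\oplus e_i\otimes h^{\row}_j$. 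Your consistency check only matches ranks ($bm+an-ab$ on both sides), and matching ranks is far from matching matroids.

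Had such a coordinate-change identification existed, the equivalence would have been an elementary observation rather than a main theorem of \cite{brakensiek2024Rigidity}. To repair the plan you would need either (i) to actually construct and verify the normalization, which at minimum seems to require operations that depend on the row (edge) being transformed, and then a separate argument that these do not alter independence of row sets; or (ii) to abandon the coordinate-change route and argue directly at the level of independent sets, e.g., by relating nonzero codewords of $C_{\col}\otimes C_{\row}$ supported on $E$ (equivalently, failures of correctability, via the dual code) to nontrivial infinitesimal motions of the bipartite framework on $E$, which is the kind of correspondence the cited paper develops. As written, the proposal records the right target and the right sanity checks, but the step that would constitute the proof is asserted, not established.
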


\paragraph{Matrix Completion.} In the case where $a=b$, $\BR(m,n,a,a)$ has an alternative interpretation as the \emph{matrix completion problem} (see \cite{singer2010uniqueness,cruickshank2025rigidity}). In particular, for $E \subseteq [m] \times [n]$, imagine that the entries indexed by $E$ are revealed in a (say, complex-valued) $m \times n$ matrix. For a generic (aka random) choice of these revealed entries, does there exist a rank $a$ matrix $M \in \C^{m \times n}$ which corresponds to the revealed entries? Such questions are closely related to the \emph{Netflix problem} in machine learning (e.g.,~\cite{nguyen2019low,dzhenzher2025low}).

For all three of these families of matroids, it is a substantial open question to give a \emph{deterministic} polynomial time algorithm for detecting whether a pattern $E \subseteq [m] \times [n]$ is independent in these matroids (e.g., \cite{Gopalan2016,holzbaur2021correctable,bgm2021mds,jackson2024maximal,brakensiek2024Rigidity,cruickshank2025rigidity}). Such an algorithm is only known when $a=1$ \cite{whiteley1989matroid,kalai2015Bipartite,Gopalan2016,bgm2021mds,brakensiek2023generic} or $m-a \le 3$ \cite{brakensiek2024Rigidity}. We emphasize that all three of these matroids have efficient \emph{randomized} algorithms by invoking the Schwarz-Zippel lemma. In particular, the case $a=b=2$ (aka rank $2$ matrix completion) is unresolved~\cite{bernstein2017completion,brakensiek2024Rigidity,cruickshank2025rigidity}. However, barring the resolution of $\mathsf{RP} = \mathsf{P}$, this approach will not yield an efficient deterministic algorithm.

In an alternative direction, Jackson and Tanigawa~\cite{jackson2024maximal} recently proved that the computational complexity of these rank functions lies in $\mathsf{coNP}$ if one makes a \emph{conditional} assumption on the structure of these rigidity matroids (see \cite[Conjecture 6.4]{jackson2024maximal}). Improving on this result, we conditionally show (under a weaker assumption) that detecting independence lies in $\mathsf{P}$. More precisely, our matroid-theoretic assumption is a positive answer to a open question of Mason~\cite{mason1981glueing}, which is formalized in Conjecture~\ref{conj:birigid}.

\begin{theorem}\label{thm:MR-algo}
Assuming \Cref{conj:birigid}, Algorithm~\ref{algo:MR-rank} computes whether $E \subseteq [m] \times [n]$ is independent in $\MR(m,n,a,b) = \BR(m,n,a,b)$ in deterministic $\poly(m,n)$ time.
\end{theorem}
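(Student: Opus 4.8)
The plan is to recast the independence test in $\MR(m,n,a,b)=\BR(m,n,a,b)$ as a two-sided variant of a local-property question, use an adaptation of the potential function of \cite{lms25} to obtain a one-sided relaxation of correctability, replace the generic column and row codes by explicit folded Reed--Solomon codes via \Cref{thm:main-equiv}, and finally invoke \Cref{conj:birigid} to argue that this relaxation is in fact tight. Concretely, \Cref{algo:MR-rank} will instantiate $C_{\col}$ and $C_{\row}$ by the explicit folded Reed--Solomon codes of \Cref{thm:explicit-local}, choosing (following \Cref{subsec:pass-to-FRS}) a folding parameter and an alphabet size $q=\poly(m,n)$ large enough that the slacked subspace-design hypothesis of \Cref{thm:FRS-threshold} holds for all the local profiles that arise; it will then compute $\dim\bigl((C_{\col}\otimes C_{\row})|_{\overline E}\bigr)$ by Gaussian elimination over $\F_q$ and report that $E$ is independent in $\MR(m,n,a,b)$ iff this dimension equals $(m-a)(n-b)$. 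Since this amounts to a single rank computation over a field of size $\poly(m,n)$, the running time is deterministic $\poly(m,n)$; and once we can compute the dimension of the space of codewords supported on $E$, the full rank $r_{\MR}(E)=|E|-\dim\{c\in C_{\col}\otimes C_{\row}:\supp(c)\subseteq E\}$ follows for free.

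For correctness I would establish two inclusions. First, \emph{completeness}: if $E$ is independent in $\MR(m,n,a,b)$ then the algorithm reports ``independent.'' Here I would follow \Cref{subsec:potential-matroid} to encode the event ``$C_{\col}\otimes C_{\row}$ has a nonzero codeword supported on $E$'' by a (variant of a) local profile $\cV_E$ built from the dimension constraints on $C_{\col},C_{\row}$ and the forbidden positions $\overline E$. If $E$ is independent in $\MR$, then a generic pair of codes avoids $\cV_E$, so by \Cref{thm:threshold} the rate lies strictly below the associated threshold, i.e.\ the potential never certifies $\cV_E$; feeding this into the tensor-code adaptation of \Cref{thm:FRS-threshold}, whose only hypothesis is the subspace-design property supplied by \Cref{thm:code-construction-sub} for our parameters, shows that the explicit folded Reed--Solomon instance also avoids $\cV_E$, so the algorithm reports ``independent.''

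Second, \emph{soundness}: if the algorithm reports ``independent'' then $E$ is genuinely independent in $\MR(m,n,a,b)$. This is the step that requires \Cref{conj:birigid}. By the analysis above, the patterns the explicit instance declares correctable are exactly those for which the potential does not certify $\cV_E$, so the algorithm is really computing the potential-function \emph{relaxation} of $\MR$-correctability. A priori this relaxation can be strict --- a tensor code may fail to correct an erasure pattern that a random linear code of the same dimension would correct, so the relaxation can declare an uncorrectable pattern correctable (this is exactly the one-sidedness discussed in \Cref{sec:matroid}). \Cref{conj:birigid}, a reformulation due to Cruickshank et al.\ of Mason's question on tensor products of matroids, asserts precisely that $\MR(m,n,a,b)$ is the most general matroid ``looking like'' a tensor product; via \Cref{cor:potential-MR} and \Cref{lem:tensor-two} this translates into the statement that every pattern passing the potential test is independent in $\MR$, closing the gap and completing the correctness proof.

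I expect the main work to lie not in any single estimate but in two conceptual places. The first is re-deriving the \cite{lms25} potential machinery and the reduction of \Cref{thm:FRS-threshold} for a \emph{tensor} of two codes rather than a single code --- setting up $\cV_E$ correctly over the product structure, handling the unfolding/duplication bookkeeping, and ensuring the folding and alphabet blow-ups stay polynomial in $m,n$ --- all of which is the content of \Cref{subsec:potential-matroid} and \Cref{subsec:pass-to-FRS}. The second is isolating exactly which matroid-theoretic hypothesis suffices to collapse the potential relaxation onto $\MR$: this is \Cref{conj:birigid}, and once it is granted everything else is elementary linear algebra over an explicit polynomial-size field.
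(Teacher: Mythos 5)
Your overall architecture matches the paper's: a potential-function (corank) relaxation of MR-correctability (\Cref{subsec:potential-matroid}), replacement of the generic codes by explicit Reed--Solomon-type codes justified by their subspace-design property (\Cref{subsec:pass-to-FRS}), \Cref{conj:birigid} (via \Cref{cor:potential-MR} and \Cref{lem:tensor-two}) to make the relaxation exact, and a final rank computation; you also place the conjecture on the correct side (soundness only, completeness unconditional). However, two concrete steps in your plan would fail as written. First, the field: you instantiate the codes over $\F_q$ with $q=\poly(m,n)$ and run Gaussian elimination over $\F_q$. But $\MR(m,n,a,b)$ and \Cref{conj:birigid} are characteristic-zero (generic) objects, and the whole chain of \Cref{lem:tensor-two} is carried out over a characteristic-zero field; the actual \Cref{algo:MR-rank} works over $\Q$ with $\gamma=2$ and odd integer evaluation points, and the $\poly(m,n)$ runtime rests on strongly polynomial Gaussian elimination over $\Q$ to control bit sizes. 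Over a poly-size prime field the completeness direction breaks: a pattern whose restricted generator matrix has full rank over $\Q$ can lose rank after reduction mod $p$, so your algorithm could wrongly report dependence, and there is moreover no known deterministic way to exhibit an element $\gamma\in\F_q^{\times}$ of the large multiplicative order the Wronskian analysis needs.

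Second, your completeness argument routes through the probabilistic threshold machinery (\Cref{thm:threshold}, \Cref{thm:FRS-threshold}, \Cref{thm:explicit-local}) applied to a profile $\cV_E$ encoding the tensor constraint. In this application the relevant ``locality'' is comparable to $m$ (the subspaces $U$ range over subcodes of $C_{\col}\subseteq\F^m$), so bounds of the shape $q^{-\eps n+b^2}$ are vacuous at polynomial $q$, and genericity of characteristic-zero codes does not by itself say anything about the rate threshold $R_{\cV_E}$ of an $\F_q$-random code. The paper avoids both issues: the forward direction is purely deterministic, via \Cref{lemma:potential-monotone} and the scaling lemma \Cref{lem:replace-FRS}, whose independence-preservation half (\Cref{prop:RS-independent}) uses the exact subspace-design inequality of \Cref{prop:RS-subspace-design} (a consequence of \Cref{thm:gk16-improved}) together with polynomially large scaling parameters ($t=2m^2n$, $s=8m^5n^4$, $b'=bt+mn$, etc.) chosen so that the design loss is strictly beaten. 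You correctly identify that the work lies in ``re-deriving the machinery for tensors,'' but the threshold-theorem shortcut you propose for that step does not go through; the direct potential/subspace-design computation over a characteristic-zero field is what is actually needed.
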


By a standard oracle reduction (e.g., \cite{oxley2006matroid}), this also implies a deterministic $\poly(m,n)$-time algorithm for the rank function of $\MR(m,n,a,b)$.

\subsubsection*{Proof Overview}

The proof of \Cref{thm:MR-algo} has a number of components which we outline as follows. Building on \Cref{sec:random-to-explicit}, our high-level goal is to show that the tensor product of two explicit folded Reed-Solomon codes ``simulates'' the $\MR(m,n,a,b)$ matroid. However, the accuracy of this simulation is contingent on the validity of \Cref{conj:birigid}.

\paragraph{\Cref{subsec:abstract}.} First, we build up the necessary background needed to formally discuss \Cref{conj:birigid}. In particular, we define the notion of an \emph{abstract birigidity matroid} \cite{jackson2024maximal,cruickshank2025rigidity} which captures matroids which ``look like'' birigidity matroids.  \Cref{conj:birigid} formalizes the intuition of matroid theorists that $\MR(m,n,a,b)$ is the (unique) maximal abstract birigidity matroid. We remark that  \Cref{conj:birigid} is equivalent to a positive answer to a much older question due to Mason~\cite{mason1981glueing}.

\paragraph{\Cref{subsec:potential-matroid}.} Inspired by the potential functions of \cite{lms25}, we define a potential function that is (unconditionally) at least the rank function of $\MR(m,n,a,b)$. We show this potential function satisfies the axioms of a matroid and that this matroid is an $(m,n,a,b)$ abstract birigidity matroid. As such, by Conjecture~\ref{conj:birigid}, we have that this potential function \emph{is} the rank function of $\MR(m,n,a,b)$.

\paragraph{\Cref{subsec:pass-to-FRS}.} Even with these facts in hand, directly computing this potential function is still difficult, as it corresponds to the behavior of the tensor product of two random linear codes. By adapting the framework of \Cref{sec:random-to-explicit}, we show that this potential function can be approximated by the tensor product of two explicit folded\footnote{Although we are using the theory of folded Reed--Solomon codes, in the actual analysis we ``unfold'' the Reed--Solomon codes and adjust the erasure patterns accordingly through what we call \emph{scaling}.} Reed-Solomon codes. To do this, we actually need a two-phase approach. First, we show that the tensor product of two random linear codes is closely approximated by the tensor product of one random linear code with a folded Reed-Solomon code. Then, we swap the roles of the two codes being tensored and construct a new potential function which allows us to replace the remaining random linear code with another folded Reed Solomon code. Unlike most applications of folded Reed Solomon codes, the folding parameters need to be of size $\poly(m,n)$ so that the folded Reed-Solomon sufficiently approximate the worst-case rank loss by the FRS codes.

\paragraph{\Cref{subsec:proof-thm}.} Finally, we present Algorithm~\ref{algo:MR-rank}, which constructs two folded Reed-Solomon codes, takes their tensor product, and performs a (linear algebra) rank computation to obtain the rank function of $\MR(m,n,a,b)$. The analysis is a straightforward combination of the tools built in the previous sections.

\subsection{Abstract Bipartite Rigidity Matroids}\label{subsec:abstract}

For the remainder of this section, we assume $\F$ is a field with characteristic $0$ (e.g., $\Q, \R, \C$). To properly discuss \Cref{conj:birigid}, we need to introduce the notion of an \emph{abstract birigidity matroid} (see \cite{jackson2024maximal,cruickshank2025rigidity}). There are a few equivalent definitions (see Theorem 5.29 of \cite{cruickshank2025rigidity}), but the following is the most convenient for our use case.

\begin{definition}[e.g., \cite{jackson2024maximal,cruickshank2025rigidity}]
For parameters $(m,n,a,b)$, we say that a matroid $\cM$ on round set $[m] \times [n]$ is an $(m,n,a,b)$-abstract birigidity matroid if it meets the following conditions.
\begin{itemize}
\item The rank of $\cM$ is $bm+an-ab$.
\item For every $A \subseteq [m]$ of size $a+1$ and every $B \subseteq [n]$ of size $b+1$, we have $A \times B$ is a circuit.
\end{itemize}
\end{definition}

To gain some intuition, we first prove that ordinary tensor codes are an example of abstract birigidity matroids if we make the assumption that $C_{\col}$ and $C_{\row}$ are \emph{MDS} codes.  In particular, an $[n,k]$-code $C \subseteq \F^n$ is MDS it attains the Singleton bound: every nonzero codeword has Hamming weight at least $n-k+1$. In matroid theory, such MDS codes correspond to \emph{uniform} matroids, where every subset of $[n]$ of size $n-k$ is a basis.

\begin{proposition}\label{prop:tensor-abstract}
For any $[m, m-a]$-MDS code $C_{\col}$ and $[n,n-b]$-MDS code $C_{\row}$, we have that the correctability matroid for $C_{\col} \otimes C_{\row}$ is an $(m,n,a,b)$ abstract birigidity matroid. Thus, $\MR(m,n,a,b)$ is an $(m,n,a,b)$ abstract birigidity matroid.
\end{proposition}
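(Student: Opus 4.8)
The statement has two parts: first, that the correctability matroid $\cM(C_{\col}, C_{\row})$ has the correct rank $bm+an-ab$ when both codes are MDS, and second, that every $A \times B$ with $|A| = a+1$, $|B| = b+1$ is a circuit. The rank claim is already recorded in the discussion of the $\MR$ matroid: since $C_{\col} \otimes C_{\row}$ has dimension $(m-a)(n-b)$, the correctability matroid on $[m]\times[n]$ has rank $mn - (m-a)(n-b) = bm+an-ab$, and this holds for any choice of codes, not just MDS ones. (One should double-check that the rank function is well-defined regardless of the code, i.e.\ that $(C_{\col}\otimes C_{\row})|_{\bar E}$ always has dimension \emph{at most} $(m-a)(n-b)$, which follows since puncturing cannot increase dimension, and that a pattern $E$ is independent precisely when this dimension drops by exactly $|E|$; MDS-ness is only needed for the second bullet.) So the first bullet is essentially immediate.

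For the second bullet I would argue in two steps: (i) $A \times B$ is dependent, and (ii) every proper subset of $A \times B$ is independent. For (i), dependence of $E = A\times B$ means there is a nonzero codeword of $C_{\col}\otimes C_{\row}$ supported on $A \times B$. Equivalently, I want a nonzero $m\times n$ matrix $c$, every row in $C_{\row}$ and every column in $C_{\col}$, that vanishes off $A\times B$. Restricting attention to the $|A|\times|B| = (a+1)\times(b+1)$ submatrix indexed by $A\times B$: I need an $(a+1)\times(b+1)$ matrix whose columns lie in $C_{\col}|_A$ and rows lie in $C_{\row}|_B$. Since $C_{\col}$ is $[m, m-a]$-MDS, the \emph{dual} code $C_{\col}^\perp$ is $[m,a]$-MDS, so its restriction to the $a+1$ coordinates $A$ is all of $\F^A$... but I actually want the primal picture. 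The cleaner route: a column vector supported on $A$ lies in $C_{\col}$ iff it is orthogonal to $C_{\col}^\perp$; since $\dim C_{\col}^\perp|_A = \min(a, a+1) = a$ (using MDS, as any $\le a$ coordinates are ``free'' for an $[m,a]$ code's restriction — more precisely $C_{\col}^\perp|_A$ has dimension $a$), the space of vectors in $\F^A$ orthogonal to $C_{\col}^\perp|_A$ has dimension $1$; call its generator $u \in \F^A$, and it is nowhere-zero (again by MDS: a codeword-like vector supported on $\le a$ positions... actually $u$ spans $C_{\col}|_A \cap (\text{vectors supported on } A)$, and since $C_{\col}$ has minimum distance $a+1$ no nonzero codeword is supported on $< a+1$ coordinates, so $u_i \ne 0$ for all $i \in A$). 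Symmetrically get $v \in \F^B$ nowhere-zero with $v$ (as a row) extending to a codeword of $C_{\row}$ supported on $B$. Then $c := u v^\top$, zero-padded, is a nonzero codeword of $C_{\col}\otimes C_{\row}$ supported exactly on $A\times B$. Hence $A\times B$ is dependent.

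For (ii), I must show removing any single entry makes the pattern independent, i.e.\ there is \emph{no} nonzero codeword of $C_{\col}\otimes C_{\row}$ supported on $(A\times B)\setminus\{(i_0,j_0)\}$. Suppose $c$ were such a codeword. Its support lies in $A\times B$, so by the rank-1 structure forced above — here I'd use that any codeword supported on $A\times B$, viewed as an $(a+1)\times(b+1)$ matrix with columns in the $1$-dimensional space $\langle u\rangle$ and rows in $\langle v\rangle$, must be a scalar multiple of $uv^\top$ — we get $c = \lambda uv^\top$ for some scalar $\lambda$. But $uv^\top$ has all entries nonzero, so if $c$ vanishes at $(i_0,j_0)$ then $\lambda = 0$, contradiction. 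Therefore $(A\times B)\setminus\{(i_0,j_0)\}$ is independent, and since this holds for every removed element, $A\times B$ is a circuit. The one subtlety to handle carefully is the claim that \emph{every} codeword supported on $A\times B$ is a scalar multiple of $uv^\top$: this needs that the column span condition ``column $j$ lies in $C_{\col}|_A$ and is supported on all of $A$... '' actually columns need only lie in $C_{\col}|_A$, which is all of $\F^A$ since $|A| = a+1 \le \dim C_{\col}|_A$? No — $\dim C_{\col}|_A = \min(m-a, a+1)$; if $m - a \ge a+1$ this is $a+1$, the full space, and the argument via $u$ breaks. I think the correct statement is: a matrix supported on $A\times B$ is a codeword iff each column, extended by zeros outside $A$, lies in $C_{\col}$, which forces each column to be in $\langle u\rangle$; this uses the \emph{zero-extension}, not just membership in $C_{\col}|_A$. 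So the space of codewords supported on $A\times B$ is $\langle u \rangle \otimes \langle v\rangle$, exactly $1$-dimensional. That is the crux, and it is where MDS (minimum distance $a+1$ and $b+1$) is used in both directions. Finally, the last sentence ``$\MR(m,n,a,b)$ is an abstract birigidity matroid'' follows because $\MR$ is the unique maximal matroid of this form, realized by generic (hence MDS, with probability 1) $C_{\col}, C_{\row}$ over a characteristic-$0$ field; the rank is the same $bm+an-ab$ and $A\times B$ remains a circuit by the just-proven fact applied to any MDS realization, so it satisfies both axioms.

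\textbf{Main obstacle.} The delicate point is pinning down that the codewords supported on $A\times B$ form a space of dimension exactly $1$ (spanned by a nowhere-zero rank-one matrix). Both the lower bound (existence of $uv^\top$) and the upper bound (no other codewords) rely essentially on the MDS property translated into a minimum-distance statement via zero-extension, and getting the linear-algebra bookkeeping right — especially not confusing $C_{\col}|_A$ with the set of zero-extensions of elements of $\F^A$ lying in $C_{\col}$ — is the part that needs care.
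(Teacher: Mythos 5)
Your proof is correct, but the circuit-minimality step takes a genuinely different route from the paper. For dependence both arguments exhibit the rank-one codeword $u\otimes v$ supported on $A\times B$ (the paper gets $u$ and $v$ by pure dimension counting, $\dim(C_{\col}\cap\F^A)\ge (m-a)+(a+1)-m=1$, with no MDS needed at this stage, whereas you invoke MDS already here). For independence of $A\times B\setminus\{(i_0,j_0)\}$, the paper runs a short ``peeling'' erasure-decoding argument: column $j_0$ has only $a$ erasures, so the MDS column code recovers it; afterwards every row has at most $b$ erasures, so the MDS row code finishes. You instead prove the stronger structural fact that the space of tensor codewords supported on $A\times B$ is exactly the one-dimensional space $\langle u\otimes v\rangle$ with $u,v$ nowhere zero (via shortening/duality: $C_{\col}^{\perp}$ is $[m,a]$-MDS so $C_{\col}^{\perp}|_A$ has dimension $a$, hence the zero-extensions of $\F^A$ lying in $C_{\col}$ form a line, nowhere vanishing by minimum distance $a+1$), and then note that a codeword vanishing at any single cell of $A\times B$ must be zero. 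Your handling of the crux --- distinguishing zero-extension membership in $C_{\col}$ from mere membership in the punctured code $C_{\col}|_A$ --- is exactly right, and it is where your argument needs the care you flag. The paper's route is shorter and uses only the erasure-correction property of MDS codes; yours is a bit heavier but yields extra information (the nullity of $A\times B$ is exactly $1$, and the unique circuit-supporting codeword is the nowhere-zero rank-one tensor), which makes the role of the MDS hypothesis transparent in both directions. The final deduction that $\MR(m,n,a,b)$ is an abstract birigidity matroid, via genericity implying MDS, matches the paper.
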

\begin{proof}
As previously discussed, the rank of the correctability matroid is $mn - (m-a)(n-b) = bm+an-ab$, as desired. It suffices to check for every $A \subseteq [m]$ of size $a+1$ and $B \subseteq [n]$ of size $b+1$ that $A \times B$ is a circuit. 

First, too see that $A \times B$ is dependent in $C_{\col} \otimes C_{\row}$, note that since $C_{\col}$ is $m-a$-dimensional, there exists at least one nonzero $u \in C_{\col}$ supported on $A$. Likewise, there is at least one $v \in C_{\row}$ supported on $B$. The codeword $u \otimes v$ is supported entirely on $A \times B$. Thus, erasing $A \times B$ makes $u \otimes v$ unrecoverable. Therefore, $A \times B$ is dependent.

However, consider any $(a, b) \in A \times B$, we claim that $A \times B \setminus \{(a,b)\}$ is independent. The $b$th column has exactly $a$ erasures. Thus, by using the column parity checks of $C_{\col}$, we can recover the entire column $[m] \times \{b\}$. This leaves our erasure pattern as $A \times (B \setminus \{b\})$. Every row of this pattern has at most $b$ erasures, so we can use the parity checks of $C_{\row}$ to recover the rest of the codeword. Thus, $A \times B \setminus \{(a,b)\}$ is independent.
\end{proof}

\begin{conjecture}[Conjecture 5.30~\cite{cruickshank2025rigidity}, restated]\label{conj:birigid}
$\BR(m,n,a,b)=\MR(m,n,a,b)$ is the maximal $(m,n,a,b)$ abstract birigidity matroid.
\end{conjecture}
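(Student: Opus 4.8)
The plan is to deduce \cref{conj:birigid} from \cref{prop:tensor-abstract} together with an extremality statement. Since \cref{prop:tensor-abstract} already shows that $\MR(m,n,a,b)$ is an $(m,n,a,b)$-abstract birigidity matroid, it remains to prove $\cM \preceq \MR(m,n,a,b)$ for \emph{every} $(m,n,a,b)$-abstract birigidity matroid $\cM$. Unpacking $\preceq$, this says exactly that every set dependent in $\MR(m,n,a,b)$ is dependent in $\cM$, and (dependency being inherited by supersets) it suffices to check this for the circuits of $\MR(m,n,a,b)$. So the first task is to pin down those circuits --- equivalently, to produce a combinatorial formula for the rank function $r_{\MR}$, i.e.\ a combinatorial characterization of which erasure patterns a generic pair of MDS codes $C_{\col},C_{\row}$ can correct.

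The natural candidate is the potential function $\Phi$ built in \Cref{subsec:potential-matroid}: unconditionally it is an upper bound on $r_{\MR}$, it is the rank function of a matroid $\cM_\Phi$, and $\cM_\Phi$ is itself an $(m,n,a,b)$-abstract birigidity matroid. With this in hand, \cref{conj:birigid} is equivalent to the conjunction of (i) $\cM_\Phi$ dominates every $(m,n,a,b)$-abstract birigidity matroid --- a purely combinatorial extremal claim that should follow, plausibly by induction on $a+b$ or on the size of the pattern, from the matroid axioms together with the defining axiom that all $(a{+}1)\times(b{+}1)$ rectangles are circuits, using submodularity of $\Phi$ and the line-peeling structure already visible in the proof of \cref{prop:tensor-abstract}; and (ii) $\Phi = r_{\MR}$, i.e.\ generic MDS tensor products exhibit \emph{no} rank drop beyond the one encoded by $\Phi$. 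Granting (i) and (ii), the chain $r_{\cM} \le \Phi = r_{\MR}$ holds for every abstract birigidity $\cM$, and with \cref{prop:tensor-abstract} this yields that $\MR(m,n,a,b)=\cM_\Phi$ is the (unique) maximal such matroid.

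The step I expect to be the genuine obstacle is (ii). Proving $\Phi = r_{\MR}$ is, essentially by definition, a positive resolution of \cref{ques:MR} for $a,b\ge 2$: one must show that for the balanced patterns --- where every row carries at most $a$ and every column at most $b$ erased entries, so no line can be recovered via MDS parity checks and the recursive strategy of \cref{prop:tensor-abstract} stalls --- the generic tensor product still attains the dimension predicted by $\Phi$. Concretely this demands controlling the determinantal ideal cutting out the ``bad'' configurations (nonzero codewords of $C_{\col}\otimes C_{\row}$ supported on $E$) and ruling out unexpected algebraic coincidences over the generic field, e.g.\ by exhibiting an explicit support certificate over a suitable transcendental extension. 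No such argument is currently known; this is precisely why the present paper takes \cref{conj:birigid} as a hypothesis rather than proving it, and a successful proof would simultaneously settle the long-standing question of Mason~\cite{mason1981glueing} on tensor products of matroids. A route that sidesteps (ii) --- a direct argument that ``abstract birigidity implies $\preceq \MR$'' without characterizing circuits --- seems unlikely, but cannot be ruled out a priori.
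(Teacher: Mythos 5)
There is a genuine gap here, and it is not a subtle one: the statement you were asked about is \Cref{conj:birigid}, which the paper explicitly does \emph{not} prove --- it is a restatement of an open question of Mason~\cite{mason1981glueing} (Conjecture 5.30 of \cite{cruickshank2025rigidity}) and is used throughout \Cref{sec:matroid} only as a hypothesis, e.g.\ in \Cref{cor:potential-MR}, \Cref{lem:tensor-two}, and \Cref{thm:MR-algo}. So there is no proof in the paper to match your proposal against, and your proposal does not supply one either. What you give is a reduction of the conjecture to two unproved claims: (i) the potential matroid $\cM(C_{\col}, n, b)$ of \Cref{subsec:potential-matroid} dominates every $(m,n,a,b)$ abstract birigidity matroid, and (ii) $\Phi = r_{\MR}$, i.e.\ the generic tensor product has no rank drop beyond what $\Phi$ records. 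You candidly flag (ii) as the real obstacle with no known argument --- correctly, since (ii) is essentially the content of the conjecture itself (the paper notes in \Cref{rem:conj-pro-con} that the fragment it actually needs is precisely $\cM(C_{\col},n,b) = \MR(m,n,a,b)$, and \Cref{lemma:potential-monotone} only gives the easy inclusion $\cM(C_{\col},n,b) \succeq \MR(m,n,a,b)$).

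Two further points. First, your step (i) is also not ``a purely combinatorial extremal claim that should follow'': an abstract birigidity matroid is constrained only by its rank $bm+an-ab$ and by the $(a{+}1)\times(b{+}1)$ rectangles being circuits, and nothing in the paper (nor in your sketch of ``induction on $a+b$ plus line-peeling'') shows that these two axioms force every dependency of $\cM(C_{\col},n,b)$ to be a dependency of an arbitrary abstract birigidity matroid; indeed, if (i) could be proved combinatorially, then together with the known $a=1$ and $m-a\le 3$ cases it would already be a publishable advance on Mason's question, and the counterexamples to the non-bipartite analogue in ambient dimension $\ge 4$ (discussed in \Cref{rem:conj-pro-con}) are a warning that such maximality statements can simply be false. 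Second, the potential matroid depends on the fixed column code $C_{\col}$ and is ``one-sided,'' so even the formulation of (i) needs care. In short: your writeup is an honest road map that correctly locates the difficulty, but it proves nothing beyond what \Cref{prop:tensor-abstract}, \Cref{lem:potential-matroid}, and \Cref{lem:potential-abstract} already establish, and the conjecture remains open.
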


This conjecture is equivalent to an affirmative answer to an open question raised by Mason~\cite{mason1981glueing} over 40 years ago on the most general possible tensor product of two uniform matroids. We also note that the conditional assumption made by Jackson and Tanigawa~\cite{jackson2024maximal} is a stronger assumption than \Cref{conj:birigid} that particular rank estimation techniques in the generic bipartite rigidity matroid are perfectly accurate~\cite{Larson}.

\begin{remark}\label{rem:conj-pro-con}
We now discuss the current evidence in the literature for and against \Cref{conj:birigid}. Much of this evidence is presented in Section 5 of the recent rigidity survey~\cite{cruickshank2025rigidity}.

Currently, the strongest evidence for \Cref{conj:birigid} being true is that it has been proved in the special cases when $a=1$ \cite{whiteley1989matroid,kalai2015Bipartite,Gopalan2016} or $m-a \le 3$ \cite{brakensiek2024Rigidity} (with other parameters arbitrary). Of note, the truth of \Cref{conj:birigid} when $a=1$ led \cite{brakensiek2023generic} to connect the structure of higher order MDS codes to the recent algorithms for computing non-commutative ranks \cite{GGOW20}. We note that the resolution of these cases $a=1$ and $m-a\le 3$ imply Algorithm~\ref{algo:MR-rank} is unconditionally correct in these settings, although efficient deterministic algorithms were already known in such cases~\cite{bgm2021mds,brakensiek2024Rigidity}.

Conversely, the strongest evidence against \Cref{conj:birigid} is that the natural analogue of the conjecture for (traditional) non-bipartite rigidity (where the entire graph is embedded into the same space) is false when the ambient dimension is 4 or greater. In fact, such counterexamples have been known for over 30 years--see the discussion surrounding Conjecture 5.7.1 in \cite{graver1993combinatorial} as well as the discussion in Section 5.3.2 of \cite{cruickshank2025rigidity}. However, no analogue of these constructions has been found in the non-bipartite setting which disproves \Cref{conj:birigid}.

To mitigate such concerns, we do not need the full power of \Cref{conj:birigid} to prove Theorem~\ref{thm:MR-algo}. In particular, the maximality of $\MR(m,n,a,b)$ is not strictly required; rather, we merely need that $\MR(m,n,a,b)$ is equal to the corresponding ``potential matroid'' which we construct in \Cref{subsec:potential-matroid}. Thus, if one is pessimistic about the validity of \Cref{conj:birigid}, one could use the novel matroid construction in \Cref{subsec:potential-matroid} as a new plan of attack toward generating counterexamples toward \Cref{conj:birigid} and negatively answering the question of Mason~\cite{mason1981glueing}.

We also briefly touch on the coding theory implication if the fragment of \Cref{conj:birigid} we use is false. In short, it would imply that the tensor product of two (random) folded linear codes has \emph{strictly} better\footnote{In fact, the folded RS codes will have a slightly smaller rate. However, it is a general phenomenon in coding theory that explicitly simulating random linear codes requires a slack from the threshold, so an arbitrarily small rate decrease as in our case should be in expectation. See Section III of \cite{BDGZ24} for a theory of \emph{approximate} MR tensor codes with a rate slack.} maximally recoverability guarantees than the tensor product of two (random) linear codes. In other words, for practical applications (see discussion in \cite{Gopalan2016}), it may be preferable to consider a folded linear architecture when designing tensor codes. For example, in \cite[Section 6]{ccs25}, the authors use the tensor product of two explicit folded linear codes to construct graph codes.
\end{remark}

\subsection{The Potential Matroid}\label{subsec:potential-matroid}

Fix parameters $m,n,a,b\in \mathbb N$. Fix an $[m, m-a]$ code $C_{\col} \in \F^m$. Given a set $A \subseteq [m]$, we let $\F^A \subseteq \F^m$ denote the vector space spanned by the indicator vectors indexed by $A$.  Inspired by \cite{lms25}, we now define the following potential function $\Phi$ on subsets $E \subseteq [m] \times [n]$ as follows.
\begin{align}
\Phi(E) := \max_{U \subseteq C_{\col}}\left[-b \dim(U) + \sum_{j=1}^n \dim(U \cap \F^{E_j})\right],\label{eq:Phi}
\end{align}
where for all $j \in [n]$, $E_j := \{i \in [m] \mid (i,j) \in E\}.$ Further define
\begin{align}
r(E) := |E| - \Phi(E).\label{eq:r}
\end{align}
Our first goal is to prove that $r$ is the rank function of a matroid. Toward this, we need the following subspace inequality.  

\begin{claim}\label{claim:supermodular}
For any subspaces $X_1, X_2, Y_1, Y_2 \subseteq \F_q^m$, we have that
\[
    \dim(X_1 \cap Y_1) + \dim(X_2 \cap Y_2) \le \dim ((X_1 \cap X_2) \cap (Y_1 \cap Y_2)) + \dim((X_1+X_2) \cap (Y_1+Y_2)).
\]
\end{claim}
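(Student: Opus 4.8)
The plan is to reduce the claim to a single application of the modular (dimension) law for subspaces, applied not to the $X_i$ or $Y_i$ individually but to the two ``diagonal'' subspaces $A_1 := X_1 \cap Y_1$ and $A_2 := X_2 \cap Y_2$. Recognizing that these are the right objects to feed into the modular law is the only non-mechanical step; everything else is bookkeeping about containments.

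First I would record two elementary facts about $A_1$ and $A_2$. The first is the identity
\[
A_1 \cap A_2 = (X_1 \cap Y_1) \cap (X_2 \cap Y_2) = (X_1 \cap X_2) \cap (Y_1 \cap Y_2),
\]
which is immediate from commutativity and associativity of intersection. The second is the containment
\[
A_1 + A_2 \subseteq (X_1 + X_2) \cap (Y_1 + Y_2),
\]
which holds because $A_1 \subseteq X_1 \subseteq X_1 + X_2$ and $A_2 \subseteq X_2 \subseteq X_1 + X_2$ give $A_1 + A_2 \subseteq X_1 + X_2$, and symmetrically $A_1 \subseteq Y_1$, $A_2 \subseteq Y_2$ give $A_1 + A_2 \subseteq Y_1 + Y_2$. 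In particular $\dim(A_1 + A_2) \le \dim\bigl((X_1 + X_2) \cap (Y_1 + Y_2)\bigr)$.

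Then I would invoke the modular law $\dim(A_1) + \dim(A_2) = \dim(A_1 + A_2) + \dim(A_1 \cap A_2)$ and substitute: the left side is exactly $\dim(X_1 \cap Y_1) + \dim(X_2 \cap Y_2)$; on the right side, $\dim(A_1 \cap A_2) = \dim\bigl((X_1 \cap X_2) \cap (Y_1 \cap Y_2)\bigr)$ by the first fact, and $\dim(A_1 + A_2) \le \dim\bigl((X_1 + X_2) \cap (Y_1 + Y_2)\bigr)$ by the second. Chaining these yields the claimed inequality. There is essentially no obstacle here beyond the initial choice of $A_1, A_2$; the computation is short and the field $\F_q$ plays no role (the statement holds over any field).
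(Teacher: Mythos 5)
Your proof is correct and is essentially the same argument as the paper's: both apply the modular dimension law to $A_1 = X_1 \cap Y_1$ and $A_2 = X_2 \cap Y_2$, use $A_1 \cap A_2 = (X_1 \cap X_2)\cap(Y_1 \cap Y_2)$, and bound $\dim(A_1 + A_2)$ by $\dim\bigl((X_1+X_2)\cap(Y_1+Y_2)\bigr)$ via the obvious containment. No gaps; nothing further is needed.
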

\begin{proof}
Observe that
\begin{align*}
\dim(X_1 \cap Y_1) + \dim(X_2 \cap Y_2) - \dim ((X_1 \cap X_2) \cap (Y_1 \cap Y_2)) &= \dim((X_1 \cap Y_1) + (X_2 \cap Y_2))\\
&\le \dim((X_1 + X_2) \cap (Y_1 + Y_2)),
\end{align*}
where the last inequality follows from the fact that $X_1 \cap Y_1 \subseteq (X_1 + X_2) \cap (Y_1 + Y_2)$ and $X_2 \cap Y_2 \subseteq (X_1 + X_2) \cap (Y_1 + Y_2).$
\end{proof}

We now prove that $r$ is the rank function of a matroid.

\begin{lemma}\label{lem:potential-matroid}
There exists a matroid $\mathcal M(C_{\col}, n, b)$ over $[m] \times [n]$ with rank function $r$.
\end{lemma}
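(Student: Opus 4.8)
The plan is to verify that $r(E) = |E| - \Phi(E)$ satisfies the standard axioms for a matroid rank function: (R1) $0 \le r(E) \le |E|$; (R2) monotonicity, $r(E) \le r(F)$ whenever $E \subseteq F$; and (R3) submodularity, $r(E \cup F) + r(E \cap F) \le r(E) + r(F)$. Translating through the definition, since $|E \cup F| + |E \cap F| = |E| + |F|$ and $|E| - |E'|$ counts elements, these become statements about $\Phi$: for (R1) I need $0 \le \Phi(E) \le |E|$; for (R2), $\Phi$ should not decrease too fast, precisely $\Phi(F) - \Phi(E) \le |F \setminus E|$ for $E \subseteq F$ (equivalently, adding one coordinate to $E$ raises $\Phi$ by at most $1$); and for (R3), $\Phi$ must be \emph{supermodular}: $\Phi(E \cup F) + \Phi(E \cap F) \ge \Phi(E) + \Phi(F)$.

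First I would dispatch (R1). Nonnegativity of $\Phi$ is immediate by taking $U = \{0\}$ in the maximization. For $\Phi(E) \le |E|$, fix the optimal $U$; then $-b\dim(U) + \sum_j \dim(U \cap \F^{E_j})$. The cleanest bound: $\sum_j \dim(U \cap \F^{E_j}) \le \sum_j |E_j| = |E|$, but I also want to subtract $b\dim(U) \ge 0$, which only helps. Actually $\dim(U \cap \F^{E_j}) \le \min(\dim(U), |E_j|)$, so $\Phi(E) \le |E|$ follows, giving $r(E) \ge 0$; and $r(E) \le |E|$ follows from $\Phi(E) \ge 0$. For (R2), adding a single coordinate $(i_0,j_0)$ to $E$ changes only $E_{j_0}$, enlarging $\F^{E_{j_0}}$ by one dimension; for any fixed $U$, $\dim(U \cap \F^{E_{j_0} \cup \{i_0\}}) \le \dim(U \cap \F^{E_{j_0}}) + 1$, so the bracketed quantity for each $U$ rises by at most $1$, hence so does the max. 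Iterating gives $\Phi(F) - \Phi(E) \le |F \setminus E|$, i.e. $r(E) \le r(F)$.

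The main obstacle is (R3), supermodularity of $\Phi$. Here is where \Cref{claim:supermodular} is designed to be used. Let $U_1 \subseteq C_{\col}$ achieve $\Phi(E)$ and $U_2 \subseteq C_{\col}$ achieve $\Phi(F)$. I will test the definition of $\Phi(E \cup F)$ with $U_1 + U_2$ and $\Phi(E \cap F)$ with $U_1 \cap U_2$ (both still subspaces of $C_{\col}$). Note $(E \cup F)_j = E_j \cup F_j$ and $(E \cap F)_j = E_j \cap F_j$, so $\F^{(E\cup F)_j} = \F^{E_j} + \F^{F_j}$ and $\F^{(E\cap F)_j} = \F^{E_j} \cap \F^{F_j}$. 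Then
\begin{align*}
\Phi(E \cup F) + \Phi(E \cap F) &\ge \Big[-b\dim(U_1 + U_2) + \sum_j \dim\big((U_1+U_2) \cap (\F^{E_j}+\F^{F_j})\big)\Big]\\
&\quad + \Big[-b\dim(U_1 \cap U_2) + \sum_j \dim\big((U_1 \cap U_2) \cap (\F^{E_j} \cap \F^{F_j})\big)\Big].
\end{align*}
Using $\dim(U_1+U_2) + \dim(U_1 \cap U_2) = \dim U_1 + \dim U_2$ for the $-b(\cdots)$ terms, and applying \Cref{claim:supermodular} coordinatewise with $X_1 = U_1, X_2 = U_2, Y_1 = \F^{E_j}, Y_2 = \F^{F_j}$ — which says $\dim(U_1 \cap \F^{E_j}) + \dim(U_2 \cap \F^{F_j}) \le \dim((U_1 \cap U_2) \cap (\F^{E_j} \cap \F^{F_j})) + \dim((U_1+U_2)\cap(\F^{E_j}+\F^{F_j}))$ — I get the right-hand side is at least $\big[-b\dim U_1 + \sum_j \dim(U_1 \cap \F^{E_j})\big] + \big[-b\dim U_2 + \sum_j \dim(U_2 \cap \F^{F_j})\big] = \Phi(E) + \Phi(F)$. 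This establishes supermodularity of $\Phi$, hence submodularity of $r$, completing the verification that $r$ is a matroid rank function; I would then invoke the standard fact that a function satisfying (R1)–(R3) is the rank function of a (unique) matroid on $[m] \times [n]$, which I name $\mathcal M(C_{\col}, n, b)$.
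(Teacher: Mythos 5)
Your proposal is correct and follows essentially the same route as the paper: verifying the rank axioms directly, with the key submodularity step obtained by testing $\Phi(E\cup F)$ and $\Phi(E\cap F)$ with $U_1+U_2$ and $U_1\cap U_2$, applying \Cref{claim:supermodular} coordinatewise, and using modularity of dimension for the $-b\dim(\cdot)$ terms. The only cosmetic difference is that you prove monotonicity by adding one coordinate at a time, whereas the paper bounds $\Phi(E)\ge\Phi(F)-|F\setminus E|$ in one computation; these are equivalent.
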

\begin{proof}

To show that $r$ is the rank function of a matroid, we have to prove the following properties (see Corollary 1.3.4 from \cite{oxley2006matroid})
\begin{enumerate}
\item $r(E) \le |E|$ for all $E \subseteq [m] \times [n]$.\label{item:2}
\item $r(E) \ge 0$ for all $E \subseteq [m] \times [n]$.\label{item:1}
\item For all $E \subseteq F$, we have that $r(E) \le r(F)$.\label{item:4}
\item $r$ is submodular: for all $E, F \subseteq [m] \times [n]$, we have that $r(E \cap F) + r(E \cup F) \le r(E) + r(F)$.\label{item:3}
\end{enumerate}
\paragraph{\Cref{item:2}.} First, by picking $U = 0$, we can see that $\Phi(E) \ge 0$. Thus, \Cref{item:2} immediately holds.

\paragraph{\Cref{item:1}.} Further, for any $U \subseteq C_{\col}$, we have that
\[
- b \dim(U) + \sum_{j=1}^n \dim(U \cap \F^{E_j}) \le \sum_{j=1}^n |E_j| = |E|.
\]
So, $\Phi(E) \le |E|$, or $r(E) \ge 0$, proving \Cref{item:1}.

\paragraph{\Cref{item:4}.} For $E \subseteq F$,
to show that $r(E) \le r(F)$, note that for any $U \subseteq C_{\col}$, we have that
\begin{align*}
- b \dim(U) + \sum_{j=1}^n \dim(U \cap \F^{E_j}) &\ge - b \dim(U) + \sum_{j=1}^n \left[\dim(U \cap \F^{F_j}) - |F_j \setminus E_j|\right]\\
&=- |F \setminus E| -b \dim(U) + \sum_{j=1}^n \dim(U \cap \F^{F_j}).
\end{align*}
This implies that $\Phi(E) \ge \Phi(F) - |F \setminus E|$, which implies that $r(E) \le r(F)$, proving \Cref{item:4}.

\paragraph{\Cref{item:3}.} Pick $U_E, U_F \subseteq C_{\col}$ such that

\begin{align*}
\Phi(E) &= - b \dim(U_E) + \sum_{j=1}^n \dim(U_E \cap \F^{E_j})\\
\Phi(F) &= - b \dim(U_F) + \sum_{j=1}^n \dim(U_F \cap \F^{F_j})
\end{align*}
Further observe that
\begin{align*}
\Phi(E \cap F) &\ge - b \dim(U_E \cap U_F) + \sum_{j=1}^n \dim((U_E \cap U_F) \cap \F^{E_j \cap F_j})\\
\Phi(E \cup F) &\ge - b \dim(U_E + U_F) + \sum_{j=1}^n \dim((U_E + U_F) \cap \F^{E_j \cup F_j}).
\end{align*}
Note that $\F^{E_j\cap F_j} = \F^{E_j} \cap \F^{F_j}$ and $\F^{E_j \cup F_j} = \F^{E_j} + \F^{F_j}$. Using \cref{claim:supermodular}, we have that
\begin{align*}
\Phi(E \cap F) + \Phi(E \cup F)
&\ge - b \dim(U_E \cap U_F) + \sum_{j=1}^n \dim((U_E \cap U_F) \cap (\F^{E_j} \cap \F^{F_j}))\\
&\ \ \ \ \ \ \ \ - b \dim(U_E + U_F) + \sum_{j=1}^n \dim((U_E + U_F) \cap (\F^{E_j} + \F^{F_j}))\\
&\ge -b \dim(U_E) - b\dim(U_F) + \sum_{j=1}^n \left[\dim(U_{E} \cap \F^{E_j}) + \dim(U_F \cap \F^{F_j})\right]\\
&= \Phi(E) + \Phi(F).
\end{align*}
Since $|E| + |F| = |E \cap F| + |E \cup F|$, we have that $r(E \cap F) + r(E \cup F) \le r(E) + r(F)$, proving \Cref{item:3}.

Thus, $r$ is indeed the rank function of a matroid.
\end{proof}

We remark that $\Phi(E) = |E| - r(E)$ is called the \emph{corank} function of $\cM(C_{\col}, n, b)$. We next show that $\cM(C_{\col}, n, b)$ has more independent sets than $\cM(C_{\col}, C_{\row})$.

\begin{lemma}\label{lemma:potential-monotone}
Let $C_{\row} \subseteq \F^n$ be an arbitrary $[n, n-b]$ code. If $E$ is a correctable erasure pattern for $C_{\col} \otimes C_{\row}$, then $E$ is an independent set of $\cM(C_{\col}, n, b)$. 
\end{lemma}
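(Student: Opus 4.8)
The goal is to show that if $E$ is correctable for $C_{\col}\otimes C_{\row}$, then $r(E)=|E|$, i.e.\ $\Phi(E)=0$. Recall $\Phi(E)=\max_{U\subseteq C_{\col}}\bigl[-b\dim(U)+\sum_{j=1}^n\dim(U\cap\F^{E_j})\bigr]$, and $\Phi(E)\ge 0$ always by taking $U=0$; so it suffices to prove $\Phi(E)\le 0$, i.e.\ for every subspace $U\subseteq C_{\col}$ we have $\sum_{j=1}^n\dim(U\cap\F^{E_j})\le b\dim(U)$. The plan is to argue the contrapositive: if some $U$ violates this inequality, then we can build a nonzero codeword of $C_{\col}\otimes C_{\row}$ supported on $E$, which would contradict correctability (since correctability of $E$ is equivalent to there being no nonzero codeword supported on $E$).

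\medskip
\noindent\textbf{Step 1: Reformulate correctability.} First I would recall (from the discussion preceding \Cref{ques:MR} and the definition of $\cM(C_{\col},C_{\row})$) that $E$ is a correctable erasure pattern for $C_{\col}\otimes C_{\row}$ if and only if there is no nonzero codeword $c\in C_{\col}\otimes C_{\row}$ with $\supp(c)\subseteq E$. So it suffices to construct such a $c$ from a bad $U$.

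\medskip
\noindent\textbf{Step 2: Build a codeword from a bad subspace $U$.} Suppose $U\subseteq C_{\col}$ satisfies $\sum_{j=1}^n\dim(U\cap\F^{E_j})> b\dim(U)$. Set $t:=\dim(U)$ and pick a basis $u_1,\dots,u_t$ of $U$; note every $u_\ell\in C_{\col}$. For each column $j$, let $W_j:=U\cap\F^{E_j}$, the subspace of vectors of $U$ supported inside $E_j$. A codeword of $C_{\col}\otimes C_{\row}$ whose $j$-th column lies in $U$ for every $j$ can be written as $c=\sum_{\ell=1}^t u_\ell\otimes v_\ell$ for some vectors $v_1,\dots,v_t\in\F^n$, and the constraint that each column lies in $C_{\col}$ is automatic; the constraint that each \emph{row} lies in $C_{\row}$ becomes a system of linear equations on the coefficients, and the support constraint $\supp(c)\subseteq E$ translates (using that $u_1,\dots,u_t$ are linearly independent, so the column $c|_{[m]\times\{j\}}=\sum_\ell (v_\ell)_j\, u_\ell$ is supported in $E_j$ iff $\sum_\ell (v_\ell)_j u_\ell\in W_j$) into membership constraints forcing, for each $j$, the vector $\bigl((v_1)_j,\dots,(v_t)_j\bigr)$ to lie in a subspace $Z_j\subseteq\F^t$ of dimension $\dim W_j$. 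The key dimension count: the space of tuples $(v_1,\dots,v_t)$ with all row-constraints of $C_{\row}$ (there are $b$ parity checks per coordinate-direction, acting on each of the $t$ "layers", i.e.\ $bt$ linear constraints total on $\sum_j$... more precisely the row code $C_{\row}$ has codimension $b$, so requiring all $t$ vectors $v_\ell\in C_{\row}$ cuts down by $bt$) together with the support constraints $v$-column-$j$ in $Z_j$ has dimension at least $\sum_{j=1}^n\dim(W_j)-bt>0$ by hypothesis. Hence a nonzero solution exists, giving a nonzero $c\in C_{\col}\otimes C_{\row}$ with $\supp(c)\subseteq E$, contradicting correctability.

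\medskip
\noindent\textbf{Step 3: Conclude.} Therefore no bad $U$ exists, so $\Phi(E)\le 0$, hence $\Phi(E)=0$ and $r(E)=|E|$, i.e.\ $E$ is independent in $\cM(C_{\col},n,b)$. \emph{The main obstacle} I anticipate is Step~2: making the bookkeeping precise so that the "support constraint restricted to column $j$" really does correspond to a subspace of $\F^t$ of dimension exactly $\dim(U\cap\F^{E_j})$ rather than something smaller, and correctly counting that imposing $v_\ell\in C_{\row}$ for all $\ell$ contributes exactly $bt$ (not more) to the codimension while the support constraints contribute $\sum_j(t-\dim W_j)$, so that the solution space has dimension $\ge tn - bt - \sum_j(t-\dim W_j)=\sum_j\dim W_j - bt>0$. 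Isolating the right linear-algebraic statement (essentially: parametrize column-$j$ of the codeword by its $U$-coordinates, transpose to view the constraints row-wise) and verifying the two codimension contributions are independent/additive is the crux; once that is set up, the rest is immediate.
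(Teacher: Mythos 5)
Your proposal is correct and follows essentially the same route as the paper: both argue the contrapositive by taking a subspace $U\subseteq C_{\col}$ witnessing $\Phi(E)\ge 1$ and producing a nonzero codeword of $C_{\col}\otimes C_{\row}$ supported on $E$ via a dimension count inside the space of matrices with columns in $U$ (the paper intersects $U\otimes C_{\row}$ with $\bigoplus_{j}(U\cap\F^{E_j})$ coordinate-free, while you do the identical count in a basis of $U$, with the row-code constraints contributing codimension $bt$ and the support constraints codimension $\sum_j(t-\dim(U\cap\F^{E_j}))$). The bookkeeping you flag as the crux goes through exactly as you anticipate, since the basis parametrization is injective and codimensions are subadditive.
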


\begin{proof}
We prove the contrapositive: assume that $E$ is a dependent set of $\cM(C_{\col}, n, b)$, we seek to show that the erasure pattern $E$ is not correctable in $C_{\col} \otimes C_{\row}$. Thus, $\Phi(E) \ge 1$. This means there exists $U \subseteq C_{\col}$ such that 
\[
  -b \dim(U) + \sum_{j=1}^n \dim(U \cap \F^{E_j}) \ge 1.
\]
We now invoke the a dimension-counting argument. Let $\F^m \otimes C_{\row}$ be the set of $m \times n$ matrices each of whose $m$ rows lies in $C_{\row}$. Let $\bigoplus_{j=1}^n (U \cap \F^{E_j})$ be the set of $m \times n$ matrices, such that for all $j \in [n]$, the $j$th row lies in $U$ and is supported on $E_j$. We see that
\begin{align*}
  \dim&\left[\F^m \otimes C_{\row} \cap \bigoplus_{j=1}^n (U \cap \F^{E_j})\right]\\
&= \dim\left[U \otimes C_{\row} \cap \bigoplus_{j=1}^n (U \cap \F^{E_j})\right]\\
&= \dim(U \otimes C_{\row}) + \sum_{j=1}^n \dim(U \cap \F^{E_j}) - \dim\left[U \otimes C_{\row} + \bigoplus_{j=1}^n (U \cap \F^{E_j})\right]\\
&\ge (n-b)\dim(U) + (b\dim(U) + 1) - \dim(U \otimes \F^n)\\
&= 1,
\end{align*}
Thus, there exists a nonzero codeword $c \in \F^m \otimes C_{\row} \cap \bigoplus_{j=1}^n (U \cap \F^{E_j}) \subseteq C_{\col} \otimes C_{\row}$. In other words, $C_{\col} \otimes C_{\row}$ has a nonzero codeword supported on $E$. Therefore, $E$ is not a correctable erasure pattern in $C_{\col} \otimes C_{\row}$, as desired.
\end{proof}

Next, we prove that $\cM(C_{\col}, n, b)$ is an abstract rigidity matroid when $C_{\col}$ is MDS.

\begin{lemma}\label{lem:potential-abstract}
If $C_{\col}$ is an MDS $[m, m-b]$ code, then $\cM(C_{\col}, n, b)$ is an $(m,n,a,b)$ abstract rigidity matroid.
\end{lemma}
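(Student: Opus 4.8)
The plan is to verify the two defining conditions of an $(m,n,a,b)$-abstract birigidity matroid for $\cM(C_{\col}, n, b)$ when $C_{\col}$ is an MDS $[m, m-b]$ code. (Note: despite the statement's ``$[m,m-b]$'', the rank count that works is with codimension $a$; I suspect the intended hypothesis is that $C_{\col}$ is MDS of the relevant dimension $m-a$, and I would first confirm the exact dimension convention. Below I write $a$ for the codimension of $C_{\col}$ and handle the general case.)

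\textbf{Step 1: Rank of the matroid.} I would compute $r([m]\times[n]) = mn - \Phi([m]\times[n])$. When $E = [m]\times[n]$, each $E_j = [m]$, so $\F^{E_j} = \F^m \supseteq U$ for every $U \subseteq C_{\col}$, hence $\Phi([m]\times[n]) = \max_U [-b\dim(U) + n\dim(U)] = (n-b)\dim(C_{\col})$. If $\dim(C_{\col}) = m-a$, this gives $r([m]\times[n]) = mn - (n-b)(m-a) = bm + an - ab$, matching the required rank. This step is routine once the dimension convention is pinned down.

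\textbf{Step 2: $A\times B$ is a circuit for $|A| = a+1$, $|B| = b+1$.} I would first show $A\times B$ is dependent, i.e. $\Phi(A\times B) \ge 1$. Since $C_{\col}$ has codimension $a$ and $|A| = a+1$, by the MDS property there is a nonzero codeword $u \in C_{\col}$ supported on $A$; take $U = \Span(u)$, so $\dim(U) = 1$. Then $\dim(U \cap \F^{A\times B)_j}) = 1$ exactly when $j \in B$ (since $U \subseteq \F^A$ and $(A\times B)_j = A$ for $j\in B$, $=\emptyset$ otherwise), giving $\Phi(A\times B) \ge -b\cdot 1 + (b+1)\cdot 1 = 1$, so $A\times B$ is dependent. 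Next, for any $(i_0,j_0) \in A\times B$, I would show $E' := A\times B \setminus \{(i_0,j_0)\}$ is independent, i.e. $\Phi(E') = 0$. Suppose for contradiction some $U \subseteq C_{\col}$ achieves $-b\dim(U) + \sum_j \dim(U \cap \F^{E'_j}) \ge 1$. Here $E'_j = A$ for $j \in B\setminus\{j_0\}$ (that's $b$ values of $j$) and $E'_{j_0} = A\setminus\{i_0\}$, and $E'_j = \emptyset$ otherwise. Since $U \subseteq \F^A$ is possible only for the part of $U$ inside $\F^A$; but more carefully, $\dim(U \cap \F^A) \le \dim(U)$ with equality iff $U \subseteq \F^A$, and $U \cap C_{\col} \cap \F^A = U \cap \F^A$ has dimension at most $|A| - a = 1$ by the MDS property (a codeword of an MDS code of codimension $a$ supported on a set of size $\le a$ must be zero, and $\F^A$ has dimension $a+1$, so $C_{\col} \cap \F^A$ is $1$-dimensional). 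Thus $\dim(U \cap \F^{E'_j}) \le \dim(U\cap\F^A) \le 1$ for each of the $b$ indices $j \in B\setminus\{j_0\}$, and $\dim(U \cap \F^{E'_{j_0}}) \le \dim(C_{\col}\cap \F^{A\setminus\{i_0\}}) = 0$ (a set of size $a$). Hence $\sum_j \dim(U\cap\F^{E'_j}) \le b$. Combined with $-b\dim(U)$: if $\dim(U)\ge 1$ then the total is $\le -b + b = 0$; if $\dim(U) = 0$ it is $0$. Either way $\Phi(E') \le 0$, contradiction. So $E'$ is independent, and since this holds for every $(i_0,j_0)$, $A\times B$ is a circuit.

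\textbf{Main obstacle.} The one genuinely delicate point is the bound $\dim(C_{\col} \cap \F^A) \le 1$ when $|A| = a+1$ and $\dim(C_{\col}) = m-a$: this is exactly where MDS-ness is used. The intersection $C_{\col}\cap\F^A$ is cut out inside the $(a+1)$-dimensional space $\F^A$ by the $a$ parity checks of $C_{\col}$ restricted to $A$; MDS guarantees these restricted parity checks have full rank $a$ (equivalently, no nonzero codeword is supported on $\le a$ coordinates), so the intersection is exactly $1$-dimensional. I would state this as a short preliminary observation (or cite the standard fact that an MDS code of codimension $a$ has no nonzero codeword of Hamming weight $\le a$, and its shortening to any $a+1$ coordinates is $1$-dimensional) and then the rest of Step 2 is the bookkeeping above. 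I would also double-check the degenerate cases (e.g. $a = 0$ or $b = 0$, or $U$ not contained in $\F^A$ but still meeting several $\F^{E'_j}$ nontrivially — handled since $\dim(U\cap\F^{E'_j}) \le \dim(U \cap \F^A)$ uniformly). Modulo fixing the dimension typo, no further ideas are needed.
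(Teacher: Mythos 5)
Your proposal is correct, and the first half (the rank computation and the dependence of $A\times B$ via a one-dimensional $U$ spanned by a codeword supported on $A$) coincides with the paper's argument; you are also right that the ``$[m,m-b]$'' in the statement is a typo for $[m,m-a]$, which is the dimension the paper itself uses in its proof. Where you genuinely diverge is the minimality step: the paper does not compute $\Phi(A\times B\setminus\{(i_0,j_0)\})$ at all, but instead invokes \Cref{prop:tensor-abstract} (the peeling argument showing this pattern is correctable in $C_{\col}\otimes C_{\row}$ for any auxiliary $[n,n-b]$ MDS row code) together with the monotonicity lemma (\Cref{lemma:potential-monotone}) to conclude the pattern is independent in $\cM(C_{\col},n,b)$. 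You instead show directly that $\Phi(E')=0$ for $E'=A\times B\setminus\{(i_0,j_0)\}$, using the MDS shortening facts $\dim(C_{\col}\cap\F^{A})=1$ and $\dim(C_{\col}\cap\F^{A\setminus\{i_0\}})=0$ to bound $\sum_j\dim(U\cap\F^{E'_j})\le b\dim(U)$ for every nonzero $U\subseteq C_{\col}$ (note your written bound of $b$ suffices since $-b\dim(U)+b\le 0$ once $\dim(U)\ge 1$). Your route is more self-contained --- it stays entirely inside the potential-matroid definition and needs no auxiliary row code or correctability argument --- while the paper's route buys economy by reusing two lemmas it has already established and keeping the potential-function manipulations to a minimum. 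Both are valid; no gap.
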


\begin{proof}
First, we need to show that the rank of $\cM(C_{\col}, n, b)$ is $bm+an-ab$. To see why, observe that for all $U \subseteq C_{\col}$, we have that
\[
  -b\dim(U) + \sum_{j=1}^n \dim(U \cap \F^{[m]}) = (n-b)\dim(U).
\]
Thus, $\Phi([m] \times [n]) = (n-b)\dim(C_{\col}) = (n-b)(m-a)$. Thus, $r([m] \times [n]) = mn - (m-a)(n-b) = bm+an-ab$, as desired.

Second, we seek to show that for any $A \subseteq [m]$ of size $a+1$ and $B \subseteq [n]$ of size $b+1$ that $A \times B$ is a circuit of $\cM(C_{\col}, n, b)$. First, to see why $A \times B$ is dependent, let $U_A \subseteq C_{\col}$ be the set of codewords supported on $A$. Since $C_{\col}$ is an $[m, m-a]$ MDS code and $|A| = a+1$, we have that $\dim(U_A) = 1$. Thus,
\[
  \Phi(A \times B) \ge -b \dim(U_A) + \sum_{j \in B} \dim(U_A \cap \F^{A}) = -b + (b+1) = 1.
\]
Thus, $r(A \times B) < |A \times B|$, so $A \times B$ is dependent.

Pick any $a \in A$ and $b \in B$. By \Cref{prop:tensor-abstract}, we have that $A \times B \setminus \{(a,b)\}$ is an independent set of $C_{\col} \times C_{\row}$ for any $[n, n-b]$ MDS code $C_{\row} \subseteq \F^n$. Thus, by \Cref{lemma:potential-monotone}, we have that $A \times B \setminus \{(a,b)\}$ is an independent set of $\cM(C_{\col}, n, b)$. 
\end{proof}

We conclude this section with the following observation. If $C_{\col} \subseteq \F^m$ is a generic $[m, m-a]$ code and $C_{\row} \subseteq \F^n$ is a generic $[n, n-b]$ code, then both codes are MDS (e.g., \cite{bgm2021mds}). Further, we have by \Cref{lemma:potential-monotone} that $\cM(C_{\col}, n, b) \succeq \cM(C_{\col}, C_{\row}) = \MR(m, n, a, b)$. Since $\cM(C_{\col}, n, b)$ is an $(m,n,a,b)$ abstract tensor matroid by \Cref{lem:potential-abstract}, we have the following corollary of \Cref{conj:birigid}.

\begin{corollary}\label{cor:potential-MR}
Assume \Cref{conj:birigid}. Let $C_{\col} \subseteq \F^m$ be a generic $[m, m-a]$ code, then $\cM(C_{\col}, n, b) = \MR(m, n, a, b)$.
\end{corollary}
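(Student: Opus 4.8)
The plan is to assemble the three ingredients developed in this subsection; the corollary is essentially a repackaging of \Cref{lemma:potential-monotone}, \Cref{lem:potential-abstract}, and \Cref{conj:birigid}. First I would record the standard fact that genericity implies the MDS property: a generic $[m,m-a]$ code $C_{\col}$ and a generic $[n,n-b]$ code $C_{\row}$ are both MDS, since being MDS is a Zariski-generic condition on the generator matrix (e.g.\ \cite{bgm2021mds}). Fix one such generic $C_{\row}$; by the definition of $\MR(m,n,a,b)$ as the provably unique maximal matroid of the form $\cM(C_{\col},C_{\row})$, attained by generic codes over a characteristic-$0$ field, we have $\cM(C_{\col},C_{\row}) = \MR(m,n,a,b)$.

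Next I would invoke monotonicity: by \Cref{lemma:potential-monotone}, every correctable erasure pattern of $C_{\col}\otimes C_{\row}$ is independent in $\cM(C_{\col},n,b)$, i.e.\ $\cM(C_{\col},n,b) \succeq \cM(C_{\col},C_{\row}) = \MR(m,n,a,b)$. For the matching bound on independent sets, I would use that $C_{\col}$ is MDS to apply \Cref{lem:potential-abstract}, which shows $\cM(C_{\col},n,b)$ is an $(m,n,a,b)$ abstract birigidity matroid: it has the correct rank $bm+an-ab$, and every $A\times B$ with $|A|=a+1$ and $|B|=b+1$ is a circuit.

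The conclusion then follows from \Cref{conj:birigid}, which asserts that $\MR(m,n,a,b)$ is maximal among $(m,n,a,b)$ abstract birigidity matroids; in particular no abstract birigidity matroid strictly dominates it. Since $\cM(C_{\col},n,b)$ is an abstract birigidity matroid satisfying $\cM(C_{\col},n,b) \succeq \MR(m,n,a,b)$, strict domination is impossible, and hence $\cM(C_{\col},n,b) = \MR(m,n,a,b)$, as claimed.

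There is no genuine mathematical obstacle beyond \Cref{conj:birigid} itself. The one point I would double-check while writing is the internal consistency of the parameters in \Cref{lem:potential-abstract} — its statement writes ``$[m,m-b]$'', whereas its proof and this application use $\dim C_{\col}=m-a$ — together with the compatibility of the rank value $mn-(m-a)(n-b)=bm+an-ab$ established there with the rank of $\MR(m,n,a,b)$; both are routine bookkeeping.
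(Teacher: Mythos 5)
Your proposal is correct and follows essentially the same route as the paper: genericity gives the MDS property, \Cref{lemma:potential-monotone} gives $\cM(C_{\col},n,b) \succeq \cM(C_{\col},C_{\row}) = \MR(m,n,a,b)$, \Cref{lem:potential-abstract} shows $\cM(C_{\col},n,b)$ is an $(m,n,a,b)$ abstract birigidity matroid, and maximality from \Cref{conj:birigid} forces equality. Your side remark about the ``$[m,m-b]$'' parameter typo in \Cref{lem:potential-abstract} is a fair observation but does not affect the argument.
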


This corollary will be of use in the next section, where we seek to replace $C_{\row}$ (and later $C_{\col}$) with a folded Reed-Solomon code.

\subsection{Replacing Generic Codes with FRS} \label{subsec:pass-to-FRS}

Building on the techniques of \Cref{sec:random-to-explicit}, we show how to replace the generic codes $C_{\col}$ and $C_{\row}$ with suitable (folded) Reed-Solomon codes. More precisely, fix parameters $m, n, a, b \in \N$. Fix parameters $t := 2m^2n$ and $d := mn$ and construct the following datum:
\begin{itemize}
\item Let $b' := bt + d$.
\item Pick $\beta_1, \hdots, \beta_n \in \F$ and $\gamma \in \F$ such that the set $Z := \{\alpha_i \gamma^j \mid i \in [n], j \in \{0,1 , \hdots, t-1\}\}$ has exactly $tn$ elements.
\item Let $C^{\RS}_{\row} \subseteq \F^{tn}$ be the $[tn, tn - b']$ Reed-Solomon code with evaluation points from $Z$. That is, $C^{\RS}_{\row} = \RS^{\F}_{tn, tn-b'}(\beta_1, \gamma\beta_1, \hdots, \gamma^{t-1}\beta_1, \hdots, \beta_n, \gamma\beta_n, \hdots, \gamma^{t-1}\beta_n).$
\end{itemize}
One acceptable choice of parameters (as done in \Cref{algo:MR-rank}) is to set $\gamma = 2$ and let $\beta_1, \hdots, \beta_n$ be consecutive odd integers.

We seek to compare the matroid $\cM(C_{\col}, C_{\row})$ with $\cM(C_{\col}, C^{\RS}_{\row})$. However, the matroids are on different ground sets. To correct for this, we need to ``scale'' an erasure pattern $E \subseteq [m] \times [n]$ as follows.

\begin{definition}[Pattern scaling]
Given $E \subseteq [m] \times [n]$ and parameter $s,t \in \mathbb N$, we define the $(s,t)$-scaling of $E$ to be the pattern $E^{s,t} \subseteq [sm] \times [tn]$ such that
\begin{itemize}
\item if $(i,j) \in E$ then $(s(i-1) + x, t(j-1) + y) \in E^{s,t}$ for all $(x,y) \in [s] \times [t]$, and
\item if $(i,j) \not\in E$ then $(s(i-1) + x, t(j-1) + y) \not\in E^{s,t}$ for all $(x,y) \in [s] \times [t]$.
\end{itemize}
\end{definition}

The main goal of this section is to prove the following key ``scaling'' lemma.

\begin{lemma}\label{lem:replace-FRS}
For all $E \subseteq [m] \times [n]$ we have that $E$ is an independent set of $\cM(C_{\col}, n, b)$ if and only $E^{1,t}$ is an independent set of $\cM(C_{\col}, C^{\RS}_{\row}).$
\end{lemma}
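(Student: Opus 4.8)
The key observation is that the potential matroid $\cM(C_{\col}, C^{\RS}_{\row})$ should behave, for the purpose of detecting correctable patterns, like the potential matroid $\cM(C_{\col}, n, b)$ after the row-direction has been ``unfolded'' and the erasure pattern $(1,t)$-scaled. The plan is to relate both sides to the existence of a nonzero codeword supported on the relevant erasure pattern, and then use the subspace-design property of the Reed--Solomon evaluation points $Z = \{\beta_i \gamma^j\}$ (via the folded Wronskian criterion, \cref{lem:wrons_ind}) to control the worst-case rank loss. Concretely, I would first unfold: view $C^{\RS}_{\row} \subseteq \F^{tn}$ as an $t$-folded code $\widehat{C}^{\RS}_{\row} \subseteq (\F^t)^n$ whose $i$-th block corresponds to evaluations at $\beta_i, \gamma\beta_i, \hdots, \gamma^{t-1}\beta_i$. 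Then $\cM(C_{\col}, C^{\RS}_{\row})$ on ground set $[m]\times[tn]$, restricted to $(1,t)$-scaled patterns $E^{1,t}$, is exactly asking whether $C_{\col} \otimes C^{\RS}_{\row}$ has a nonzero codeword supported on $E^{1,t}$, i.e.\ (after grouping the $t$ folded row-coordinates) whether there is a nonzero codeword in the tensor product whose support, as an element of $[m]\times[n]$ of ``folded blocks'', is contained in $E$.

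**Key steps.** (1) \emph{Reduce the ``only if'' direction to $\cM(C_{\col},n,b)$ being an overestimate.} If $E$ is independent in $\cM(C_{\col},n,b)$, then by \cref{lemma:potential-monotone} (in its contrapositive form, applied with $C_{\row} := C^{\RS}_{\row}$, which is an $[tn, tn-b']$ code — here one must check $b'$ is large enough, and indeed $b' = bt+d \ge bt$ so the relevant parity-check count works out) we need: a correctable pattern in $\cM(C_{\col}, C^{\RS}_{\row})$ pulls back to a correctable pattern in $\cM(C_{\col}, n, b)$. Actually the cleanest route is: show directly that $\Phi_{\cM(C_{\col},n,b)}(E) \ge 1 \iff$ there is a nonzero codeword of $C_{\col}\otimes C^{\RS}_{\row}$ supported on $E^{1,t}$, which would give the lemma immediately. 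The forward implication of this ($\Phi(E)\ge1 \Rightarrow$ bad codeword exists) is essentially the dimension-counting argument in the proof of \cref{lemma:potential-monotone}, but now we must use the specific structure of $C^{\RS}_{\row}$: taking $U\subseteq C_{\col}$ witnessing $\Phi(E)\ge1$, we need $U\otimes C^{\RS}_{\row}$ to have a nonzero element supported on $\bigoplus_j (U\cap\F^{E_j})^{t}$ (the $t$-fold row-blowup). (2) \emph{The converse, where the subspace design enters.} Suppose $E^{1,t}$ is \emph{dependent} in $\cM(C_{\col}, C^{\RS}_{\row})$, so there is a nonzero $c\in C_{\col}\otimes C^{\RS}_{\row}$ supported on $E^{1,t}$. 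Each column of $c$ lies in $C_{\col}$ and is supported on some $E_j$; each row of $c$ lies in $C^{\RS}_{\row}$. Let $U\subseteq C_{\col}$ be the column span of $c$. Then each row of $c$, viewed as a message evaluated at the points of $Z$, is a low-degree polynomial of degree $< b' = bt + d$; the support constraint forces, for each $j$, that the row's restriction to the $j$-th block of $t$ evaluation points lies in $U\cap \F^{E_j}$, i.e.\ in a space of dimension $\dim(U\cap\F^{E_j})$. Now the folded Wronskian / subspace-design machinery (as used in \cref{thm:FRS-threshold} and in \cite{guruswami2016explicit}) says: a nonzero space of polynomials of degree $< b'$ whose evaluations at each $\gamma$-orbit block $\{\beta_i,\dots,\gamma^{t-1}\beta_i\}$ are confined to a subspace $W_i$ of small dimension must satisfy $\sum_i (\dim W_i - \text{something}) \ge$ (number of ``free'' parameters), which after bookkeeping yields $\sum_j \dim(U\cap\F^{E_j}) \ge b'\cdot\dim(U)/t - \text{error} \ge b\dim(U) + 1$ once $t \gg d = mn$; hence $\Phi(E)\ge1$ and $E$ is dependent in $\cM(C_{\col},n,b)$.

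**Main obstacle.** The hard part is step (2): making the quantitative subspace-design estimate tight enough that the ``$+d$'' slack in $b' = bt + d$ exactly absorbs the worst-case rank loss from using a folded RS code rather than a generic one. Specifically I need a lemma of the form: for the $[tn, tn-b']$ RS code with evaluation points grouped into $n$ size-$t$ $\gamma$-orbit blocks, any codeword of column-span $U$ confined blockwise to subspaces $W_j\subseteq \F^{E_j}\cap U$ forces $\sum_j \dim(W_j) \ge (tn - b')\cdot(\text{rank of }U\text{ among evaluations}) $... — i.e.\ a ``folded Reed--Solomon has no low-dimensional bad configurations'' statement, which is precisely the subspace-designability of folded RS (\cref{thm:code-construction-sub}), applied with parameter $d' := \dim(U) \le a \le m$ and the choice $t = 2m^2n$ ensuring $\frac{d'(b'-1)}{t-d'+1} \le b\,d'\cdot 1 + (\text{negligible})$, mirroring \cref{rem:slacked-sub}. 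I would carry this out by invoking \cref{lem:wrons_ind} to translate linear independence of the would-be message polynomials into nonvanishing of folded Wronskian determinants and then counting roots exactly as in \cite{guruswami2016explicit}, being careful that the degree bound $b'$ and the folding parameter $t$ interact correctly. Once this estimate is in hand, both directions of the ``if and only if'' follow by combining it with the dimension-counting identity from \cref{lemma:potential-monotone}, and the remaining arguments are routine bookkeeping about how scaling interacts with supports.
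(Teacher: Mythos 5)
Your proposal is correct and follows essentially the same route as the paper: the easy direction is the scaled potential/dimension-counting argument (with the slack $d$ in $b'=bt+d$ absorbing the factor-$t$ blowup, via \cref{lemma:potential-monotone}), and the hard direction applies the Guruswami--Kopparty subspace-design/Wronskian bound for the RS code with $\gamma$-orbit evaluation blocks to the column/row spans of a nonzero codeword supported on $E^{1,t}$, with $t=2m^2n \gg d=mn$ making the rank loss negligible (the paper packages this as \cref{prop:RS-subspace-design}). Only harmless bookkeeping slips: the row polynomials have degree $< tn-b'$ (not $< b'$), the column span has dimension at most $m$ (not at most $a$), and the easy direction needs no RS-specific structure, since \cref{lemma:potential-monotone} applied to an arbitrary $[tn,tn-b']$ code suffices.
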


The proof of Lemma~\ref{lem:replace-FRS} requires a careful adaptation of the techniques of \Cref{sec:random-to-explicit}. In some sense, the adapted argument in this section is simpler, as we do not require the distinctness condition needed in \Cref{sec:random-to-explicit}.

\subsubsection{Dependent Sets Are Preserved}

As a first step toward proving \Cref{lem:replace-FRS}, we show that if $E$ is a dependent set of $\cM(C_{\col}, n, b)$, then $E^{1,t}$ is a dependent set of $\cM(C_{\col}, C^{\RS}_{\row}).$ The proof is an application of \Cref{lemma:potential-monotone}.

\begin{proposition}\label{prop:RS-dependent}
If $E$ is a dependent set of $\cM(C_{\col}, n, b)$ then  $E^{1,t}$ is a dependent set of $\cM(C_{\col}, nt, b')$ and thus a dependent set of $\cM(C_{\col}, C^{\RS}_{\row}).$
\end{proposition}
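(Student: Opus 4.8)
The plan is to reuse the subspace that certifies the dependence of $E$ in $\cM(C_{\col},n,b)$, verify that after scaling it still certifies dependence with the enlarged parameter $b'$, and then obtain dependence in $\cM(C_{\col},C^{\RS}_{\row})$ for free from \Cref{lemma:potential-monotone}. By \Cref{lem:potential-matroid}, $E$ being dependent in $\cM(C_{\col},n,b)$ means its corank $\Phi(E)\ge 1$, so there is a subspace $U\subseteq C_{\col}$ with $\sum_{j=1}^n\dim(U\cap\F^{E_j})\ge b\dim(U)+1$. Fix this $U$.

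Next I would record the only structural fact about scaling that is needed: writing an index of $[tn]$ as $j'=t(j-1)+y$ with $j\in[n]$ and $y\in[t]$, the definition of the $(1,t)$-scaling gives $(E^{1,t})_{j'}=E_j$, so each column $E_j$ appears with multiplicity exactly $t$ among the $tn$ columns of $E^{1,t}$. Evaluating the potential function of $\cM(C_{\col},nt,b')$ (that is, the matroid of \Cref{lem:potential-matroid} with $n$ replaced by $nt$ and $b$ by $b'$, living on $[m]\times[tn]$) at this same $U$ yields
\begin{align*}
-b'\dim(U)+\sum_{j'=1}^{tn}\dim\bigl(U\cap\F^{(E^{1,t})_{j'}}\bigr)
&= -b'\dim(U)+t\sum_{j=1}^n\dim\bigl(U\cap\F^{E_j}\bigr)\\
&\ge (tb-b')\dim(U)+t = t-d\dim(U),
\end{align*}
using $b'=bt+d$ in the last equality. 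Since $\dim(U)\le\dim(C_{\col})=m-a\le m$ and $t=2m^2n$, $d=mn$, the right-hand side is at least $t-dm=m^2n\ge 1$. Hence $\Phi(E^{1,t})\ge 1$ for $\cM(C_{\col},nt,b')$, so $E^{1,t}$ is dependent in that matroid.

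Finally, since $C^{\RS}_{\row}$ is an $[tn,tn-b']$ code, \Cref{lemma:potential-monotone} — applied with block length $tn$, parameter $b'$, and row code $C^{\RS}_{\row}$ — states that every correctable erasure pattern of $C_{\col}\otimes C^{\RS}_{\row}$ is independent in $\cM(C_{\col},nt,b')$. Taking the contrapositive, $E^{1,t}$, being dependent in $\cM(C_{\col},nt,b')$, is not correctable in $C_{\col}\otimes C^{\RS}_{\row}$, i.e.\ $E^{1,t}$ is dependent in $\cM(C_{\col},C^{\RS}_{\row})$.

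There is no genuine obstacle here; the point requiring care is the bookkeeping in the middle step — confirming that each scaled column repeats exactly $t$ times, and that the prescribed values $t=2m^2n$, $d=mn$, $b'=bt+d$ leave a comfortable (indeed $\Omega(m^2n)$) slack after subtracting $d\dim(U)$, which is exactly why those values were chosen. One should also make sure $\cM(C_{\col},nt,b')$ is read as the potential matroid of \Cref{lem:potential-matroid} instantiated at the scaled parameters, so that \Cref{lemma:potential-monotone} applies verbatim.
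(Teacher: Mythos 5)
Your proof is correct and follows essentially the same route as the paper: plug the certifying subspace $U$ into the potential function of the scaled matroid, use the identity $\sum_{j'=1}^{tn}\dim(U\cap\F^{(E^{1,t})_{j'}})=t\sum_{j=1}^{n}\dim(U\cap\F^{E_j})$ together with $b'=bt+d$ and $\dim(U)\le m$, and finish with \Cref{lemma:potential-monotone}. The only difference is cosmetic bookkeeping (you compute $t-d\dim(U)\ge m^2n$ directly, while the paper routes through the bound $b+\tfrac1m-\tfrac{b'}{t}>0$), so there is nothing to fix.
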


\begin{proof}
Since $E$ is a dependent set of $\cM(C_{\col}, n, b)$, we have that $\Phi(E) \ge 1$. In other words, for all $U \subseteq C_{\col}$, we have that
\[
  \sum_{j=1}^n \dim(U \cap \F^{E_j}) \ge b\dim(U) + 1 \ge \left(b + \frac{1}{m}\right)\dim(U).
\]
Let $\Phi'$ be the corresponding corank function of the matroid $\cM(C_{\col}, tn, b')$. Note then that
\begin{align*}
\Phi'(E^{1,t}) &= \max_{U \subseteq C_{\col}} \left[-b' \dim(U) + \sum_{j=1}^{tn} \dim(U \cap \F^{E^{1,t}_j})\right]\\
&= t\max_{U \subseteq C_{\col}} \left[-\frac{b'}{t}\dim(U) + \sum_{j=1}^n \dim(U \cap \F^{E_j})\right]\\
&\ge t\max_{U \subseteq C_{\col}} \left[-\frac{b'}{t}\dim(U) + \left(b + \frac{1}{m}\right)\dim(U) \right]\\
&= tm\left(b + \frac{1}{m}- \frac{b'}{t}\right) > 0,
\end{align*}
where we use the fact that $b' = bt + d < bt + \frac{t}{m}$. Thus, $E^{1,t}$ is dependent in $\cM(C_{\col}, nt, b')$.  By  \Cref{lemma:potential-monotone}, we have that $E^{1,t}$ is dependent in $\cM(C_{\col}, C^{\RS}_{\row})$, as desired.
\end{proof}

\subsubsection{Independent Sets Are Preserved}

Next, we show the converse of \Cref{prop:RS-dependent}, which is sufficient to prove \Cref{lem:replace-FRS}.

\begin{proposition}\label{prop:RS-independent}
If $E^{1,t}$ is a dependent set of $\cM(C_{\col}, C^{\RS}_{\row})$ then $E$ is a dependent set of $\cM(C_{\col}, n, b)$.
\end{proposition}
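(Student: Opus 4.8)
The plan is to prove the contrapositive in the form stated: assuming $E^{1,t}$ is dependent in $\cM(C_{\col}, C^{\RS}_{\row})$, we produce a subspace $U \subseteq C_{\col}$ witnessing $\Phi(E) \ge 1$, i.e. $\sum_{j=1}^n \dim(U \cap \F^{E_j}) \ge b\dim(U) + 1$. Since $E^{1,t}$ is dependent in $\cM(C_{\col}, C^{\RS}_{\row})$, there is a nonzero codeword $c \in C_{\col} \otimes C^{\RS}_{\row}$ supported on $E^{1,t} \subseteq [m] \times [tn]$. Each row of $c$ lies in $C^{\RS}_{\row} = \RS_{tn, tn-b'}(\beta_1, \gamma\beta_1, \dots, \gamma^{t-1}\beta_n)$, so it is the evaluation table of a polynomial of degree $< tn - b' = tn - bt - d$. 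Each column of $c$ lies in $C_{\col}$. First I would let $U \subseteq C_{\col}$ be the column span of $c$ (the span of the $tn$ columns), which is nonzero; write $r := \dim(U)$. The key structural fact is that every row of $c$ is then a vector in $U$ whose support (within the $tn$ coordinates) respects the block structure coming from $E^{1,t}$: the rows in block $j$ (i.e. columns $t(j-1)+1, \dots, tj$) are supported on $\F^{E_j}$ as column-vectors, so in fact each such row lies in $U \cap \F^{E_j}$.

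The crux is a root-counting / folded-Wronskian argument to show $\sum_{j=1}^n \dim(U \cap \F^{E_j})$ is large. I would argue as in \Cref{sec:random-to-explicit} and \cite{guruswami2016explicit}: pick a basis $v_1, \dots, v_r$ of $U$ realized as columns of $c$, and for each $i \in [m]$ the $i$-th row of $c$ is an $\F_q$-linear combination $\sum_{\ell} \lambda_{i\ell} v_\ell$ of the basis, where the coefficient vectors $(\lambda_{i1}, \dots, \lambda_{ir})$ for $i$ ranging over the relevant rows span a space of dimension $r$. Now fix a block $j \in [n]$. The rows of $c$ restricted to block $j$ are the evaluations of $m$ polynomials $p_1^{(j)}, \dots, p_m^{(j)}$ of degree $< tn - bt - d$ at the $t$ points $\beta_j, \gamma\beta_j, \dots, \gamma^{t-1}\beta_j$; their column span inside $\F^t$ (equivalently, inside $U$ projected to this block) has some dimension $c_j$. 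Because each row polynomial has degree $< t$ times something controlled, and the evaluation points in block $j$ form a $\gamma$-orbit, the folded Wronskian criterion (\Cref{lem:wrons_ind}) bounds how the local rank $c_j$ can be smaller than $\dim(U \cap \F^{E_j})$ — specifically, a linear combination of the polynomials that vanishes on block $j$ corresponds to a polynomial with a full $\gamma$-orbit of roots, and summing the degree budget over all $n$ blocks must not exceed $tn - bt - d - 1$. Translating: if $c_j$ denotes $\dim(U \cap \F^{E_j})$ then the row polynomials witness $\sum_j$ (drop in dimension) $\le$ (degree of any row polynomial)$/t < n - b - d/t$, which rearranges to $\sum_{j=1}^n \dim(U \cap \F^{E_j}) \ge tn \cdot(\text{something}) - \dots$; carrying through with $b' = bt + d$ and $t = 2m^2 n$, $d = mn$ gives $\sum_{j=1}^n \dim(U \cap \F^{E_j}) > b \dim(U)$, hence $\ge b\dim(U)+1$ since both sides are integers. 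This shows $\Phi(E) \ge 1$, so $E$ is dependent in $\cM(C_{\col}, n, b)$.

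The main obstacle I expect is the precise bookkeeping in the folded-Wronskian step: one must carefully identify which polynomials to apply \Cref{lem:wrons_ind} to (the rows of $c$, viewed blockwise), argue that a dimension drop of $U$ when intersected with $\F^{E_j}$ forces those row-polynomials to share a nontrivial relation whose "cost" is a full $\gamma^{<t}$-orbit of common roots in block $j$, and then sum these orbit-costs across the $n$ blocks against the global degree bound $< tn - b' = tn - bt - d$. The choices $t = 2m^2 n$ and $d = mn$ are calibrated exactly so that $d/t = 1/(2m)$ is small enough to absorb the $\lfloor \cdot \rfloor$ rounding losses incurred when passing from "fractional" rank deficiency to integer dimensions over the $n$ blocks; verifying that the slack is sufficient (i.e. $d$ is large enough to beat the $n$ rounding errors but $d/t$ small enough not to destroy the inequality $b + 1/m - b'/t > 0$ used in \Cref{prop:RS-dependent}) is the delicate quantitative point. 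Once that inequality is established, \Cref{lem:replace-FRS} follows immediately by combining \Cref{prop:RS-dependent} and \Cref{prop:RS-independent}.
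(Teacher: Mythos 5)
Your overall route is the same as the paper's: take a nonzero codeword $c$ supported on $E^{1,t}$, let $U$ be its column span, note that the blockwise rank of $c$ on the column block $I_j$ is at most $\dim(U\cap \F^{E_j})$ (beware: in your ``key structural fact'' you say the \emph{rows} lie in $U\cap\F^{E_j}$; it is the \emph{columns} of $c$ indexed by $I_j$ that lie there, and what you actually use is $\dim\operatorname{rowspan}(c|_{I_j})=\dim\operatorname{colspan}(c|_{I_j})\le\dim(U\cap\F^{E_j})$), and then lower-bound $\sum_j \operatorname{rank}(c|_{I_j})$ by a folded-Wronskian root count over the $\gamma$-orbits, finishing with the arithmetic in $t=2m^2n$, $d=mn$. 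The paper packages exactly this root-counting step as a standalone subspace-design statement, \Cref{prop:RS-subspace-design} (proved via \Cref{thm:gk16-improved}/\Cref{lem:Wronskian}), and then the proof of the proposition is a three-line computation; you propose to inline it.

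The one genuine issue is that your sketched quantitative step is miscalibrated, and \Cref{lem:wrons_ind} alone is not the right tool. Writing $\ell=\dim(\operatorname{rowspan} c)=\dim U\le m$ and $k=tn-b'$, the Wronskian argument does \emph{not} give ``$\sum_j(\text{dimension drop in block }j)\le(\text{degree of a row polynomial})/t< n-b-d/t$'': one cannot charge different blocks' vanishing combinations against the degree of a single row polynomial, since the combinations differ from block to block. The correct accounting is via the determinant $\det W_\gamma(f_1,\dots,f_\ell)(X)$ of a basis of the row space: its degree is at most $\ell k-\binom{\ell+1}{2}$ (and it is nonzero, which needs the degree/root structure of \Cref{lem:Wronskian}, not just the nonsingularity criterion), and each unit of dimension drop in block $j$ forces only $t-\ell+1$ roots $\alpha_j,\gamma\alpha_j,\dots,\gamma^{t-\ell}\alpha_j$, not a full orbit of $t$. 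This yields $\sum_j(\ell-\operatorname{rank}(c|_{I_j}))\le \frac{\ell(k-\ell)}{t-\ell+1}$, which for $\ell\ge 2$ is weaker than your claimed bound by roughly a factor $\ell$. Fortunately the weaker (correct) bound still suffices: one needs $(n-b)(t-m+1)>tn-bt-d-1$, i.e.\ $d+1>(m-1)(n-b)$, which holds since $d=mn$; this is precisely the paper's final computation. So your plan goes through once the root-counting is done at the level of the Wronskian determinant rather than a single row polynomial, which is exactly what \Cref{prop:RS-subspace-design} encapsulates.
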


Similar to argument in \Cref{sec:random-to-explicit}, the proof of \Cref{prop:RS-independent} crucially uses the subspace design properties underlying $C^{\RS}_{\row}$. We use the following variant of the subspace design conditions proved in \Cref{sec:subspace-design}.

\begin{proposition}\label{prop:RS-subspace-design}
For $j \in [n]$, let $I_j := \{(j-1)t+1, (j-1)t+2, \hdots, jt\}$. For any linear subcode $V \subseteq C^{\RS}_{\row}$ for which $\dim(V) \le t$, we have that
\[
  \sum_{j=1}^n \dim(V|_{I_j}) \ge n\dim(V) - \frac{\dim(V) (tn-b' - \dim(V))}{t - \dim(V) + 1}.
\]
\end{proposition}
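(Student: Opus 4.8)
The plan is to reduce this statement to a subspace design bound for folded Reed--Solomon codes, applied in the ``message space'' of $C^{\RS}_{\row}$. Recall $C^{\RS}_{\row}$ is the Reed--Solomon code with evaluation set $Z = \{\beta_i \gamma^j\}$ of size $tn$ and dimension $tn - b'$, so each codeword is the evaluation vector of a polynomial of degree $< tn - b'$. Grouping the $tn$ coordinates into the $n$ blocks $I_j$ exactly matches the folding structure with folding parameter $t$: the restriction $c|_{I_j}$ of a codeword to block $I_j$ records $(p(\beta_j), p(\gamma\beta_j), \hdots, p(\gamma^{t-1}\beta_j))$. First I would set $k := tn - b'$ and identify $C^{\RS}_{\row}$ with the message space $\F^{k}$ via the coefficient vector of the polynomial; then for each $j\in[n]$ the ``block map'' $\pi_j : \F^k \to \F^t$ is the $\F_q$-linear evaluation-and-fold map, and $\dim(V|_{I_j}) = \dim(V) - \dim(V \cap \ker \pi_j)$ for any subspace $V \subseteq \F^k$ (identified with a subcode). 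Hence $\sum_{j=1}^n \dim(V|_{I_j}) = n\dim(V) - \sum_{j=1}^n \dim(V \cap \ker\pi_j)$, and the claimed inequality is equivalent to
\[
\sum_{j=1}^n \dim(V \cap \ker\pi_j) \le \frac{\dim(V)\,(k - \dim(V))}{t - \dim(V) + 1}.
\]

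The key step is therefore to invoke the subspace design property of folded Reed--Solomon codes. By \Cref{thm:gk16-improved-intro} (equivalently \Cref{thm:gk16-improved} / \Cref{thm:code-construction-sub}), the kernels $\ker\pi_1, \hdots, \ker\pi_n \subseteq \F^k$ of the $t$-folded Reed--Solomon encoder form, for every $d' \in \{0,1,\hdots,t\}$, a $(d', \frac{d'(k-d')}{t-d'+1})$ subspace design — this uses exactly that $q > tn$ and that $\gamma$ generates a large enough cyclic group (both of which hold by our choice of $Z$ having $tn$ distinct elements and $\gamma = 2$). Applying this with $d' := \dim(V) \le t$ to the $\ell$-dimensional test space $W := V$ gives precisely $\sum_{j=1}^n \dim(V \cap \ker\pi_j) \le \frac{\dim(V)(k-\dim(V))}{t - \dim(V)+1}$, which is the displayed bound. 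Substituting back $k = tn - b'$ yields the statement of the proposition.

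The one technical point to verify carefully is that the datum used to build $C^{\RS}_{\row}$ genuinely realizes a $t$-folded Reed--Solomon code in the sense of \Cref{def:wronskian,lem:wrons_ind} and \Cref{thm:gk16-improved-intro}: namely, that the evaluation points within each block $I_j$ are the geometric progression $\beta_j, \gamma\beta_j, \hdots, \gamma^{t-1}\beta_j$ with a common ratio $\gamma$ of multiplicative order $> tn$ (over $\F$ of characteristic $0$ this is automatic since $\gamma = 2$ has infinite order), and that the $\beta_j$ are chosen so the $tn$ points are distinct — which is imposed in the construction. I expect this bookkeeping — matching our concrete construction to the hypotheses of the folded-Wronskian-based subspace design theorem, and being careful that the subspace design is stated for the \emph{encoder kernels} $\ker\pi_j$ rather than the code restrictions — to be the only real obstacle; the dimension-counting identity $\dim(V|_{I_j}) = \dim V - \dim(V \cap \ker\pi_j)$ is routine (first isomorphism theorem), and no distinctness/repeated-row condition as in \Cref{sec:random-to-explicit} is needed here.
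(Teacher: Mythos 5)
Your proposal is correct and follows essentially the same route as the paper: the paper likewise identifies the subcode $V$ with a subspace $U$ of the polynomial/message space, observes that $\dim(V|_{I_j}) = \dim(U) - \dim(U \cap H_j)$ where $H_j$ is exactly the kernel of the $j$th block-evaluation map (polynomials vanishing at $\beta_j, \gamma\beta_j, \hdots, \gamma^{t-1}\beta_j$), and then applies the improved Guruswami--Kopparty bound (\Cref{thm:gk16-improved}) with $d' = \dim(V) \le t$. The bookkeeping you flag (distinct evaluation points, order of $\gamma$, kernels versus restrictions) is handled in the paper exactly as you anticipate, so no gap remains.
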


We defer the proof of \Cref{prop:RS-subspace-design} to \Cref{subsec:deferred-proof}.

\begin{proof}[Proof of \Cref{prop:RS-independent}.]
Assume that $E^{1,t}$ is a dependent set of $\cM(C_{\col}, C^{\RS}_{\row})$. Thus, there is a nonzero matrix $M \in C_{\col} \times C^{\RS}_{\row}$ which is supported on $E^{1,t}$. Let $U_M := \operatorname{colspan}(M) \subseteq C_{\col}$. We seek to show that
\begin{align}
 - b\dim(U_M) + \sum_{j=1}^n \dim(U_M \cap \F^{E_j}) > 0,\label{eq:potential-goal}
\end{align}
as this would prove that $E$ is a dependent set of $\cM(C_{\col}, n, b)$. Let $V_M := \operatorname{rowspan}(M) \subseteq C^{\RS}_{\row}$, and note that $\dim(U_M) = \dim(V_M) = \rank M \le m < t$. By \Cref{prop:RS-subspace-design}, we have that
\[
  \sum_{j=1}^n \dim(V_M|_{I_j}) \ge n\dim(U_M) - \frac{\dim(U_M) (tn-b' - \dim(U_M))}{t - \dim(U_M) + 1}.
\]
For all $j \in [n]$, let $M|_{I_j}$ be the restriction of $M$ to the columns indexed by $I_j$. Since $V_M|_{I_j}$ is precisely the rowspan of $M|_{I_j}$ we have that $\dim(V_M|_{I_j}) \le \dim \operatorname{colspan}(M|_{I_j}) \le \dim(U_M \cap \F^{E_j}).$ Therefore, we have that
\begin{align*}
 - b\dim(U_M) &+ \sum_{j=1}^n \dim(U_M \cap \F^{E_j})\\ & \ge (n-b)\dim(U_M) -  \frac{\dim(U_M) (tn-b' - \dim(U_M))}{t - \dim(U_M) + 1}\\
&\ge \dim(U_M) \left[n - b - \frac{tn-b'-\dim(U_M)}{t - \dim(U_M)+1}\right]\\
&\ge n - b - \frac{tn-b'-1}{t-m+1} &\text{(since $U_M \neq 0$)}\\
&=\frac{(n-b)(t-m+1) - tn + tb + d + 1}{t-m+1}\\
&=\frac{d+1 - (m-1)(n-b)}{t-m+1} > 0,
\end{align*}
since $d = mn > (m-1)(n-b)$, as desired. This proves (\ref{eq:potential-goal}), completing the proof.
\end{proof}

\subsubsection{Tensoring Two Reed-Solomon Codes}

As a near-immediate corollary of \Cref{lem:replace-FRS}, we show that both $C_{\col}$ and $C_{\row}$ can be replaced by suitable RS codes. To do this, let $s = 8m^5n^4$ and $d' = 2m^3n^2$.

\begin{itemize}
\item Let $a' := sa + d'$.
\item Pick $\alpha_1, \hdots, \alpha_m \in \F$ and $\gamma \in \F$ such that the set $Y := \{\alpha_i \gamma^j \mid i \in [n], j \in \{0,1 , \hdots, s-1\}\}$ has exactly $sm$ elements.
\item Let $C^{\RS}_{\col} \subseteq \F^{sm}$ be the $[sm, sm - a']$ Reed-Solomon code with evaluation points from $Z$.
\end{itemize}

Again, we may set $\gamma = 2$ and $\alpha_1, \hdots, \alpha_m$ to be consecutive odd integers. Since $s = 2(2m^2n^2)^2n$ and $d' = (2m^2n^2)m$, by \Cref{lem:replace-FRS}, we have the following corollary.

\begin{corollary}\label{cor:transpose}
Let $C'_{\row}$ be any $[tn, tn-b']$ code. For all $E \subseteq [2m^2n^2] \times [m]$ we have that $E$ is an independent set of $\cM(C'_{\row}, m, a)$ if and only $E^{1,s}$ is an independent set of $\cM(C_{\row}', C^{\RS}_{\col}).$
\end{corollary}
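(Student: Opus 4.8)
The plan is to observe that \Cref{cor:transpose} is exactly \Cref{lem:replace-FRS} applied to a relabeled instance in which the roles of rows and columns are exchanged, so that essentially no new argument is needed beyond substituting parameters and checking that everything lines up. Concretely, I would invoke \Cref{lem:replace-FRS} under the substitution $m \mapsto tn$, $n \mapsto m$, and $b \mapsto a$ (the parameter ``$a$'' appearing in that lemma's setup plays no role in its statement and may be ignored). Under this substitution the lemma's folding parameter becomes $2(tn)^2\cdot m = 8m^5n^4 = s$, its additive slack becomes $(tn)\cdot m = 2m^3n^2 = d'$, and its code parameter ``$b'$'' becomes $as + d' = a'$. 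The lemma's ``column code'' $C_{\col}\subseteq\F^m$ is instantiated by the arbitrary $[tn, tn-b']$ code $C'_{\row}\subseteq\F^{tn}$ --- recall from \Cref{lem:potential-matroid} that $\cM(\cdot,\cdot,\cdot)$ imposes no requirement on this code beyond being a subspace of the ambient space --- and the lemma's folded Reed--Solomon ``row code'' $C^{\RS}_{\row}$ is instantiated by $C^{\RS}_{\col}\subseteq\F^{sm}$, whose $sm$ evaluation points $Y=\{\alpha_i\gamma^j\}$ are genuinely distinct because $\gamma=2$ and the $\alpha_i$ are consecutive odd integers (the $2$-adic valuation of $\alpha_i 2^j$ recovers $j$, and hence $\alpha_i$). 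A pattern $E\subseteq[2m^2n^2]\times[m] = [tn]\times[m]$ is then precisely a pattern in the lemma's $[m]\times[n]$, and $E^{1,s}$ is its $(1,s)$-scaling. With these identifications in place, the conclusion of \Cref{lem:replace-FRS} reads verbatim as the statement of \Cref{cor:transpose}.

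Since \Cref{lem:replace-FRS} has already been established for \emph{every} pattern $E$ and \emph{every} column code of the prescribed shape --- its two directions being \Cref{prop:RS-dependent} (dependence is preserved, via \Cref{lemma:potential-monotone}) and \Cref{prop:RS-independent} (independence is preserved, via the subspace-design bound \Cref{prop:RS-subspace-design}) --- the only thing genuinely left to check is that the reparametrization is internally well-posed, i.e.\ that $C^{\RS}_{\col}$ is a legitimate Reed--Solomon code. This requires $a' = as + d'\le sm$, which holds for $a < m$ (the case $a=m$ being degenerate): indeed $as + d'\le (m-1)s + d' = sm - s + 2m^3n^2 < sm$ since $s = 8m^5n^4 > 2m^3n^2$. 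One likewise checks without effort the mild inequalities the proof of \Cref{lem:replace-FRS} uses after the substitution, namely $d' < s/(tn) = 4m^3n^2$ and $d' = 2m^3n^2 > (tn-1)(m-a)$.

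I do not anticipate a genuine mathematical obstacle: all the substance lives in \Cref{lem:replace-FRS} and, beneath it, in the subspace-design estimate \Cref{prop:RS-subspace-design}. The one point demanding care is purely bookkeeping --- keeping the ``$m,n,a,b$'' used \emph{inside} the invoked instance of the lemma (where they become $tn$, $m$, something immaterial, and $a$) cleanly separated from the original ``$m,n,a,b$'' used to define $t$, $d$, $b'$, $C^{\RS}_{\row}$ and the code $C'_{\row}$ in the first place. I would therefore write out the substitution table explicitly as above, also recording the identity $s = 2(2m^2n^2)^2 m = 8m^5n^4$ so that the folding parameter of the relabeled lemma is indeed $s$.
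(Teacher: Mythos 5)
Your proposal is correct and is essentially the paper's own proof: the paper derives \Cref{cor:transpose} by exactly this reparametrized invocation of \Cref{lem:replace-FRS}, with $m\mapsto tn$, $n\mapsto m$, $b\mapsto a$, so that the new folding parameter, slack, and codimension become $s$, $d'$, and $a'$. Your bookkeeping is in fact slightly more careful than the paper's one-line justification (whose parenthetical ``$s=2(2m^2n^2)^2n$'' is a typo for $2(2m^2n^2)^2m$, as your substitution table correctly records), and the extra well-posedness and inequality checks you include are harmless additions.
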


By the theory of \Cref{subsec:potential-matroid}, we have our key structural result.

\begin{lemma}\label{lem:tensor-two}
Assuming \Cref{conj:birigid}, $E \subseteq [m] \times [n]$ is an independent set of $\MR(m, n, a, b)$ if and only if $E^{s,t}$ is an independent set of $\cM(C_{\col}^{\RS}, C^{\RS}_{\row})$.
\end{lemma}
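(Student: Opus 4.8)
The plan is to prove \Cref{lem:tensor-two} by composing the reductions already established in \Cref{subsec:potential-matroid} and \Cref{subsec:pass-to-FRS} with the transpose symmetry of tensor codes. Fix a generic $[m,m-a]$ code $C_{\col}\subseteq\F^{m}$ in the sense of \Cref{cor:potential-MR}. Recall that correctability of an erasure pattern $E$ in a tensor code $A\otimes B$ is equivalent to correctability of $E^{T}$ in $B\otimes A$, so the correctability matroid $\cM(A,B)$ on $[p]\times[q]$ is isomorphic to $\cM(B,A)$ on $[q]\times[p]$ via $E\mapsto E^{T}$; I also use the elementary scaling identities $(E^{s,t})^{T}=(E^{T})^{t,s}$ and $\bigl((E^{T})^{t,1}\bigr)^{1,s}=(E^{T})^{t,s}$. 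Since $tn=2m^{2}n^{2}$, the pattern $(E^{T})^{t,1}$ lies in $[2m^{2}n^{2}]\times[m]$, which is exactly the ground set \Cref{cor:transpose} applies to (that corollary being \Cref{lem:replace-FRS} with parameters $(m,n,a,b)$ relabeled to $(2m^{2}n^{2},m,b',a)$ and $C'_{\row}$ instantiated as $C^{\RS}_{\row}$). The chain of equivalences is then
\begin{align*}
E \text{ independent in } \MR(m,n,a,b)
&\iff E \text{ independent in } \cM(C_{\col}, n, b)\\
&\iff E^{1,t} \text{ independent in } \cM(C_{\col}, C^{\RS}_{\row})\\
&\iff (E^{T})^{t,1} \text{ independent in } \cM(C^{\RS}_{\row}, C_{\col})\\
&\iff (E^{T})^{t,1} \text{ independent in } \cM(C^{\RS}_{\row}, m, a)\\
&\iff (E^{T})^{t,s} \text{ independent in } \cM(C^{\RS}_{\row}, C^{\RS}_{\col})\\
&\iff E^{s,t} \text{ independent in } \cM(C^{\RS}_{\col}, C^{\RS}_{\row}),
\end{align*}
where the six steps use, in order: \Cref{conj:birigid} with \Cref{cor:potential-MR}; \Cref{lem:replace-FRS}; the transpose symmetry; the equivalence $(\star)$ discussed below; \Cref{cor:transpose}; and the transpose symmetry once more.

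Every step except $(\star)$ is immediate from an earlier result, and \Cref{conj:birigid} is needed only in the first step and (as explained below) inside $(\star)$. In particular, the ``only if'' direction of \Cref{lem:tensor-two} is \emph{unconditional}: replacing the first step by its unconditional half $\MR(m,n,a,b)\preceq\cM(C_{\col},n,b)$, which is \Cref{lemma:potential-monotone}, and using only the easy half of $(\star)$---``independent in $\cM(C^{\RS}_{\row},C_{\col})$ implies independent in $\cM(C^{\RS}_{\row},m,a)$'', which is again \Cref{lemma:potential-monotone} applied with $C^{\RS}_{\row}$ as the column code---one gets that if $E$ is independent in $\MR(m,n,a,b)$ then $E^{s,t}$ is independent in $\cM(C^{\RS}_{\col},C^{\RS}_{\row})$.

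It remains to prove $(\star)$: the correctability matroid $\cM(C^{\RS}_{\row},C_{\col})$, with $C_{\col}$ a generic $[m,m-a]$ code, equals the potential matroid $\cM(C^{\RS}_{\row},m,a)$. The inclusion $\cM(C^{\RS}_{\row},C_{\col})\preceq\cM(C^{\RS}_{\row},m,a)$ is \Cref{lemma:potential-monotone}, so it suffices to show that every independent set $F\subseteq[tn]\times[m]$ of the potential matroid---equivalently, every $F$ with $\sum_{j=1}^{m}\dim\!\bigl(U\cap\F^{F_{j}}\bigr)\le a\dim(U)$ for all $U\subseteq C^{\RS}_{\row}$---is a correctable pattern of $C^{\RS}_{\row}\otimes C_{\col}$ for generic $C_{\col}$. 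Since correctability of a fixed $F$ is Zariski-open in $C_{\col}$ and the dimension of the space of codewords of $C^{\RS}_{\row}\otimes C_{\col}$ supported on $F$ is upper semicontinuous in $C_{\col}$, it is enough to exhibit \emph{one} $[m,m-a]$ code over which $F$ is correctable. The dimension count behind this is: a nonzero codeword supported on $F$ with column span $U\subseteq C^{\RS}_{\row}$ lies simultaneously in $\bigoplus_{j=1}^{m}\bigl(U\cap\F^{F_{j}}\bigr)$ (dimension $\sum_{j}\dim(U\cap\F^{F_{j}})$) and in $U\otimes C_{\col}$ (dimension $\dim(U)(m-a)$), both inside $U\otimes\F^{m}$ (dimension $\dim(U)\,m$); hence when $\sum_{j}\dim(U\cap\F^{F_{j}})\le a\dim(U)$ a generic parity check for $C_{\col}$ forces this intersection to vanish, ruling out any such codeword with column span $U$.

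The main obstacle, which I expect to be the technical heart of the proof, is making a single choice of $C_{\col}$ succeed for the infinitely many possible column spans $U$ at once: the preceding count handles each $U$ separately, but a union of proper closed sets over an unbounded family of $U$ need not be proper. I would handle this by restricting attention to the Zariski-closed family of subspaces $U$ that can actually arise as column spans of matrices supported on $F$ (a much smaller family, which collapses further precisely when the potential inequalities are tight) and running an elimination-theoretic argument on the incidence variety of pairs $(C_{\col},U)$, or alternatively by working over the function field generated by the entries of a generic parity-check matrix. One may also try to shorten $(\star)$ by noting that since $C^{\RS}_{\row}$ is MDS, $\cM(C^{\RS}_{\row},m,a)$ is an $(tn,m,b',a)$ abstract birigidity matroid by \Cref{lem:potential-abstract}, so \Cref{conj:birigid} gives $\cM(C^{\RS}_{\row},m,a)\preceq\MR(tn,m,b',a)$, reducing $(\star)$ to the statement that the specific MDS code $C^{\RS}_{\row}$, tensored with a generic $[m,m-a]$ code, already realizes the generic tensor matroid $\MR(tn,m,b',a)$.
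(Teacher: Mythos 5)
Your ``only if'' direction is fine and is essentially the paper's argument: it chains \Cref{lemma:potential-monotone}, \Cref{lem:replace-FRS}, transposition, \Cref{lemma:potential-monotone} again, and \Cref{cor:transpose}, and, as you observe, it is unconditional. The genuine gap is in the ``if'' direction, which you route entirely through the single equivalence $(\star)$: $\cM(C^{\RS}_{\row}, C_{\col}) = \cM(C^{\RS}_{\row}, m, a)$ for generic $C_{\col}$. Nothing in the paper supplies the hard inclusion $\cM(C^{\RS}_{\row}, m, a) \preceq \cM(C^{\RS}_{\row}, C_{\col})$, and it does not follow from \Cref{conj:birigid}: \Cref{cor:potential-MR} identifies a potential matroid with the MR matroid only when the \emph{fixed} code in the potential matroid is generic, whereas in $(\star)$ the fixed code is the explicit Reed--Solomon code $C^{\RS}_{\row}$. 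Your fallback via \Cref{lem:potential-abstract} only yields $\cM(C^{\RS}_{\row}, m, a) \preceq \MR(tn, m, b', a)$ under the conjecture; closing the loop would additionally require $\cM(C^{\RS}_{\row}, C_{\col}) \succeq \MR(tn, m, b', a)$, i.e.\ that an RS-times-generic tensor code already realizes the generic tensor matroid --- a statement of ``semi-explicit MR tensor code'' type that is proved nowhere in the paper and is essentially open. Your dimension-count-plus-genericity sketch fails exactly where you flag it: for each fixed column span $U \subseteq C^{\RS}_{\row}$ the bad locus of codes $C_{\col}$ is proper and closed, but $U$ ranges over an infinite family, and the proposed incidence-variety or elimination argument is never carried out; this union-over-$U$ obstruction is precisely what makes MR constructions hard, so it cannot be treated as a routine genericity step.

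The paper avoids $(\star)$ altogether by treating the dependent direction asymmetrically, so that every appeal to ``potential matroid equals tensor matroid'' has a \emph{generic} fixed code. Assuming $E$ is dependent in $\MR(m,n,a,b)$, \Cref{cor:potential-MR} (conjecture, generic $C_{\col}$) gives dependence in $\cM(C_{\col}, n, b)$; \Cref{prop:RS-dependent} gives that $E^{1,t}$ is dependent in $\cM(C_{\col}, tn, b')$ and hence, by \Cref{lemma:potential-monotone}, in $\cM(C_{\col}, C'_{\row})$ for a generic $[tn,tn-b']$ code $C'_{\row}$, i.e.\ in $\MR(m,tn,a,b')$. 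After transposing, the conjecture is invoked a second time via \Cref{cor:potential-MR} --- legitimately, since the fixed code $C'_{\row}$ is now generic --- to get dependence of $F^{t,1}$ in $\cM(C'_{\row}, m, a)$; a second application of \Cref{prop:RS-dependent} gives dependence of $F^{t,s}$ in $\cM(C'_{\row}, C^{\RS}_{\col})$, and one finishes by passing from the generic row code $C'_{\row}$ to $C^{\RS}_{\row}$, using that for a fixed pattern and fixed $C^{\RS}_{\col}$ dependence under a generic row code implies dependence under every row code. Restructuring your backward direction along these lines removes the need for $(\star)$ entirely.
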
 

\begin{proof}
Pick (generic) $[m,m-a]$ code $C_\col$ and $[n,n-b]$ code $C_{\row}$ such that $\cM(C_{\col}, C_{\row}) = \MR(m,n,a,b)$. By \Cref{lemma:potential-monotone}, any independent set $E$ of $\MR(m, n, a, b)$ is an independent set of $\cM(C_{\col}, n, b)$. By \Cref{lem:replace-FRS}, we have that $E^{1,t}$ is an independent set of $\cM(C_{\col}, C_{\row}^{\RS})$. Let $F \subseteq [n] \times [m]$ be the ``transpose'' of $E$. Then, $F^{t,1}$ is an independent set of $\cM(C_{\row}^{\RS}, C_{\col})$. Thus, by \Cref{lemma:potential-monotone}, we have that $F^{t,1}$ is an independent set of $\cM(C_{\row}^{\RS}, m, a)$. By \Cref{cor:transpose} with $C'_{\row} = C^{\RS}_{\row}$, we have that $F^{t,s}$ is an independent set of $\cM(C_{\row}^{\RS}, C^{\RS}_{\col})$, so $E^{s,t}$ is an independent set of $\cM(C_{\row}^{\RS}, C^{\RS}_{\col})$.

For the other direction, assume $E$ is dependent in $\cM(C_{\col}, C_{\row})$. By \Cref{cor:potential-MR}, assuming \Cref{conj:birigid}, we have that $E$ is dependent in $\cM(C_{\col}, n, b)$.  Pick a generic $[tn, tn-b']$ code $C'_{\row}$ such that we have $\cM(C_{\col}, C'_{\row}) = \MR(m, tn, a, b')$.  By \Cref{prop:RS-dependent}, we have that $E^{1,t}$ is a dependent set of $\cM(C_{\col}, tn, b') \succeq \MR(m, tn, a, b') = \cM(C_{\col}, C'_{\row})$. Thus, $F^{t,1}$ is a dependent set of $\cM(C'_{\row}, C_{\col}) = \MR(tn, m, b', a) = \cM(C'_{\row}, m, a)$, where the latter equality holds by \Cref{cor:potential-MR} assuming \Cref{conj:birigid}. By \Cref{prop:RS-dependent}, we have that $F^{t,s}$ is a dependent set of $\cM(C'_{\row}, C_{\col}^{\RS})$. Thus, $E^{s,t}$ is a dependent set of $\cM(C^{\RS}_{\col}, C_{\row})$ and thus is a dependent set of $\cM(C^{\RS}_{\col}, C^{\RS}_{\row})$.
\end{proof}

\subsection{Proof of \Cref{thm:MR-algo}}\label{subsec:proof-thm}

We next present Algorithm~\ref{algo:MR-rank} which attempts to compute the rank function $\MR(m,n,a,b)$. In the algorithm, we assume that $\F = \Q$. The correctness of this algorithm is conditional on \Cref{conj:birigid}.

\begin{algorithm}\label{algo:MR-rank}
\caption{(Conditional) Independence Oracle}
\SetAlgoLined
\KwIn{Parameters $m,n,a,b \in \mathbb N$. Pattern $E \subseteq [m] \times [n]$.}
\KwOut{Whether $E$ is independent in $\MR(m, n, a, b)$}
{
$s := 8m^5n^4$\;
$t := 2m^2n$\;
$a' := sa + 2m^3n^2$\;
$b' := sb + mn$\;
$\alpha_1, \hdots, \alpha_m := 1, 3, \hdots, 2m-1$\;
$\beta_1, \hdots, \beta_n := 1, 3, \hdots, 2n-1$\;
$\gamma := 2$\;
$C^{\RS}_{\col} := \RS^{\Q}_{sm, sm-a'}(\alpha_1, \gamma\alpha_1, \hdots, \gamma^{s-1}\alpha_1, \hdots, \alpha_n, \gamma\alpha_n, \hdots, \gamma^{s-1}\alpha_m)$ \;
$C^{\RS}_{\row} := \RS^{\Q}_{tn, tn-b'}(\beta_1, \gamma\beta_1, \hdots, \gamma^{t-1}\beta_1, \hdots, \beta_n, \gamma\beta_n, \hdots, \gamma^{t-1}\beta_n)$ \;
\If{$E^{s,t}$\text{ is independent in }$C^{\RS}_{\col} \otimes C^{\RS}_{\row}$}{\Return{\textsf{INDEPENDENT}\;}}
\Return{\textsf{DEPENDENT}}\;
}
\end{algorithm}

To complete the proof of \Cref{thm:MR-algo}, we need to prove that Algorithm~\ref{algo:MR-rank} is correct (assuming \Cref{conj:birigid}) and that Algorithm~\ref{algo:MR-rank} it indeed runs in deterministic polynomial time.

\paragraph{Correctness.} The correctness follows immediately by \Cref{lem:tensor-two}.

\paragraph{Runtime Analysis.} We note that this algorithm runs in polynomial time unconditionally, as checking whether $E^{s,t}$ is independent in $C^{\RS}_{\col} \otimes C^{\RS}_{\row}$ is equivalent to checking that the dimension of $(C^{\RS}_{\col} \otimes C^{\RS}_{\row})|_{\overline{E^{s,t}}}$ is $(sm-a')(tn-b')$. This is equivalent to computing the rank of a matrix with $\poly(m,n)$ entries each of which is an integer representable in $\poly(m,n)$ bits. The rank of this matrix can be computed in polynomial time as Gaussian elimination can be done in strongly polynomial time over $\mathbb Q$~\cite{Sch86}.

\section{Improvements and Limitations for Subspace Designs}\label{sec:subspace-design}

We now return to $\F$ representing an arbitrary field (possibly finite). Recall that a collection of vector spaces $H_1, \hdots, H_n \subseteq \F^k$ each of codimension $s$ form an $(\ell,A)$-strong subspace design for some $\ell \le s$ if for all $\ell$-dimensional $U \subseteq \F^k$, we have that
\[
    \sum_{i=1}^n \dim(H_i \cap U) \le A.
\]
Likewise, we say that these spaces form an $(\ell, A)$-weak subspace design if for all $\ell$-dimensional $U \subseteq \F^k$, we have that
\[
    \sum_{i=1}^n \one[H_i \cap U \neq 0] \le A.
\]
As shown by this paper, and many prior works~(e.g., \cite{guruswami2016explicit,guruswami2018subspace,tamo24,cz24}), strong and weak subspace designs have proved important in near-optimal explicit constructions for various coding theory problems. Thus, we seek to better understand what is the optimal choice of $A$ for various choices of $\F,n,k,\ell$. Toward this problem, we present two results, the first improves on the well-known bound of Guruswmai and Kopparty~\cite{guruswami2016explicit}.

\begin{theorem}\label{thm:gk16-improved}
Assume that $|\F| > ns$, then there exists an explicit choice of $H_1, \hdots, H_n \subseteq \F^k$ of codimension $s$ which forms an $(\ell, \lfloor \frac{\ell(k-\ell)}{s-\ell+1}\rfloor)$ strong subspace design.
\end{theorem}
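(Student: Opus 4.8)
The plan is to follow the construction of Guruswami and Kopparty~\cite{guruswami2016explicit}, taking the $H_i$ to come from an explicit folded Reed--Solomon code, and to carefully re-examine the root-counting step in their analysis so as to shave off the claimed improvement from $\frac{\ell(k-1)}{s-\ell+1}$ (\Cref{thm:code-construction-sub}) down to $\lfloor\frac{\ell(k-\ell)}{s-\ell+1}\rfloor$. Concretely, let $\gamma$ be a generator of $\F_q^\times$ (this is why we need $|\F| > ns$, so there are enough distinct evaluation points $\alpha_i$ with the $\alpha_i, \gamma\alpha_i, \hdots, \gamma^{s-1}\alpha_i$ all distinct), and let $H_i \subseteq \F^k$ be the space of (coefficient vectors of) polynomials $f$ of degree $< k$ such that $f(\alpha_i) = f(\gamma\alpha_i) = \cdots = f(\gamma^{s-1}\alpha_i) = 0$; this is exactly the kernel subspace in \Cref{def:designcodes}, has codimension $s$, and the resulting design is the subspace designable structure of an $s$-folded RS code.

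The core of the argument is the following: fix an $\ell$-dimensional subspace $U \subseteq \F^k$, spanned by polynomials $f_1, \hdots, f_\ell$ of degree $< k$, and we must bound $\sum_i \dim(H_i \cap U)$. For each $i$, $\dim(H_i \cap U) \ge t$ means there is a $t$-dimensional subspace of $U$ all of whose members vanish at $\alpha_i, \gamma\alpha_i, \hdots, \gamma^{s-1}\alpha_i$; equivalently, some $t\times t$ minor structure of the folded Wronskian (\Cref{def:wronskian}) degenerates at $\alpha_i$. The key point, made precise via \Cref{lem:wrons_ind}, is that $\dim(H_i \cap U) \ge t$ forces the determinant of a certain $(\ell - t + 1)\times(\ell-t+1)$ or $\ell\times\ell$ folded-Wronskian-type matrix to vanish at $X = \alpha_i$ to a suitable order, and one totals the contributions by a single global degree bound on $\det W_\gamma(g_1, \hdots, g_m)(X)$ where the $g_j$ are chosen from $U$. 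The improvement over \cite{guruswami2016explicit} is that their degree estimate for this folded Wronskian determinant is $\ell(k-1)$ (roughly $\ell$ copies of ``degree $k-1$''), whereas a more careful count — using that the $j$-th row contributes polynomials of degree $< k$ but the leading terms across the $\ell$ rows cannot all independently achieve degree $k-1$ once one accounts for the multilinearity and the shared evaluation structure — gives the bound $\ell(k-\ell)$. Dividing by the per-point ``multiplicity'' $s - \ell + 1$ (each unit of dimension overlap at $\alpha_i$ costs $s-\ell+1$ toward the degree budget, since vanishing at $s$ shifted points while the test space has dimension $\ell$ leaves $s - \ell + 1$ genuinely independent vanishing conditions) yields $\sum_i \dim(H_i\cap U) \le \frac{\ell(k-\ell)}{s-\ell+1}$, and since the left side is an integer we may take the floor.

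I expect the main obstacle to be making the improved degree count on the folded Wronskian determinant fully rigorous: one must show that $\det W_\gamma(g_1,\hdots,g_\ell)(X)$, for $g_1, \hdots, g_\ell$ spanning $U \subseteq \F^k$ (polynomials of degree $< k$), has degree at most $\ell(k-\ell)$ rather than the naive $\ell(k-1)$. The natural route is to put $U$ in reduced row-echelon form with respect to the monomial basis, so that the $j$-th basis polynomial has degree at most $k-j$ for $j = 1, \hdots, \ell$ (after reindexing so that the ``pivot'' degrees are as small as possible — here one uses that the $\ell$ pivots are distinct, hence the smallest possible pivot-degree multiset is $\{k-1, k-2, \hdots, k-\ell\}$ in the worst case), and then the multilinear expansion of the folded Wronskian determinant picks up one shifted copy of $g_j(\gamma^{?} X)$ per row, so the total degree is at most $\sum_{j=1}^\ell (k-j) = \ell k - \binom{\ell+1}{2}$; a small additional argument (or a slightly different echelon normalization) is needed to bring this to exactly $\ell(k-\ell) = \ell k - \ell^2$, which is the form that matches the extremal bound of \Cref{thm:gk16-optimal-intro}. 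The remaining bookkeeping — verifying codimension exactly $s$, confirming that $|\F| > ns$ suffices for the evaluation points to be chosen distinctly, and that $\ell \le s$ keeps $s - \ell + 1 \ge 1$ so the bound is meaningful — is routine and follows \cite{guruswami2016explicit} verbatim.
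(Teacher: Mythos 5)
Your construction and overall plan (Guruswami--Kopparty spaces $H_i$ of polynomials vanishing at $\alpha_i,\gamma\alpha_i,\hdots,\gamma^{s-1}\alpha_i$, a folded-Wronskian determinant $p(X)=\det W_\gamma(f_1,\hdots,f_\ell)(X)$, roots of multiplicity $\dim(H_i\cap U)$ at the $s-\ell+1$ points $\alpha_i,\hdots,\gamma^{s-\ell}\alpha_i$, and a global degree budget) match the paper. But there is a genuine gap exactly where you flagged it: you hope to push the degree bound on $p(X)$ from the naive $\ell(k-1)$ all the way down to $\ell(k-\ell)=\ell k-\ell^2$ via a better echelon normalization, and no such bound exists. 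Choosing a basis of $U$ with distinct degrees gives $\deg p\le \sum_{j=1}^{\ell}(k-j)=\ell k-\binom{\ell+1}{2}$, and this is tight in general: for a subspace spanned by polynomials of degrees $k-1,k-2,\hdots,k-\ell$, the coefficient of $X^{d_1+\cdots+d_\ell}$ in $p$ is a nonzero multiple of a Vandermonde-type determinant in $\gamma^{d_1},\hdots,\gamma^{d_\ell}$ (nonzero once $\mathrm{ord}(\gamma)\ge k$), so $\deg p=\ell k-\binom{\ell+1}{2}$, which exceeds $\ell(k-\ell)$ for every $\ell\ge 2$. So the ``small additional argument'' you defer to cannot be supplied, and your route stalls at the weaker bound $\frac{\ell k-\binom{\ell+1}{2}}{s-\ell+1}$.

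The missing idea in the paper is a second, separate gain: besides the degree bound $\ell k-\binom{\ell+1}{2}$, one shows that $X^{\binom{\ell}{2}}$ divides $p(X)$. This is done by choosing (for this part of the argument, using basis-independence of $p$ up to a nonzero scalar) a basis of $U$ in row-echelon form from the low-order end, so that the $i$-th basis polynomial is divisible by $X^{i-1}$; then the $i$-th column of $W_\gamma$ is divisible by $X^{i-1}$, giving the factor $X^{\binom{\ell}{2}}$. Since all evaluation points $\gamma^{j}\alpha_i$ are nonzero, these roots at $0$ are disjoint from the roots coming from the subspace-intersection conditions, and the accounting becomes
\[
\ell k-\binom{\ell+1}{2}\;\ge\;\deg p\;\ge\;\binom{\ell}{2}+(s-\ell+1)\sum_{i=1}^{n}\dim(H_i\cap U),
\]
which, using $\binom{\ell+1}{2}+\binom{\ell}{2}=\ell^2$, yields $\sum_i\dim(H_i\cap U)\le\frac{\ell(k-\ell)}{s-\ell+1}$ and hence the floor. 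In short: the improvement is split between a tighter degree bound ($\ell k-\binom{\ell+1}{2}$ rather than $\ell(k-1)$) and a guaranteed zero of order $\binom{\ell}{2}$ at the origin, not a degree bound of $\ell(k-\ell)$ alone.
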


This improves on Guruswami and Kopparty's bound of $\lfloor \frac{\ell(k-1)}{s-\ell+1}\rfloor$ by using a near-identical construction and a slightly more refined analysis.

Our second result, more surprisingly, shows that \cref{thm:gk16-improved} is tight when $\F$ is an algebraically closed field and $n$ is sufficiently large.

\begin{theorem}\label{thm:gk16-optimal}
Let $\F$ be an algebraically closed field (of any characteristic), then for any $H_1, \hdots, H_n \subseteq \F^k$ of codimension $s$ (possibly with repetition), if $n \le \frac{\ell(k-\ell)}{s-\ell+1}$, then there exists an $\ell$-dimensional subspace $U \subseteq \F^k$ which has nontrivial intersection with each $H_i$.  In other words, if $n \ge \lfloor \frac{\ell(k-\ell)}{s-\ell+1}\rfloor $ then a $(\ell, \lfloor \frac{\ell(k-\ell)}{s-\ell+1}\rfloor - 1)$ weak subspace design does not exist.
\end{theorem}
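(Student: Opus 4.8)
The plan is to view the candidate subspaces $U$ as points of the Grassmannian $\Gr(\ell,k)$, regard the constraint imposed by each $H_i$ as a Schubert condition, and show by a Chow-ring computation that these conditions cannot jointly cover $\Gr(\ell,k)$ when $n(s-\ell+1)\le\ell(k-\ell)$. We may assume $\ell\le s$, since otherwise every $\ell$-dimensional subspace already meets every $H_i$; and we assume $s\le k-1$, so that each $H_i\neq 0$. Put $c:=s-\ell+1\ge 1$ and $N:=\ell(k-\ell)=\dim\Gr(\ell,k)$; the hypothesis $n\le\ell(k-\ell)/(s-\ell+1)$ gives $nc\le N$. Let $G:=\Gr(\ell,k)$, a smooth projective variety over the algebraically closed field $\F$, and for each $i$ set $Z_i:=\{\,W\in G:\ W\cap H_i\neq 0\,\}$. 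The theorem asserts $\bigcap_{i=1}^n Z_i\neq\emptyset$.

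First I would identify each $Z_i$ as a special Schubert variety. A short incidence-variety count (parametrize pairs $(W,L)$ with a line $L\subseteq W\cap H_i$, then project to $G$) shows $Z_i$ is irreducible of codimension $(k-\ell)-(\dim H_i-1)=s-\ell+1=c$ in $G$. Moreover, since $\mathrm{GL}_k$ acts transitively on codimension-$s$ subspaces and acts trivially on $A^\ast(G)$, the class $[Z_i]\in A^c(G)$ does not depend on $H_i$ and equals the special Schubert class $\sigma_c$ (the class of $\{W:\dim(W\cap V)\ge 1\}$ for any fixed $(k-s)$-dimensional $V$); see e.g.\ \cite{lakshmibai2015Grassmannian}. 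Thus $[Z_1]\cdots[Z_n]=\sigma_c^{\,n}$ in $A^\ast(G)$, and the crux is to prove $\sigma_c^{\,n}\neq 0$.

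For this I would iterate Pieri's rule inside the $\ell\times(k-\ell)$ Young rectangle. Fill the rectangle cell by cell in reading order; for $0\le i\le n$ let $\lambda^{(i)}$ be the partition formed by its first $ic$ cells, which fits in the rectangle because $nc\le N$. Each skew shape $\lambda^{(i+1)}/\lambda^{(i)}$ consists of $c$ consecutive cells in reading order; since $c\le k-\ell$ they occupy at most two consecutive rows, and a block of $c\le k-\ell$ cells wrapping from the tail of one row to the head of the next never repeats a column, so $\lambda^{(i+1)}/\lambda^{(i)}$ is a horizontal strip. Hence by Pieri's rule $\sigma_{\lambda^{(i+1)}}$ occurs in $\sigma_c\cdot\sigma_{\lambda^{(i)}}$ with coefficient at least $1$, and inducting from $\lambda^{(0)}=\emptyset$ shows $\sigma_{\lambda^{(n)}}$ occurs in $\sigma_c^{\,n}$ with positive coefficient. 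Therefore $\sigma_c^{\,n}\neq 0$.

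Finally I would invoke the intersection-theoretic fact that on a smooth variety the product $[Z_1]\cdots[Z_n]$ is represented by a cycle supported on $Z_1\cap\cdots\cap Z_n$ (refined intersection product, cf.\ \cite{hartshorne1977Algebraic}); hence if $\bigcap_i Z_i=\emptyset$ the product would be a cycle on the empty scheme and thus zero. Since $[Z_1]\cdots[Z_n]=\sigma_c^{\,n}\neq 0$, we conclude $\bigcap_i Z_i\neq\emptyset$: there is an $\ell$-dimensional $U\subseteq\F^k$ meeting every $H_i$ nontrivially. The ``in other words'' reformulation follows by applying this to any $\lfloor\ell(k-\ell)/(s-\ell+1)\rfloor$ of the given $H_i$. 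The step I expect to require the most care is this last one: making precise that a nonzero product of cycle classes forces a nonempty intersection (the naive ``codimension is subadditive'' bound does \emph{not} by itself rule out an empty intersection, as $\mathbb P^1\times\mathbb P^1$ shows). If one prefers to avoid the refined product, an alternative is a properness argument: the incidence locus $\{(W,H_1,\dots,H_n):W\cap H_i\neq 0\ \forall i\}$ maps properly onto the space of tuples $(H_i)$, its image is closed, and it is dense there because for generic $H_i$ Kleiman transversality makes $\bigcap_i Z_i$ nonempty of the expected dimension $N-nc\ge 0$; hence the image is everything.
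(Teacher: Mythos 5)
Your proposal is correct in substance, and it takes a genuinely different route from the paper at the decisive step. Like you, the paper identifies each locus $\{U \in \Gr(\ell,\F^k) : U \cap H_i \neq 0\}$ as a Schubert subvariety of codimension $s-\ell+1$ (\Cref{cor:design-dim}, via the Schubert dimension formula), but it then concludes by iterating \Cref{prop:dim-intersect}: two projective subvarieties $X,Y$ of $Z=\Gr(\ell,\F^k)$ with $\dim X+\dim Y\ge\dim Z$ meet, in dimension at least $\dim X+\dim Y-\dim Z$. You instead work in the Chow ring: each locus has class $\sigma_c$ with $c=s-\ell+1$, your reading-order Pieri chain of horizontal strips shows $\sigma_c^{\,n}\neq 0$ whenever $nc\le\ell(k-\ell)$ (and $c \le k-\ell$), and the refined intersection product on the smooth variety $\Gr(\ell,\F^k)$ (a nonzero product class must be supported on the set-theoretic intersection) converts nonvanishing into nonemptiness. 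The comparison is instructive: the dimension-subadditivity half of \Cref{prop:dim-intersect} holds on any smooth ambient variety, but the nonemptiness clause is exactly the delicate point you flag --- it fails for a general projective ambient $Z$ (your $\mathbb{P}^1\times\mathbb{P}^1$ example; even inside $\Gr(2,\F^4)$ the Schubert surfaces of types $\sigma_2$ and $\sigma_{1,1}$ can be disjoint although $2+2\ge 4$), and Hartshorne I.7.2 is stated for $\mathbb{P}^n$. So the positivity input you supply, namely that these particular classes satisfy $\sigma_c^{\,n}\neq 0$, is precisely what legitimizes nonemptiness in the Grassmannian, and your route makes that dependence explicit rather than absorbing it into a general-position-style proposition; the paper's route, in exchange, is shorter and also tracks a dimension lower bound for the intersection, which the statement does not actually need.

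Two small repairs to your write-up: the fact that $[Z_1]\cdots[Z_n]$ is represented by a cycle on $Z_1\cap\cdots\cap Z_n$ is Fulton's refined intersection product (Intersection Theory, Chapter 8), not something found in Hartshorne, so cite it accordingly; and in your alternative argument, Kleiman's theorem only guarantees that a general translate intersection is empty \emph{or} of the expected dimension, so nonemptiness for generic $H_i$ again rests on $\sigma_c^{\,n}\neq 0$ (which you have) --- as written that sentence slightly overstates what transversality alone gives. Your reductions to $\ell\le s$ and $s\le k-1$ (i.e.\ $H_i\neq 0$) match the paper's implicit assumptions and the excluded cases are trivial or vacuous, so no issue there.
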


The main technique toward proving \cref{thm:gk16-optimal} is to use Schubert calculus to analyze how the conditions $U \cap H_i$ affect the structure of the projective variety induced by all $\ell$-dimensional subspaces of $\F^k$ (a.k.a., the Grassmannian). We prove \cref{thm:gk16-improved} and \cref{thm:gk16-optimal} in \cref{sec:gk16-improved} and \cref{sec:gk16-optimal}, respectively. However, Theorem~\ref{thm:gk16-optimal} does not apply over non-algebraically closed fields. See \cref{sec:counterexample} for a counterexample.

\subsection{An Improved Analysis of Guruswami--Kopparty}\label{sec:gk16-improved}

Toward proving Theorem~\ref{thm:gk16-improved}, we begin by discussing the subspace design construction of Guruswami--Kopparty~\cite{guruswami2016explicit} and then explaining how to improve upon their analysis. First, observe that $\F^k$ is in bijection with the vector space of polynomials $f \in \F[x]_{<k}$ of degree less than $k$. Pick a nonzero element $\gamma \in \F$ of order at least $k$, as well as nonzero $\alpha_1, \hdots, \alpha_n$, we call this choice of $\gamma, \alpha_1, \hdots, \alpha_n$ \emph{appropriate} if all of
\[
    \{\gamma^{j-1} \alpha_i : (i,j) \in [n] \times [s]\}
\]
are distinct. Given an appropriate choice, Guruswami--Kopparty construction $H_1, \hdots, H_n \subseteq \F^k$ as follows. For each $i \in [n]$, let $H_i$ be the set of all polynomials $f \in \F[x]_{<k}$ with $\alpha_1, \gamma\alpha_1, \hdots, \gamma^{s-1}\alpha_1$ as roots. By the appropriate condition, each $H_i$ has codimension $s$.

We now seek to show that $(H_1, \hdots, H_n)$ form a strong subspace design with the parameters of \cref{thm:gk16-improved}. Fix an $\ell$-dimensional $U \subseteq \F[x]_{< k}$ as well as a polynomial basis $f_1, \hdots, f_\ell$ of $U$. Recall from \Cref{def:wronskian}, the \emph{Wronskian} of this basis is as follows
\[
W_\gamma\left(f_1, \ldots, f_\ell\right)(X) \stackrel{\text { def }}{=}\left(\begin{array}{ccc}
f_1(X) & \ldots & f_\ell(X) \\
f_1(\gamma X) & \cdots & f_\ell(\gamma X) \\
\vdots & \ddots & \vdots \\
f_1\left(\gamma^{\ell-1} X\right) & \cdots & f_\ell\left(\gamma^{\ell-1} X\right)
\end{array}\right),
\]
where $X$ is a symbolic variable (i.e., the entries of the matrix lie in $\F[X]$). Note our application is a bit more general than \Cref{def:wronskian}, as we no longer assume $\gamma$ is a multiplicative generator. The following lemma is key.
\begin{lemma}\label{lem:Wronskian}
Let $p(X) = \det W_{\gamma}(f_1, \hdots, f_\ell)(X)$. Then, $p$ has the following properties.
\begin{itemize}
\item[(a)] If the multiplicative order of $\gamma$ is at least $k$, then $p(X) \neq 0$.
\item[(b)] The degree of $p(X)$ is at most $\ell k - \binom{\ell+1}{2}$.
\item[(c)] $X^{\binom{\ell}{2}}$ is a factor of $p(X)$.
\item[(d)] If $\dim(H_i \cap U) = d_i$, then $(X-\alpha_1)^{d_i}, \hdots, (X-\gamma^{s-\ell}\alpha_1)^{d_i}$ are factors $p(X)$.
\end{itemize}
\end{lemma}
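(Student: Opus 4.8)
\textbf{Proof plan for \Cref{lem:Wronskian}.} The plan is to prove the four properties in order, reusing standard facts about Wronskian-type determinants with the twist that derivatives are replaced by the Frobenius-style dilation $f(X) \mapsto f(\gamma X)$.

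\emph{Part (a).} The plan is to argue by contrapositive: if $p(X) = \det W_\gamma(f_1,\dots,f_\ell)(X)$ is identically zero, I want to conclude that $f_1,\dots,f_\ell$ are linearly dependent over $\F$, contradicting that they form a basis of $U$. This is exactly the ``folded Wronskian criterion'' spirit of \Cref{lem:wrons_ind}, but stated there only for $\gamma$ a multiplicative generator of $\F_q^\times$. I would instead give a direct argument valid whenever the multiplicative order of $\gamma$ exceeds $k$: since each $f_j \in \F[X]_{<k}$, the $k$ monomials $1, X, \dots, X^{k-1}$ take distinct values under the $s$ dilations $X \mapsto \gamma^{r}X$ in the precise sense that the $\ell \times \ell$ minors can be analyzed via a Vandermonde-type decomposition $W_\gamma = V \cdot \mathrm{diag} \cdot F$, where $F$ is the $k \times \ell$ coefficient matrix of $f_1,\dots,f_\ell$ and $V$ is an $\ell \times k$ matrix whose $(r,c)$ entry is $(\gamma^{r-1}X)^{c-1}$. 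By the Cauchy--Binet formula, $p(X)$ is a sum over $\ell$-subsets $S \subseteq \{0,\dots,k-1\}$ of (a generalized Vandermonde in $\gamma$, evaluated at $X^{\sum S}$) times the corresponding maximal minor of $F$. If the $f_j$ are independent, some minor of $F$ is nonzero; the generalized Vandermonde factors are nonzero because $\gamma$ has order $>k$ so $\gamma^0,\dots,\gamma^{\ell-1}$ are distinct; and distinct $S$ give distinct monomial degrees $\sum S$ in $X$, so no cancellation occurs and $p \neq 0$.

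\emph{Parts (b), (c), (d) via the Cauchy--Binet expansion.} Having set up $p(X) = \sum_{S} c_S \cdot X^{\sigma(S)}$ with $\sigma(S) = \sum_{e \in S} e$ and $c_S$ a nonzero multiple of the $S$-minor of $F$ whenever that minor is nonzero, parts (b) and (c) are immediate: the maximum of $\sigma(S)$ over $\ell$-subsets of $\{0,\dots,k-1\}$ is $(k-1) + (k-2) + \dots + (k-\ell) = \ell k - \binom{\ell+1}{2}$, giving (b); the minimum is $0 + 1 + \dots + (\ell-1) = \binom{\ell}{2}$, and every term has degree at least this, giving (c). For part (d), the plan is to observe that $\dim(H_i \cap U) = d_i$ means there is a $d_i$-dimensional subspace of $U$ all of whose elements vanish at the $s$ points $\alpha_1, \gamma\alpha_1, \dots, \gamma^{s-1}\alpha_1$ (using $\alpha_1$ here as shorthand for the $i$-th evaluation anchor $\alpha_i$, matching the construction). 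Choosing the basis $f_1,\dots,f_\ell$ so that $f_1,\dots,f_{d_i}$ span this subspace, each of the first $d_i$ columns of $W_\gamma(f_1,\dots,f_\ell)(X)$ vanishes when $X \in \{\alpha_i, \gamma\alpha_i, \dots, \gamma^{s-\ell}\alpha_i\}$: indeed plugging such an $X$ makes every entry $f_j(\gamma^{r-1}X)$ with $j \le d_i$ and $r \le \ell$ an evaluation of $f_j$ at one of the $s$ roots (since $r-1 \le \ell - 1$ and the exponent on $\gamma$ is at most $(s-\ell) + (\ell-1) = s-1$). Hence $p$ vanishes at each such $X$, and by standard multilinearity (differentiating the determinant, or rather: the order of vanishing of $\det$ is at least the number of columns that vanish) $p$ vanishes to order at least $d_i$ there, so $(X - \gamma^{r}\alpha_i)^{d_i}$ divides $p$ for $r = 0, \dots, s-\ell$. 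Since the $f_j$ have degree $<k < |\F|$, these linear factors at distinct points (guaranteed distinct by the ``appropriate'' condition) are coprime.

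\emph{Main obstacle.} The technical heart is part (a) together with the ``order of vanishing'' claim in part (d): in the classical (derivative) Wronskian these rely on differential-algebra facts, and here I must instead carefully justify that the dilation operators behave well enough — specifically that the generalized Vandermonde determinants $\det\big((\gamma^{r-1})^{e}\big)_{r \in [\ell], e \in S}$ are nonzero, which needs $\gamma$ to have order at least $k$ (so that $\gamma^0, \dots, \gamma^{\ell-1}$ are distinct and the exponents in $S \subseteq \{0,\dots,k-1\}$ don't wrap around). I expect the cleanest route is to commit to the Cauchy--Binet / generalized-Vandermonde bookkeeping once at the start and then read off (a)--(d) as corollaries, rather than proving each part by an independent determinant manipulation; care is needed in characteristic $p$ to avoid any argument that secretly divides by a binomial coefficient, but the Cauchy--Binet approach is characteristic-free.
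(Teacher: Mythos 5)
Your Cauchy--Binet framework is a genuinely different organization from the paper (which proves each part by choosing a convenient basis of $U$ and manipulating the determinant directly), and it does deliver (b) and (c) cleanly: every term of the expansion is a single monomial $c_S X^{\sigma(S)}$ with $\binom{\ell}{2}\le \sigma(S)\le \ell k-\binom{\ell+1}{2}$, which is arguably slicker than the paper's row-echelon argument for (c). Your part (d) is essentially identical to the paper's (vanishing columns each contribute a factor $X-\beta$ by multilinearity). However, part (a) as written has a genuine gap: the claim that distinct $\ell$-subsets $S\subseteq\{0,\dots,k-1\}$ give distinct degrees $\sigma(S)=\sum_{e\in S}e$ is false (already $\{0,3\}$ and $\{1,2\}$ collide), so several Cauchy--Binet terms share the same monomial degree and ``no cancellation occurs'' does not follow. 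Knowing that some minor of the coefficient matrix $F$ is nonzero and that every generalized Vandermonde factor is nonzero does not rule out the sum of the colliding terms being zero, and over a finite field there is no positivity to appeal to.

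The repair is exactly the paper's move, and it fits inside your framework: first record that a change of basis of $U$ multiplies $p$ by a nonzero scalar (you should state this explicitly anyway, since you silently re-choose the basis in part (d)), then pick a basis with strictly increasing degrees $d_1<\cdots<d_\ell<k$ and leading coefficients $c_1,\dots,c_\ell$, and extract only the coefficient of $X^{d_1+\cdots+d_\ell}$. In your language: column $j$ of $F$ is supported on degrees at most $d_j$, so the unique $S$ with $\det(F|_S)\neq 0$ attaining the maximal degree is $S^{\ast}=\{d_1,\dots,d_\ell\}$, whose $F$-minor is triangular with diagonal $c_1\cdots c_\ell$ and whose Vandermonde factor is $\prod_{i<j}(\gamma^{d_j}-\gamma^{d_i})\neq 0$ because the order of $\gamma$ is at least $k$. (Note also that the relevant distinctness throughout is of $\gamma^{e}$ for exponents $e\le k-1$, not of $\gamma^0,\dots,\gamma^{\ell-1}$ as stated in your part (a); your ``don't wrap around'' remark is the correct condition.) With that fix, your proof of (a) and (b) coincides in substance with the paper's.
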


Note that (a), (b), (c) improve upon Guruswami--Kopparty in various ways, while (d) is identical. more precisely, for (a), Guruswami--Kopparty assume that $\gamma$ is a multiplicative generator of $\F$, although we relax that to just needing a lower bound on the order of $\gamma$. For (b), we improve over their bound of $\ell(k-1)$, and no observation like (c) appears in \cite{guruswami2016explicit}. Assuming \cref{lem:Wronskian}, \cref{thm:gk16-improved} follows easily.

\begin{proof}[Proof of \cref{thm:gk16-improved}]
Construct $H_1, \hdots, H_n \subseteq \F^k$ as indicated above. Consider any $\ell$-dimensional $W \subseteq \F^k$ as well as a corresponding nonzero polynomial $p(X)$ of \cref{lem:Wronskian}. By \cref{lem:Wronskian}(b), we know that $\deg p \le \ell k - \binom{\ell+1}{2}$. By \cref{lem:Wronskian}(c) $0$ is a root of $p(X)$ $\binom{\ell}{2}$ times. By \cref{lem:Wronskian}(d) we know that each of $\alpha_i, \hdots, \gamma^{s-\ell}\alpha_i$ is are roots of $p$ with multiplicity at least $\dim(H_i \cap U)$ each. Thus, since $\alpha_1, \hdots, \alpha_n, \gamma$ are appropriate, we have that
\[
    \ell k - \binom{\ell+1}{2} \ge \deg p \ge \binom{\ell}{2} + \sum_{i=1}^n (s-\ell+1)\dim(H_i \cap U).
\]
Thus, 
\[
\frac{\ell (k - \ell)}{s - \ell + 1} \ge \sum_{i=1}^n \dim(H_i \cap U),
\]
as desired.
\end{proof}

We now proceed to prove the various parts of \cref{lem:Wronskian}.

\subsubsection{Proof of \cref{lem:Wronskian}(a) and (b)}\label{subsec:ab}

This proof technique appears in \cite{guruswami2011LinearAlgebraica}, but we give the argument for completeness.\footnote{See also \url{https://www.cnblogs.com/Elegia/p/18738181/wronskian}} Recall that $W_{\gamma}(f_1, \hdots, f_\ell)(X)$ depends on the choice of basis $f_1, \hdots, f_\ell$ of $U$. However, any two bases $f_1, \hdots, f_\ell$ and $g_1, \hdots, g_\ell$ of $U$ are related by an invertible matrix $Q \in \F^{\ell \times \ell}$ where $g_i = \sum_{j = 1}^{\ell} Q_{i,j} f_j$. As such, it is not hard to see then that
\[
    W_{\gamma}(g_1, \hdots, g_\ell)(X) = W_{\gamma}(f_1, \hdots, f_\ell)(X) Q^{\top},
\]
In particular, 
\[
\det W_{\gamma}(g_1, \hdots, g_\ell)(X) = \det W_{\gamma}(f_1, \hdots, f_\ell)(X) \det Q,
\]
where $\det Q$ is a nonzero element of $\F$. Thus, for any of (a)-(d) in \cref{lem:Wronskian}, we may freely pick a basis of $U$ which is most convenient for that part. For parts (a) and (b), we consider a basis $f_1, \hdots, f_\ell$ of $U$ for which $\deg f_1 < \deg f_2 < \cdots < \deg f_\ell < k$. For each $i \in [\ell]$, let $d_i = \deg f_i$. Furthermore, let $c_1, \hdots, c_\ell \in \F^{\times}$ be the leading coefficients of $f_1, \hdots, f_\ell$. 

It is clear that if $p(X) = \det W_{\gamma}(f_1, \hdots, f_\ell)(X)$ is nonzero, then $\deg p \le d_1 + \cdots + d_\ell \le (k-\ell) + (k-(\ell-1)) + \cdots + (k-1) = \ell k - \binom{\ell+1}{2}$. Thus, (a) implies (b). To show (a), we prove that the coefficient $X^{d_1 + \cdots + d_\ell}$ of $p(X)$ is nonzero. To see why, note that the coefficient of $X^{d_1 + \cdots + d_\ell}$ in the expansion of $\det W_{\gamma}(f_1, \hdots, f_\ell)(X)$ is
\begin{align*}
    \sum_{\sigma \in S_\ell} (-1)^{\sigma} \prod_{i=1}^\ell c_i(\gamma^{\sigma(i)-1} X)^{d_i} = \gamma^{-\ell}\prod_{i=1}^\ell c_i X^{d_i} \cdot \left[\sum_{\sigma \in S_s} (-1)^{\sigma} \gamma^{d_1\sigma(1) + \cdots + d_\ell\sigma(\ell)}\right],
\end{align*}
where $S_\ell$ is the set of permutations of $[\ell]$ and $(-1)^{\sigma}$ is the sign of a particular permutation $\sigma \in S_\ell$. Now observe that

\[
\sum_{\sigma \in S_\ell} (-1)^{\sigma} \gamma^{d_1\sigma(1) + \cdots + d_\ell\sigma(\ell)} = \det \begin{pmatrix}1 & 1 & \cdots & 1\\
\gamma^{d_1} & \gamma^{d_2} & \cdots & \gamma^{d_\ell}\\
\vdots & \vdots & \ddots & \vdots\\
\gamma^{(\ell-1)d_1} & \gamma^{(\ell-1)d_2} & \cdots & \gamma^{(\ell-1)d_\ell}
\end{pmatrix}.
\]
This is nonzero only if $\gamma^{d_i} = \gamma^{d_j}$ for some $i$ and $j$. But, since the order of $\gamma$ is at least $k$, this cannot happen. Thus, we have proved (a) and also (b).

\subsubsection{Proof of \cref{lem:Wronskian}(c)}

As mentioned in \cref{subsec:ab}, we may pick a most convenient basis to prove this part. By computing a suitable row-echelon form of $U$, we may assume that our basis $f_1, \hdots, f_s$ of $U$ has the property that $f_i$ is divisible by $X^{i-1}$ for all $i \in [\ell]$. Thus, $p(X)$ must be divisible by $\prod_{i=1}^{\ell} X^{i-1} = X^{\binom{\ell}{2}}$, as desired.

\subsubsection{Proof of \cref{lem:Wronskian}(d)}

Assume that $\dim(H_i \cap U) = d_i$. Thus, there exists a basis $f_1, \hdots, f_\ell$ of $W$ for which $f_1, \hdots, f_{d_i} \in H_i$. Now for any $\beta \in \{\alpha_i, \hdots, \gamma^{s-\ell}\alpha_{i}\}$, observe that the first $d_i$ columns of $W_{\gamma}(f_1, \hdots, f_\ell)(\beta)$ are identically zero. That is, each of the first $d_i$ columns of $W_{\gamma}(f_1, \hdots, f_\ell)(X)$ are divisible by $X-\beta$. Thus, $p(X)$ must be divisible by $(X-\beta)^{d_i}$ for all $\beta \in \{\alpha_i, \hdots, \gamma^{s-\ell}\alpha_{i}\}$, as desired.

\begin{remark}
If $\dim(H_i \cap U) \ge 2$, then we can identify additional roots of $p(X)$ (e.g., $\gamma^{-1}\alpha_i$). This allow for a tighter analysis in various applications, although we defer such arguments to future work.
\end{remark}

\subsubsection{Proof of \Cref{prop:RS-subspace-design}}\label{subsec:deferred-proof}

We now prove \Cref{prop:RS-subspace-design}, which we state in more generality as follows.

\begin{proposition}\label{prop:RS-subspace-design-better}
Let $C^{\RS} \subseteq \F^{sn}$ be an $[sn,k]$ Reed-Solomon code with evaluation points \[(\alpha_1, \gamma \alpha_1, \hdots, \gamma^{s-1}\alpha_1, \hdots, \alpha_n, \gamma \alpha_n, \hdots, \gamma^{s-1}\alpha_n).\]  
For $i \in [n]$, let $I_i := \{s(i-1)+1, s(i-1)+2, \hdots, si\}$. For any subcode $V \subseteq C^{\RS}$ for which $\dim(V) \le s$, we have that
\begin{align}
  \sum_{i=1}^n \dim(V|_{I_i}) \ge n\dim(V) - \frac{\dim(V) (k - \dim(V))}{s - \dim(V) + 1}.\label{eq:puncture-bound}
\end{align}
\end{proposition}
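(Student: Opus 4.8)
The plan is to recognize \eqref{eq:puncture-bound} as the Guruswami--Kopparty strong subspace design bound in disguise. First I would identify the Reed--Solomon code $C^{\RS} \subseteq \F^{sn}$ with the space $\F[X]_{<k}$ of polynomials of degree less than $k$: since $C^{\RS}$ is an $[sn,k]$ code with $k \le sn$, evaluation at the $sn$ distinct points $\gamma^j\alpha_i$ is an injective linear map $\mathrm{ev}\colon \F[X]_{<k}\to\F^{sn}$ with image $C^{\RS}$. Under this identification a subcode $V\subseteq C^{\RS}$ corresponds to a subspace $\tilde V := \mathrm{ev}^{-1}(V) \subseteq \F[X]_{<k}$ with $\dim\tilde V = \dim V =: \ell \le s$, and the restriction $V|_{I_i}$ is exactly $\mathrm{ev}_i(\tilde V)$, where $\mathrm{ev}_i$ evaluates at the $s$ points $\{\gamma^j\alpha_i : 0 \le j \le s-1\}$. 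The kernel of $\mathrm{ev}_i$ restricted to $\tilde V$ is $\tilde V \cap H_i$, where $H_i := \{f \in \F[X]_{<k} : f(\gamma^j\alpha_i) = 0 \text{ for all } 0 \le j \le s-1\}$ is precisely the $i$-th subspace of the Guruswami--Kopparty construction. By rank--nullity, $\dim(V|_{I_i}) = \ell - \dim(\tilde V \cap H_i)$, so summing over $i$ reduces \eqref{eq:puncture-bound} to proving $\sum_{i=1}^n \dim(\tilde V \cap H_i) \le \ell(k-\ell)/(s-\ell+1)$.

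Next I would invoke the folded Wronskian machinery of \Cref{lem:Wronskian} exactly as in the proof of \Cref{thm:gk16-improved}, now taking the test space $U = \tilde V$. Fix a basis $f_1, \hdots, f_\ell$ of $\tilde V$ and set $p(X) := \det W_\gamma(f_1, \hdots, f_\ell)(X)$. Assuming $\gamma$ has multiplicative order at least $k$ (which holds wherever we apply this proposition, e.g.\ $\F = \Q$ and $\gamma = 2$), \Cref{lem:Wronskian}(a) gives $p \neq 0$; part (b) gives $\deg p \le \ell k - \binom{\ell+1}{2}$; part (c) gives $X^{\binom{\ell}{2}} \mid p$; and part (d) gives $(X - \gamma^j\alpha_i)^{d_i} \mid p$ for every $0 \le j \le s-\ell$, where $d_i := \dim(\tilde V \cap H_i)$. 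Since all $sn$ points $\gamma^j\alpha_i$ are distinct and nonzero, these root multiplicities add without collision, yielding
\[
\binom{\ell}{2} + (s - \ell + 1)\sum_{i=1}^n d_i \le \deg p \le \ell k - \binom{\ell+1}{2},
\]
which rearranges (using $\binom{\ell+1}{2} + \binom{\ell}{2} = \ell^2$) to $\sum_{i=1}^n d_i \le \ell(k-\ell)/(s-\ell+1)$.

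Finally I would substitute back: $\sum_{i=1}^n \dim(V|_{I_i}) = n\ell - \sum_{i=1}^n d_i \ge n\dim(V) - \dim(V)(k - \dim(V))/(s - \dim(V)+1)$, which is exactly \eqref{eq:puncture-bound}; the degenerate cases $\ell = 0$ and $\ell = s$ cause no trouble. I do not expect a genuine obstacle: the argument is a dictionary translation between ``rank loss of a subcode on a folded block'' and ``intersection of a test subspace with a Guruswami--Kopparty subspace,'' after which \Cref{lem:Wronskian} does all the work. The only points demanding care are (i) checking that $\mathrm{ev}$ is injective so that $V$ genuinely corresponds to a polynomial subspace of the same dimension (automatic from $k \le sn$), and (ii) the order hypothesis on $\gamma$ needed for \Cref{lem:Wronskian}(a), which should be recorded as a hypothesis of the proposition or noted to be satisfied in every application.
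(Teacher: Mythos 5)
Your proposal is correct and follows essentially the same route as the paper: identify $V$ with a polynomial subspace via the encoding map, use rank--nullity to write $\dim(V|_{I_i}) = \dim(V) - \dim(\tilde V \cap H_i)$ with the $H_i$ the Guruswami--Kopparty subspaces, and then apply the improved design bound (the paper cites \Cref{thm:gk16-improved} directly rather than re-running the Wronskian argument, but that is the same machinery). Your remark about the order hypothesis on $\gamma$ is a fair point of care that the paper handles implicitly through the hypotheses of \Cref{thm:gk16-improved} and the choice $\gamma = 2$ over $\Q$ in the application.
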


\begin{proof}
Let $\psi : \F_{<k}[x] \to C^{\RS}$ be the canonical encoding map of $\C^{RS}$. That is, 
\[
\psi(f) = (f(\alpha_1), f(\gamma \alpha_1), \hdots, f(\gamma^{s-1}\alpha_1), \hdots, f(\alpha_n), f(\gamma \alpha_n), \hdots, f(\gamma^{s-1}\alpha_n)).
\]
Since $V \subseteq C^{\RS}$ is a subcode, there exists a unique $U \subseteq \F_{<k}[x]$ with $\dim(U) = \dim(V)$ for which $\psi(U) = V$.

Let $H_1, \hdots, H_n \subseteq \F_{<k}[x]$ be spaces of codimension $s$ such that $H_i$ corresponds to polynomials of degree less than $k$ with $\{\alpha_i, \gamma \alpha_i, \hdots, \gamma^{s-1} \alpha_i\}$ as roots. By Theorem~\ref{thm:gk16-improved}, we have that
\[
\sum_{i=1}^n \dim(H_i \cap U) \le \frac{\dim(U) (k - \dim(U))}{s - \dim(U) + 1} = \frac{\dim(V) (k - \dim(V))}{s - \dim(V) + 1}. 
\]
Thus, to prove (\ref{eq:puncture-bound}), it suffices to prove that $\dim(V|_{I_i}) = \dim(U) - \dim(U \cap H_i)$ for all $j \in [n]$. Given $u \in U$, we have that $\psi(u)|_{I_i} = 0^{s}$ if and only if $u \in H_i$.  In other words, $\dim V|_{I_i} = \dim \psi(U)|_{I_i} = \dim(U) - \dim(H_i \cap U)$, as desired. 
\end{proof}

\subsection{Lower Bounds over Algebraically Closed Fields}\label{sec:gk16-optimal}

We now turn toward proving limitations on subspace designs. To prove \cref{thm:gk16-optimal}, we first need some machinery from algebraic geometry. 

\subsubsection{Grassmaniann and Pl\"ucker Coordinates}

Note that a subspace design is a list of algebraic conditions for all $\ell$-dimensional $W \subseteq \F^k$. The set of all such subspaces is known as the \emph{Grassmannian} $\Gr(\ell, \F^k)$ (see \cite{lakshmibai2015Grassmannian} for a comprehensive reference). To study $\Gr(\ell, \F^k)$ algebraically, we use what are known as \emph{Pl\"ucker coordinates}. More precisely, we can identify any $\ell$-dimensional $U \subseteq \F^k$ with a matrix $M_{U} \in \F^{\ell \times k}$ whose $\ell$ rows form a basis of $U$. For each $S \in \binom{[k]}{\ell}$, we define the $S$th Pl{\"u}cker coordainte to be $p_S(U) := \det (M_{U}|_{S})$, where $|_S$ is the restriction to the columns indexed by $S$. Now, observe that different bases $M_U$ of $U$ may produce different Pl\"ucker coordinates. However, one can show that any two sets of Pl{\"u}cker coordiantes of $W$ have the same up to a global nonzero scalar (i.e., $p_S(U) = \lambda q_S(U)$ for some $\lambda \in \F^{\times}$ for all $S \in \binom{[k]}{\ell}$). In other words, we think of $\Gr(\ell, \F^k)$ as a \emph{projective} variety. A crucial fact we shall use is that the dimension of $\Gr(\ell, \F^k)$ is $\ell(k-\ell)$~\cite{harris1992Algebraic,lakshmibai2015Grassmannian}.

\subsubsection{Schubert Variety}

Recall that a weak subspace design consists of a number of constraints of the form $\dim(H \cap U) \ge 1$, where $H$ is a fixed subspace and $U$ is selected from the Grassmannian $\Gr(\ell,\F^k)$. The set $V_H \subseteq \Gr(\ell, \F^k)$ of $U$ which satisfy this condition is a special case of a \emph{Schubert (sub)variety}. 

\begin{definition}[Schubert Variety, (e.g., \cite{lakshmibai2015Grassmannian})]
Given a vector space $\F^k = \langle e_1, \hdots, e_k\rangle$ and an integer $\ell \in \{0,1, \hdots, k\}$, consider a sequence $\vec{a} = (a_1, \hdots, a_\ell)$ be a sequence of integers $1 \le a_1 < a_2 < \cdots < a_\ell \le k$. We then define
\[
    \Sch_{\vec{a}}(\ell, e_1,\hdots, e_k) := \{U \in \Gr(\ell, \F^k) : \forall i \in [\ell], \dim(W \cap \langle e_j : j \in [a_i] \rangle) \ge i\}.
\]
\end{definition}
In particular, if $\vec{a} = (k-\ell+1, k-\ell+2, \hdots, k)$, then $\Sch_{\vec{a}}(\ell, e_1, \hdots, e_k) = \Gr(\ell, \F^k)$ because every $U \in \Gr(\ell, \F^k)$ satisfies 
\begin{align}
\dim(U \cap \langle e_j : j \in [k-\ell+i]\rangle) &\ge  \dim(U) + \dim(\langle e_j : j \in [k-\ell+i]\rangle) - k\nonumber\\
&= \ell + k - \ell + i - k \ge i. \label{eq:W-pigeon}
\end{align}

Besides the fact that each $\Sch_{\vec{a}}$ is a projective variety, we also need the following formula for the (Krull) dimension (see \cite{hartshorne1977Algebraic} for a precise definition) of each variety.

\begin{theorem}[e.g., Theorem 5.3.7~\cite{lakshmibai2015Grassmannian}]\label{thm:Schubert-dim}
For all $\vec{a} = (a_1, \hdots, a_\ell)$ with $1 \le a_1 < a_2 < \cdots < a_\ell \le k$, we have that
\[
    \dim \Sch_{\vec{a}}(\ell, e_1,\hdots, e_k) = \sum_{i=1}^\ell (a_i - i).
\]
\end{theorem}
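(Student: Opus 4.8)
The plan is to prove the dimension formula through the Bruhat (Schubert cell) decomposition of the Grassmannian. Fix the standard flag $V_j := \Span(e_1, \dots, e_j)$ for $0 \le j \le k$ (with $V_0 = 0$). For $U \in \Gr(\ell, \F^k)$, call $j \in [k]$ a \emph{jump} of $U$ if $\dim(U \cap V_j) > \dim(U \cap V_{j-1})$; since each quotient $V_j/V_{j-1}$ is one-dimensional, $\dim(U \cap V_\bullet)$ rises by at most $1$ at each step, so there are exactly $\ell$ jumps, and listing them as $j_1 < \dots < j_\ell$ defines the \emph{jump sequence} $\vec j(U)$. For each strictly increasing $\vec j = (j_1 < \dots < j_\ell)$ in $[k]$, set $C_{\vec j} := \{U \in \Gr(\ell, \F^k) : \vec j(U) = \vec j\}$ and $d(\vec j) := \sum_{i=1}^\ell (j_i - i)$. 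The two structural facts I would establish are: (i) each $C_{\vec j}$ is a locally closed subvariety of $\Gr(\ell, \F^k)$ isomorphic as a variety to affine space $\F^{d(\vec j)}$; and (ii) $\Sch_{\vec a}(\ell, e_1, \dots, e_k) = \bigsqcup_{\vec j \le \vec a} C_{\vec j}$, where $\vec j \le \vec a$ means $j_i \le a_i$ for all $i$.

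For (i) I would exhibit the reduced row-echelon basis: given $U \in C_{\vec j}$, choose $v_i \in (U \cap V_{j_i}) \setminus V_{j_i-1}$, normalize the $e_{j_i}$-coefficient to $1$, and subtract multiples of $v_1, \dots, v_{i-1}$ (whose leading indices $j_1, \dots, j_{i-1}$ all lie below $j_i$) to clear the $e_{j_1}, \dots, e_{j_{i-1}}$-coefficients of $v_i$; this forces $v_i = e_{j_i} + \sum_c m_{i,c}\, e_c$ with $c$ ranging over the size-$(j_i - i)$ set $[j_i - 1] \setminus \{j_1, \dots, j_{i-1}\}$, and the resulting basis matrix is unique, so the free entries give a bijection $C_{\vec j} \leftrightarrow \F^{d(\vec j)}$. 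To see it is an isomorphism of varieties, I would realize $C_{\vec j}$ inside the standard Pl\"ucker chart $\{p_{\{j_1,\dots,j_\ell\}} \ne 0\}$: passing from the entries $m_{i,c}$ to the Pl\"ucker coordinates of $U$ is polynomial, while the $m_{i,c}$ are recovered as ratios of Pl\"ucker coordinates with denominator $p_{\{j_1,\dots,j_\ell\}}$, hence regular on $C_{\vec j}$. For (ii), the identity $\dim(U \cap V_j) = \#\{t : j_t(U) \le j\}$ (immediate from the definition of a jump) shows that $\dim(U \cap V_{a_i}) \ge i$ holds for all $i$ exactly when $j_i(U) \le a_i$ for all $i$. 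Separately, $\Sch_{\vec a}$ is Zariski closed, since each condition $\dim(U \cap V_{a_i}) \ge i$ says the rank of the composite $U \to \F^k \to \F^k / V_{a_i}$ is at most $\ell - i$, a determinantal condition.

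With these in hand the conclusion is short: $\Sch_{\vec a}$ is a closed subvariety expressed as a finite disjoint union of locally closed pieces $C_{\vec j}$ (one for each $\vec j \le \vec a$), so its dimension equals $\max_{\vec j \le \vec a} \dim C_{\vec j} = \max_{\vec j \le \vec a} d(\vec j)$, using that the dimension of a variety is the maximum over any finite locally closed stratification. Since $j_i \le a_i$ forces $d(\vec j) = \sum_i (j_i - i) \le \sum_i (a_i - i) = d(\vec a)$, with equality at the admissible choice $\vec j = \vec a$, we get $\dim \Sch_{\vec a}(\ell, e_1, \dots, e_k) = \sum_{i=1}^\ell (a_i - i)$. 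The main obstacle I anticipate is step (i): verifying that the echelon-form correspondence is an isomorphism of varieties, not merely a set-theoretic bijection --- i.e.\ pinning down $C_{\vec j}$ as a locally closed subset of a standard affine chart of the Grassmannian and checking regularity in both directions. Everything else is combinatorial bookkeeping. I also note this argument is characteristic-free and uses no hypothesis on $\F$; the algebraic-closedness hypothesis elsewhere in the paper enters the codimension-subadditivity step, not this dimension computation.
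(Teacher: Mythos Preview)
The paper does not prove this theorem; it is quoted as a standard fact from the literature (Theorem~5.3.7 of \cite{lakshmibai2015Grassmannian}) and used as a black box. Your proposal via the Bruhat/Schubert cell decomposition is correct and is essentially the standard argument one finds in the cited reference, so there is nothing to compare.
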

In particular, the choice $\vec{a} = (k-\ell+1, k-\ell+2, \hdots, k)$ implies that $\dim \Gr(\ell, \F^k) = \ell(k-\ell)$ (Corollary 5.3.8~\cite{lakshmibai2015Grassmannian}). As an immediate corollary of Theorem~\ref{thm:Schubert-dim}, we can compute the dimension of each $V_H := \{U \in \Gr(\ell, \F^k) : V_H \cap U \neq 0\}$ we need for analyzing subspace varieties.

\begin{corollary}\label{cor:design-dim}
For any $H \subseteq \F^k$ of dimension $k-s$ with $s \ge \ell$, we have that $V_H = \{U \in \Gr(\ell, \F^k) : V_H \cap U \neq 0\}$ is projective variety of dimension $\ell(k-\ell) - (s-\ell+1)$.
\end{corollary}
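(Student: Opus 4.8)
The plan is to realize $V_H$ as a \emph{Schubert variety} for a suitable choice of flag parameters and then read off its dimension directly from \Cref{thm:Schubert-dim}. Since $\dim H = k-s$, I would choose an ordered basis $e_1, \hdots, e_k$ of $\F^k$ for which $H = \langle e_1, \hdots, e_{k-s}\rangle$ (we may assume $s < k$, as otherwise $H = 0$ and the statement is degenerate). Then set
\[
\vec a := (k-s,\ k-\ell+2,\ k-\ell+3,\ \hdots,\ k),
\]
that is, $a_1 = k-s$ and $a_i = k-\ell+i$ for $2 \le i \le \ell$. The claim to prove is that $V_H = \Sch_{\vec a}(\ell, e_1, \hdots, e_k)$.

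To see this, recall that the defining conditions of $\Sch_{\vec a}(\ell,e_1,\hdots,e_k)$ are $\dim(U \cap \langle e_j : j \in [a_i]\rangle) \ge i$ for each $i \in [\ell]$. The $i=1$ condition reads $\dim(U \cap H) \ge 1$, i.e.\ $U \cap H \neq 0$, which is precisely membership in $V_H$. For $i \ge 2$ we have $a_i = k-\ell+i$, and the pigeonhole computation in \eqref{eq:W-pigeon} shows that $\dim(U \cap \langle e_j : j \in [k-\ell+i]\rangle) \ge \ell + (k-\ell+i) - k = i$ holds for \emph{every} $U \in \Gr(\ell,\F^k)$, so these conditions are vacuous. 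Hence $\Sch_{\vec a}(\ell,e_1,\hdots,e_k) = V_H$; in particular $V_H$ is a projective subvariety of $\Gr(\ell,\F^k)$, since Schubert varieties are. \Cref{thm:Schubert-dim} then yields
\[
\dim V_H = \sum_{i=1}^\ell (a_i - i) = (k-s-1) + \sum_{i=2}^\ell (k-\ell) = (k-s-1) + (\ell-1)(k-\ell) = \ell(k-\ell) - (s-\ell+1),
\]
which is the asserted formula.

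The only points needing care are the hypotheses of \Cref{thm:Schubert-dim}, namely $1 \le a_1 < a_2 < \cdots < a_\ell \le k$: the inequality $a_1 < a_2$ is $k-s < k-\ell+2$, i.e.\ $s \ge \ell-1$, which follows from the hypothesis $s \ge \ell$; strict monotonicity of $a_2 < \cdots < a_\ell = k$ is immediate; and $a_1 = k-s \ge 1$ is the standing assumption $s < k$. Finally, although \Cref{thm:Schubert-dim} is stated relative to a fixed ordered basis, the computed dimension does not depend on which basis with $H = \langle e_1,\hdots,e_{k-s}\rangle$ we picked: the $\mathrm{GL}_k(\F)$-action on $\F^k$ is linear on Pl\"ucker coordinates and hence induces automorphisms of the projective variety $\Gr(\ell,\F^k)$ carrying one such Schubert variety to another, thus preserving dimension. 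I do not anticipate a genuine obstacle here; the only conceptual step is spotting the correct $\vec a$ and checking that all Schubert conditions beyond the first are automatically satisfied.
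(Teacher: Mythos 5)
Your proposal is correct and matches the paper's proof essentially verbatim: the same choice of basis adapted to $H$, the same flag parameters $\vec a = (k-s, k-\ell+2, \hdots, k)$ identifying $V_H$ with a Schubert variety (with the conditions for $i \ge 2$ being vacuous by the pigeonhole bound \eqref{eq:W-pigeon}), and the same application of \Cref{thm:Schubert-dim}. Your added checks on the monotonicity of $\vec a$ and basis-independence are fine but not points the paper dwells on.
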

\begin{proof}
Let $b_1, \hdots, b_k$ be a basis of $\F^k$ such that $b_1, \hdots, b_{k-s}$ is a basis of $H$. Let $\vec{a} = (k-s, k-\ell+2, \cdots, k)$. We claim that $\Sch_{\vec{a}}(\ell, b_1, \hdots, b_k) = V_H$. To see why, note that for $i \ge 2$, the condition that $\dim(U \cap \langle b_j : j \in [a_i] \rangle) \ge i$ in $\Sch_{\vec{a}}$ is immediate from (\ref{eq:W-pigeon}). Furthermore, by the choice of basis, the condition $\dim(U \cap \langle b_j : j \in [k-s]\rangle) \ge 1$ is exactly $\dim(U \cap H) \ge 1$, i.e., $U \cap H \neq 0$. Thus, $V_H$ is a projective subvariety of $\Gr(\ell, \F^k)$.

To finish, we apply Theorem~\ref{thm:Schubert-dim} to get that
\[
    \dim(V_H) = a_1 -1 + \sum_{i=2}^{\ell} (a_i - i) = (k-s-1) + (\ell-1)(k-\ell) = \ell(k-\ell) - (s-\ell+1),
\]
as desired.
\end{proof}

\subsubsection{Proof of \cref{thm:gk16-optimal}}

To prove Theorem~\ref{thm:gk16-optimal}, we use a standard fact of algebraic geometry that dimensions of intersections of projective varieties behave analogously to that of vector spaces.

\begin{proposition}[Lemma 43.13.14~\cite{stacks-project} and Theorem I.7.2~\cite{hartshorne1977Algebraic}, adapted]\label{prop:dim-intersect}
Let $X, Y, Z$ be projective varieties over an algebraically closed field with $X, Y \subseteq Z$. If $\dim(X) + \dim(Y) \ge \dim(Z)$, then $X \cap Y$ is nonempty and has dimension\footnote{Technically, the result applies to the each irreducible component of the respective varieties, but here we define the dimension of a variety to be the maximum dimension of its irreducible components.} at least $\dim(X) + \dim(Y) - \dim(Z)$.
\end{proposition}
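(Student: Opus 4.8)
The plan is to establish the proposition first in the classical case $Z = \mathbb{P}^N$ and then transfer it to a smooth ambient variety, which is the only setting in which we invoke it (the ambient is always the Grassmannian $\Gr(\ell, \F^k)$, which is smooth).

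\textbf{The case $Z = \mathbb{P}^N$.} First I would pass to affine cones: let $C(X), C(Y) \subseteq \mathbb{A}^{N+1}$ be the cones over $X$ and $Y$, so that $\dim C(X) = \dim X + 1$, $\dim C(Y) = \dim Y + 1$, and both are affine varieties containing the origin. Realizing $C(X) \cap C(Y) \cong (C(X) \times C(Y)) \cap \Delta$ inside $\mathbb{A}^{2N+2}$ and cutting the product by the $N+1$ linear equations defining the diagonal $\Delta$ (each lowering dimension by at most one, by Krull's principal ideal theorem), every irreducible component of $C(X) \cap C(Y)$ has dimension at least $\dim X + \dim Y - N + 1$. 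Projectivizing yields $\dim(X \cap Y) \ge \dim X + \dim Y - N$, and when $\dim X + \dim Y \ge N = \dim Z$ the cone intersection has dimension at least $1$, hence contains a point other than the origin whose image in $\mathbb{P}^N$ lies in $X \cap Y$; so $X \cap Y \ne \emptyset$.

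\textbf{Sharpening the ambient.} To replace $N$ by the smaller $\dim Z$ in the dimension bound, I would run the same diagonal trick one level up: $X \cap Y \cong (X \times Y) \cap \Delta_Z$ in $Z \times Z$. When $Z$ is smooth of dimension $d$, the diagonal $\Delta_Z$ is a local complete intersection of codimension $d$ in $Z \times Z$ (locally cut out by the regular sequence $y_i - x_i$ in local coordinates), so every component of $(X \times Y) \cap \Delta_Z$ has dimension at least $\dim(X \times Y) - d = \dim X + \dim Y - \dim Z$. This proves the dimension half of the proposition for every smooth projective $Z$.

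\textbf{The main obstacle: nonemptiness over a general ambient.} The genuine difficulty is that, unlike the dimension bound, the nonemptiness claim need not hold for an arbitrary smooth projective ambient (two disjoint rulings of a smooth quadric surface have complementary dimensions and empty intersection), and the diagonal argument above says nothing about whether any components exist. I would close this gap using the special structure of the ambient we actually need. In every application $Z = \Gr(\ell, \F^k)$ and the relevant $X, Y$ are intersections of the Schubert subvarieties $V_{H_i}$ of \cref{cor:design-dim}; each $V_{H_i}$ (with $H_i$ of codimension $s$) carries the single special Schubert class of codimension $s - \ell + 1$ in the Chow ring $\mathrm{CH}^{*}(\Gr(\ell, \F^k))$, and a short argument with Pieri's rule shows that the $m$-fold product of this class is a nonnegative combination of Schubert classes which is nonzero whenever its codimension $m(s - \ell + 1)$ does not exceed $\dim \Gr(\ell, \F^k) = \ell(k-\ell)$. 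Since a nonvanishing product of Chow classes forces the corresponding geometric intersection to be nonempty, this gives the needed nonemptiness for the specific subspaces $H_i$; combining it with the dimension bound from the diagonal argument delivers the proposition in exactly the form used in \cref{sec:gk16-optimal}.
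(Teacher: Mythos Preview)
The paper does not prove this proposition at all; it simply cites it as a black-box fact from the Stacks project and Hartshorne. So there is no ``paper's own proof'' to compare against, and your write-up is already more detailed than anything in the paper.

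More importantly, you have correctly spotted that the proposition, \emph{as literally stated in the paper}, is false: two lines from the same ruling of a smooth quadric surface in $\mathbb{P}^3$ give $\dim X + \dim Y = 2 = \dim Z$ with $X\cap Y=\emptyset$. Hartshorne~I.7.2 only covers $Z=\mathbb{P}^N$, and the diagonal / local-complete-intersection argument for smooth $Z$ only yields the dimension inequality conditional on the intersection being nonempty. So the nonemptiness clause genuinely needs something extra, and the paper's citation glosses over this.

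Your repair is sound. For the only application (the proof of \cref{thm:gk16-optimal}), the ambient is $\Gr(\ell,\F^k)$ and each $V_{H_i}$ carries the special Schubert class $\sigma_{s-\ell+1}$; Pieri's rule shows that an $m$-fold product of this class is a nonnegative, nonzero combination of Schubert classes as long as $m(s-\ell+1)\le \ell(k-\ell)$, and a nonzero Chow-ring product forces a nonempty set-theoretic intersection. Combined with the diagonal dimension bound (valid since the Grassmannian is smooth), this gives exactly what the induction in \cref{thm:gk16-optimal} needs. In short: your proof is correct, and in fact patches an imprecision in the paper's statement of the proposition.
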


\begin{theorem}[Theorem~\ref{thm:gk16-optimal} restated]
Let $\F$ be an algebraically closed field (of any characteristic), then for any $H_1, \hdots, H_n \subseteq \F^k$ of codimension $s$ (possibly with repetition), if $n \le \frac{\ell(k-\ell)}{s-\ell+1}$, then there exists an $\ell$-dimensional subspace $U \subseteq \F^k$ which has nontrivial intersection with each $H_i$.  In other words, if $n \ge \lfloor \frac{\ell(k-\ell)}{s-\ell+1}\rfloor $ then a $(\ell, \lfloor \frac{\ell(k-\ell)}{s-\ell+1}\rfloor - 1)$ weak subspace design does not exist.
\end{theorem}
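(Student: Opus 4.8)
The plan is to recast the existence of the desired $\ell$-dimensional subspace as the nonemptiness of an intersection of Schubert subvarieties inside the Grassmannian, and then invoke the dimension estimate of \Cref{prop:dim-intersect}. Concretely, I would set $Z := \Gr(\ell, \F^k)$, an irreducible projective variety of dimension $\ell(k-\ell)$, and for each $i \in [n]$ let $V_{H_i} := \{U \in \Gr(\ell, \F^k) : U \cap H_i \neq 0\}$. By \Cref{cor:design-dim} (which applies since we are in the regime $s \ge \ell$), each $V_{H_i}$ is a projective subvariety of $Z$ of dimension $\ell(k-\ell) - (s - \ell + 1)$. An $\ell$-dimensional $U$ meeting every $H_i$ nontrivially is exactly a point of $\bigcap_{i=1}^n V_{H_i}$, so it suffices to show this intersection is nonempty.

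To do this I would argue by induction on $j$ that $\bigcap_{i=1}^j V_{H_i}$ has an irreducible component of dimension at least $\ell(k-\ell) - j(s-\ell+1)$, starting from $j = 0$ with $Z$ itself. For the inductive step, pick such a component $X'$ of $\bigcap_{i=1}^{j-1} V_{H_i}$ and apply \Cref{prop:dim-intersect} to the triple $(X', V_{H_j}, Z)$; the hypothesis $\dim X' + \dim V_{H_j} \ge \dim Z$ reduces to $j(s-\ell+1) \le \ell(k-\ell)$, which holds whenever $j \le n$ because $n \le \frac{\ell(k-\ell)}{s-\ell+1}$. The proposition then yields that $X' \cap V_{H_j}$ is nonempty of dimension at least $\ell(k-\ell) - j(s-\ell+1)$, and since $X' \cap V_{H_j} \subseteq \bigcap_{i=1}^j V_{H_i}$ is closed in $Z$, the induction continues. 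Taking $j = n$ gives a component of dimension $\ge \ell(k-\ell) - n(s-\ell+1) \ge 0$, hence $\bigcap_{i=1}^n V_{H_i} \neq \emptyset$, and any point of it is the sought-after subspace $U$.

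For the ``in other words'' reformulation I would simply apply the first statement to the first $n' := \lfloor \frac{\ell(k-\ell)}{s-\ell+1}\rfloor \le \frac{\ell(k-\ell)}{s-\ell+1}$ of the given subspaces $H_i$ (repetitions among the $H_i$ cause no difficulty), producing a $U$ with $\sum_{i=1}^n \one[H_i \cap U \neq 0] \ge n' > n' - 1$; hence once $n \ge n'$ no $(\ell, n' - 1)$-weak subspace design can exist.

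The one point requiring care — and essentially the only subtlety — is that \Cref{prop:dim-intersect} is stated component-by-component, so the induction must track a maximal-dimensional irreducible component of each partial intersection rather than the (possibly reducible) intersection itself; this is precisely why I phrase the inductive hypothesis in terms of ``an irreducible component of dimension at least $\ell(k-\ell) - j(s-\ell+1)$''. Beyond that bookkeeping, the argument is a direct consequence of the Schubert-variety dimension formula (\Cref{thm:Schubert-dim}, via \Cref{cor:design-dim}) together with the subadditivity of codimension under intersection over an algebraically closed field.
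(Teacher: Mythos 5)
Your proposal is correct and follows essentially the same route as the paper: identify each condition $H_i \cap U \neq 0$ with the Schubert subvariety $V_{H_i}$ of dimension $\ell(k-\ell)-(s-\ell+1)$ via \Cref{cor:design-dim}, then induct using \Cref{prop:dim-intersect} to show $\bigcap_{i=1}^n V_{H_i}$ is nonempty whenever $n \le \frac{\ell(k-\ell)}{s-\ell+1}$. Your explicit tracking of a maximal-dimensional irreducible component is a slightly more careful bookkeeping of the same step the paper handles via its footnote convention on the dimension of a (possibly reducible) variety, so there is no substantive difference.
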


\begin{proof}[Proof of Theorem~\ref{thm:gk16-optimal}]
Let $V_{H_i}$ be the subvariety of $\Gr(\ell, \F^k)$ corresponding to the $U$ for which $H_i \cap U \neq 0$. It suffices to prove that if $n \le \frac{\ell(k-\ell)}{s-\ell+1}$ then $\bigcap_{i=1}^n V_{H_i} \neq \emptyset$. By Corollary~\ref{cor:design-dim}, we have that $\dim(V_{H_i}) = \ell(k-\ell)-(s-\ell+1)$ for all $i \in [n]$. We claim by induction for all $m \in [n]$ that $\bigcap_{i=1}^m V_{H_i}$ is nonempty with each irreducible component having dimension at least $\ell(k-\ell)-m(s-\ell+1)$. The base case of $m=1$ is precisely Corollary~\ref{cor:design-dim}. For $m \ge 2$, assume the inductive hypothesis for $m-1$ and apply Proposition~\ref{prop:dim-intersect} with $X = \bigcap_{i=1}^{m-1} V_{H_i}$, $Y = V_{H_m}$ and $Z = \Gr(\ell, \F^k)$. Then, observe that
\[
    \dim(X) + \dim(Y) - \dim(Z) \ge \ell(k-\ell)-(m-1)(s-\ell+1) + \ell(k-\ell)-(s-\ell+1) - \ell(k-\ell) = \ell(k-\ell)-m(s-\ell+1).
\]
Therefore, since $m \le n \le \frac{\ell(k-\ell)}{s-\ell+1}$, we have that $\dim(X) + \dim(Y) - \dim(Z) \ge 0$, so $\bigcap_{i=1}^m V_{H_i}$ is nonempty with dimension at least $\ell(k-\ell)-m(s-\ell+1).$ In particular, when $m = n$, we have that $\bigcap_{i=1}^n V_{H_i}$ is nonempty, so $(H_1, \hdots, H_n)$ cannot be an $(\ell, n-1)$ weak subspace design. 
\end{proof}

\subsection{Theorem~\ref{thm:gk16-optimal} does not apply to $\F_3$}\label{sec:counterexample}

We now show that \cref{thm:gk16-optimal} does not apply when $\F$ is not algebraically closed. Let $e_1, \hdots, e_4$ be a basis of $\F_3^4$ and let
\begin{align*}
H_1 &= \langle e_1, e_2\rangle\\
H_2 &= \langle e_3, e_4\rangle\\
H_3 &= \langle e_1+e_3, e_2+e_4\rangle\\
H_4 &= \langle e_1+e_4, e_2-e_3\rangle.
\end{align*}

\begin{proposition}
$H_1, \hdots, H_4$ form a $(2,3)$-weak subspace design.
\end{proposition}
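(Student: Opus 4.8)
The plan is to prove, by contradiction, that there is no $2$-dimensional $U\subseteq\F_3^4$ meeting all four $H_i$ nontrivially; since each $H_i$ has dimension $2$, this is precisely the assertion that $(H_1,\dots,H_4)$ is a $(2,3)$-weak subspace design. The whole argument will hinge on the fact that $-1$ is a nonsquare in $\F_3$ (equivalently, $a^2+b^2=0$ forces $a=b=0$ over $\F_3$), which is exactly the input that fails over an algebraically closed field and explains why \Cref{thm:gk16-optimal} cannot be extended here.

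First I would use the two ``coordinate'' subspaces $H_1=\langle e_1,e_2\rangle$ and $H_2=\langle e_3,e_4\rangle$ to put $U$ into a normal form. Choosing nonzero $v_1\in U\cap H_1$ and $v_2\in U\cap H_2$, linear independence (they lie in complementary coordinate subspaces) together with $\dim U=2$ gives $U=\langle v_1,v_2\rangle$ with $v_1=(a,b,0,0)$ and $v_2=(0,0,c,d)$, where $(a,b)\neq 0$ and $(c,d)\neq 0$.

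Next I would impose $U\cap H_3\neq 0$, where $H_3=\{(s,t,s,t):s,t\in\F_3\}$. A general element $(\lambda a,\lambda b,\mu c,\mu d)$ of $U$ lies in $H_3$ iff $\lambda(a,b)=\mu(c,d)$; since $(a,b),(c,d)\neq 0$, a nontrivial solution forces $\lambda,\mu\neq 0$, hence $(a,b)$ and $(c,d)$ are proportional. Rescaling $v_2$, I may assume $U=\langle(a,b,0,0),(0,0,a,b)\rangle$. Finally I would impose $U\cap H_4\neq 0$, where $H_4=\{(s,t,-t,s):s,t\in\F_3\}$: an element $(\lambda a,\lambda b,\mu a,\mu b)$ lies in $H_4$ iff $\mu a=-\lambda b$ and $\mu b=\lambda a$, i.e.\ $\mu(a,b)=\lambda(-b,a)$; again a nontrivial solution forces $\lambda,\mu\neq 0$, so $(a,b)\parallel(-b,a)$, which is equivalent to $a^2+b^2=0$. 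Over $\F_3$ this yields $a=b=0$, contradicting $(a,b)\neq 0$. Hence no such $U$ exists, which is the claim.

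I do not expect a genuine obstacle; the only care needed is the bookkeeping in the two reduction steps, in particular correctly ruling out the degenerate cases $\lambda=0$ or $\mu=0$ when extracting the proportionality conditions. As a sanity check (and an alternative write-up more in keeping with the Grassmannian language of this section), one can instead pass to Pl\"ucker coordinates on $\Gr(2,\F_3^4)$: after computing the Pl\"ucker vectors of $H_1,\dots,H_4$, each condition $U\cap H_i\neq 0$ becomes the vanishing of the bilinear pairing $p_{12}q_{34}-p_{13}q_{24}+p_{14}q_{23}+p_{23}q_{14}-p_{24}q_{13}+p_{34}q_{12}$, and combining the four resulting linear equations with the Pl\"ucker relation $p_{12}p_{34}-p_{13}p_{24}+p_{14}p_{23}=0$ collapses everything to $p_{13}^2+p_{14}^2=0$, again forcing all Pl\"ucker coordinates to vanish, which is impossible for a genuine subspace.
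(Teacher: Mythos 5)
Your proof is correct and follows essentially the same route as the paper: use the disjointness of $H_1$ and $H_2$ to write $U=\langle(a,b,0,0),(0,0,c,d)\rangle$, use $H_3$ to force $(c,d)\parallel(a,b)$, and use $H_4$ to reduce to $a^2+b^2=0$, which has no nonzero solutions over $\F_3$. Your phrasing via the proportionality determinant is a slight streamlining of the paper's explicit case split on $\lambda_1\neq 0$ versus $\lambda_2\neq 0$, and the Pl\"ucker-coordinate sketch is a pleasant optional cross-check, but the core argument is the same.
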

Thus, Theorem~\ref{thm:gk16-optimal} cannot be extended to $\F_3$

\begin{proof}
We seek to show that there is no 2-dimensional $U \subseteq \F_3^4$ which intersects every $H_i$. Assume for sake of contradiction that there exists a $U$ for which $H_i \cap W \neq 0$. Since $H_1$ and $H_2$ are disjoint and $U$ is $2$-dimensional, there exist $\lambda_1, \hdots, \lambda_4 \in \F_3$ with $(\lambda_1, \lambda_2), (\lambda_3, \lambda_4) \neq (0,0)$ such that
\[
    U = \langle \lambda_1 e_1 + \lambda_2 e_2, \lambda_3 e_3 + \lambda_4 e_4\rangle.
\]
Now, in order for $U \cap H_3 \neq 0$, we must have that $(\lambda_1, \lambda_2)$ is a scalar multiple of $(\lambda_3, \lambda_4)$. Since $U$ is invariant to the scaling of its basis, we may assume without loss of generality that $\lambda_3 = \lambda_1$ and $\lambda_4 = \lambda_2$. Now, since $U \cap H_4 \neq 0$, there exists $a, b, c, d \in \F_3$ with $(a,b) \neq (0,0)$ and $(c,d) \neq (0,0)$ such that
\[
    a(\lambda_1 e_1 + \lambda_2 e_2) + b(\lambda_1 e_3 + \lambda_2 e_4) = c(e_1+e_4) + d(e_2 - e_3).
\]
As such, we get the following system of equations
\begin{align*}
    a\lambda_1 &= c\\
    a\lambda_2 &= d\\
    b\lambda_1 &= -d\\
    b\lambda_2 &= c.
\end{align*}
Thus, $a\lambda_1 = b\lambda_2$ and $a\lambda_2 = -b\lambda_1$. Since $(\lambda_1, \lambda_2) \neq (0,0)$. We break into cases.

First, if $\lambda_1 \neq 0$, we have that $a = b\lambda_2/\lambda_1$, so we get that $b\lambda_2^2 = -b\lambda_1^2$. If $b = 0$, then $a = 0$ which contradicts that $(a,b) \neq (0,0)$. Thus, $b \neq 0$ so $\lambda_1^2 + \lambda_2^2 = 0$, but the only solution to this equation over $\F_3$ is $(\lambda_1, \lambda_2) = (0,0)$, a contradiction.

Likewise, if $\lambda_2 \neq 0$, we get $a = -b\lambda_1/\lambda_2$ so $-b\lambda_1^2 = b\lambda_2^2$. We then proceed by identical logic to the previous case. Thus, $(H_1, \hdots, H_4)$ is a $(2,3)$ weak subspace design.
\end{proof}

\begin{remark}
If we look at a quadratic extension of $\F_3$, then the equation $\lambda_1^2 + \lambda_2^2=0$ has nonzero solutions, from which we can then construct a subspace $U$ intersecting all four spaces.
\end{remark}

\section*{Acknowledgments} We thank Sivakanth Gopi for numerous discussions on the relationship between \cite{lms25} and MR tensor codes, which greatly inspired \Cref{sec:matroid}. We thank Visu Makam for valuable discussions on MR tensor codes. We thank Matt Larson for helpful clarifications on the relationship between \Cref{thm:MR-algo} and the matroid/rigidity literature. We thank Mahdi Cheraghchi, Venkatesan Guruswami and Madhur Tulsiani for many helpful discussions and much encouragement.

Joshua Brakensiek  was partially supported by (Venkatesan Guruswami's) Simons Investigator award and National Science Foundation grants No. CCF-2211972 and No. DMS-2503280. Yeyuan Chen was partially supported by the National Science Foundation grant No. CCF-2236931. Zihan Zhang was partially supported by the National Science Foundation grant No. CCF-2440926.

\bibliography{main}

\end{document}

%%% Local Variables:
%%% mode: LaTeX
%%% TeX-master: t
%%% End: